\newcommand\R{\mathbb R}
\newcommand{\adeg}{\widetilde{\mathrm{deg}}}
\newcommand{\eps}{\varepsilon}
\renewcommand{\H}{\mathcal{H}}
\newcommand{\bra}[1]{\{#1\}}
\newcommand{\pmone}{\bra{-1, 1}}
\renewcommand{\deg}{\mathrm{deg}}
\newcommand{\supp}{\mathrm{supp}}
\newcommand{\mathify}[1]{\ifmmode{#1}\else\mbox{$#1$}\fi}
\newcommand{\abs}[1]{\mathify{\left| #1 \right|}}
\newcommand{\AND}{\mathsf{AND}}
\newcommand{\OR}{\mathsf{OR}}
\newcommand{\EQ}{\mathsf{EQ}}
\def\B{\{0,1\}}
\providecommand\abs[1]{\lvert#1\rvert}
\providecommand\bigabs[1]{\bigl\lvert#1\bigr\rvert}
\providecommand\ip[1]{\langle#1\rangle}
\newtheorem{theorem}{Theorem}[section]
\newtheorem{corollary}[theorem]{Corollary}
\newtheorem{lemma}[theorem]{Lemma}
\newtheorem{claim}[theorem]{Claim}
\newtheorem{fact}[theorem]{Fact}
\newtheorem{remark}[theorem]{Remark}
\newtheorem*{theorem*}{Theorem}
\newtheorem*{corollary*}{Corollary}
\DeclareMathOperator\pr{\mathrm{Pr}}
\DeclareMathOperator\E{\mathbb{E}}
\def\eps{\varepsilon}
\newcommand{\poly}{\text{poly}}
\title{Approximate degree, secret sharing, and concentration phenomena}
\date{}
\begin{document}
\author[1]{Andrej Bogdanov}
\author[2]{Nikhil S. Mande}
\author[2]{Justin Thaler}
\author[1]{Christopher Williamson}
\affil[ ]{\textit {\{andrejb, chris\}@cse.cuhk.edu.hk}}
\affil[ ]{\textit {\{nikhil.mande, justin.thaler\}@georgetown.edu}}

\affil[1]{Chinese University of Hong Kong}
\affil[2]{Georgetown University}

\maketitle

\begin{abstract}
The $\eps$-approximate degree $\widetilde{\text{deg}}_\eps(f)$ of a Boolean function $f$ is the least degree of a real-valued polynomial that approximates $f$ pointwise to within $\eps$.  A sound and complete certificate for approximate degree being at least $k$ is a pair of probability distributions, also known as a \emph{dual polynomial}, that are perfectly $k$-wise indistinguishable, but are distinguishable by $f$ with advantage $1 - \eps$.  Our contributions are:

\begin{itemize}
\item We give a simple, explicit new construction of a dual polynomial for the $\AND$ function on $n$ bits, certifying that its $\eps$-approximate degree is $\Omega\left(\sqrt{n \log 1/\eps}\right)$.  This construction is the first to extend to the notion of weighted degree, and yields the first explicit certificate that the $1/3$-approximate degree of any (possibly unbalanced) read-once DNF is $\Omega(\sqrt{n})$. It draws a novel connection between the approximate degree of $\AND$ and
anti-concentration of the Binomial distribution.

\medskip

\item We show that any pair of {\em symmetric} distributions on $n$-bit strings that are perfectly $k$-wise indistinguishable are also statistically $K$-wise indistinguishable with at most $K^{3/2} \cdot \exp\left(-\Omega\left(k^2/K\right)\right)$ error for all $k < K \leq n/64$.
This bound is essentially tight, and implies that any symmetric function $f$ is a reconstruction function with constant advantage for a ramp secret sharing scheme that is secure against size-$K$ coalitions with statistical error $K^{3/2} \cdot \exp\left(-\Omega\left(\widetilde{\text{deg}}_{1/3}(f)^2/K\right)\right)$ for all values of $K$ up to $n/64$ simultaneously.
Previous secret sharing schemes required that $K$ be determined in advance, and only worked for $f=\AND$.  Our analysis draws another new connection between approximate degree and concentration phenomena.

\medskip
As a corollary of this result, we show that for any $d \leq n/64$, any degree $d$ polynomial approximating a symmetric function $f$ to error $1/3$
must have coefficients of $\ell_1$-norm at least $K^{-3/2} \cdot \exp\left({\Omega\left(\widetilde{\text{deg}}_{1/3}\left(f\right)^2/d\right)}\right)$.
We also show this bound is essentially tight for any $d > \widetilde{\text{deg}}_{1/3}(f)$.
These upper and lower bounds were also previously only known in the case $f=\AND$.
\end{itemize}

\thispagestyle{empty}
\addtocounter{page}{-1}
\newpage

\end{abstract}
\clearpage
\hypersetup{pageanchor=true}
\setcounter{page}{1}

\section{Introduction}

The $\varepsilon$-approximate degree of a function $f\colon \{-1, 1\}^n \to \{0, 1\}$, denoted $\adeg_{\varepsilon}(f)$, is the least degree of a multivariate real-valued polynomial $p$ such that $|p(x)-f(x)| \leq \varepsilon$ for all inputs $x \in \{-1,1\}^n$.\footnote{In this work, for convenience we also
consider functions mapping $\{0, 1\}^n$ to $\{0, 1\}$.}  Such a $p$ is said to be an approximating polynomial for $f$.  This is a central object 
of study in computational complexity, owing to its polynomial equivalence to many other complexity measures including sensitivity, exact degree, deterministic and randomized query complexity~\cite{NiS94}, and quantum query complexity~\cite{BCdWZ99}.

By linear programming duality, $f$ has $\eps$-approximate degree more than $k$ if and only if there exist a pair of probability distributions $\mu$ and $\nu$ over the domain of $f$ such that $\mu$ and $\nu$ are perfectly $k$-wise indistinguishable (i.e., all $k$-wise projections of $\mu$ and $\nu$ are identical), but are $(1 - \eps)$-distinguishable by $f$, namely $\E_{X \sim \mu}[f(X)] - \E_{Y \sim \nu}[f(Y)] \geq 1 - \eps$.  Said equivalently, a sound and complete certificate for $\varepsilon$-approximate degree being more than $k$ is a {\em dual polynomial} $q = (\mu - \nu)/2$ that contains no monomials of degree $k$ or less, and such that $\sum_x \abs{q(x)} = 1$ and $\sum_x q(x) f(x) \geq \eps$. 

Dual polynomials have immediate applications to cryptographic secret sharing:
a dual polynomial $q = (\mu - \nu)/2$ for $f$ is a description of a 
cryptographic scheme for sharing a 1-bit secret amongst $n$ parties, where the secret can be reconstructed by applying $f$ to the shares, and 
the scheme is secure against coalitions of size $k$ (see \cite{BIVW} for details). 

\medskip
\noindent \textbf{Motivation for explicit constructions of dual polynomials.} 
Recent years have seen significant progress in proving new approximate degree lower bounds by explicitly constructing dual polynomials exhibiting
the lower bound \cite{BunT13, SheANDOR, BT13, sherstov2018breaking, BunT17, BKT18, BT18, SheWu18}. These new lower bounds have in turn resolved significant open questions
in quantum query complexity and communication complexity. 
At the technical core of these results are techniques for constructing a dual polynomial for composed functions $f \circ g := f(g, \dots, g)$, given dual polynomials
for $f$ and $g$ individually.

Often, an explicitly constructed dual polynomial showing that $\adeg_{\varepsilon}(g)\geq d$  
exhibits additional metric properties, beyond what is required simply to
witness $\adeg_{\varepsilon}(g)\geq d$. 
Much of the major recent progress in proving approximate degree lower bounds has exploited 
these additional metric properties \cite{BunT17, BKT18, BT18, SheWu18}. Accordingly, 
even if cases where an approximate degree lower bound for a function $g$ is known, it can often
be useful to construct an explicit dual polynomial witnessing the lower bound.
Hence, we are optimistic that the new constructions of
dual polynomials given in this work will find future applications.

Explicit constructions of dual polynomials
are also necessary to implement the corresponding secret-sharing scheme, and to analyze the complexity of
the algorithm that samples the shares of the secret.

\medskip
\noindent \textbf{Our results in a nutshell.} Our results fall into two categories. In the first category, we reprove several known approximate degree lower bounds
by giving the first explicit constructions of dual polynomials witnessing the lower bounds. 
Specifically, our dual polynomial 
certifies that the $\eps$-approximate degree of the $n$-bit $\AND$ function is $\Theta(\sqrt{n \log 1/\eps})$.  This construction is the first to extend to the notion of weighted degree, and yields the first explicit certificate that the $1/3$-approximate degree of any (possibly unbalanced) read-once DNF is $\Omega(\sqrt{n})$.
Interestingly, our dual polynomial construction draws a novel and clean connection between the approximate degree of $\AND$ and
anti-concentration of the Binomial distribution. 

In the second category, we prove new and tight results about
the size of the coefficients of polynomials that approximate symmetric functions.
Specifically, we show that for any $d \leq n/64$, any degree $d$ polynomial approximating
$f$ to error $1/3$ must have coefficients of weight ($\ell_1$-norm) at least 
$d^{3/2} \cdot \exp\left({\Omega\left(\widetilde{\text{deg}}_{1/3}\left(f\right)^2/d\right)}\right)$.
We show this bound is tight (up to logarithmic factors in the exponent) for any $d > \widetilde{\text{deg}}_{1/3}(f)$. These bounds were previously only known in the case $f=\AND$ \cite{STT, BW17}.
Our analysis actually establishes a considerably more general result, and as a consequence we obtain new cryptographic secret sharing
schemes with symmetric reconstruction procedures (see Section \ref{s:details} for details).

\vspace{-3mm}
\subsection{A New Dual Polynomial for $\AND$}

To describe our dual polynomial for $\AND$, it will be convenient 
to consider the $\AND$ function to have domain $\{-1, 1\}^n$ and range $\{0, 1\}$,
with $\AND(x)=1$ if and only if $x=1^n$. 
In their seminal work, Nisan and Szegedy~\cite{NiS94} proved that the $1/3$-approximate degree of the $\AND$ function on $n$ inputs is $\Theta(\sqrt{n})$.
More generally, it is now well-known that the $\eps$-approximate degree of $\AND$ is
$\Theta\left(\sqrt{n \log(1/\eps)}\right)$ \cite{KahnLS96, BCdWZ99}.
These works do not construct explicit dual polynomials witnessing the lower bounds; this was achieved later in works of \v{S}palek~\cite{Spalek08} and Bun and Thaler~\cite{BunT13}.

Our first contribution is the construction of a new dual polynomial $\phi$ for $\AND$, which is simple enough to describe in a single equation:
\begin{equation}
\label{eq:dualq}
 \phi(x) = \frac{(-1)^n}{Z} \biggl(\prod_{i \in [n]} x_i\biggr) \biggl(\E_S\prod_{i \in S} x_i \biggr)^2. \end{equation}
Here, $S$ is a random subset of $\{1, \dots, n\}$ of size at most $\tfrac12 (n - d)$ (where $d$ determines
the degree of the polynomials against which the exhibited lower bound holds), and $Z$ is an (explicit) normalization constant.  

In the language of secret sharing, to share a secret $s \in \{-1, 1\}$, the dealer samples shares $x \in \{-1, 1\}^n$ with probability proportional to $(\E_S\prod_{i \in S} x_i)^2$, conditioned on the parity of the shares $\prod x_i$ being equal to $s$.

In Corollary~\ref{cor:and} we show that $\phi$ certifies that every degree-$d$ polynomial must differ from the $\AND$ function by $2^{-n} \sum_{k = 0}^{(n - d)/2} \binom{n}{k}$ at some input.  In other words, the approximation error of a degree-$d$ polynomial is lower bounded by the probability that a sum of unbiased independent bits deviates from its mean by $d/2$.

Our function $\phi$ given in \eqref{eq:dualq}, unlike previous dual polynomials~\cite{KahnLS96, Spalek08, BT13, sherstov15}, also certifies that the \emph{weighted} $1/3$-approximate degree of $\AND$  with weights $w \in \R_{\geq 0}^n$ is $\Omega(\|w\|_2)$ (see Corollary~\ref{cor:weighted}).\footnote{
For a polynomial $p(x_1, \dots, x_n)$, a weight vector $w \in \R_{\geq 0}^n$ assigns weight $w_i$ to variable $x_i$. The weighted degree of $p$ is the maximum weight over all monomials appearing in $p$,
where the weight of a monomial is the sum of the weights of the variables appearing within it. The weighted $\eps$-approximate degree of $f$,
denoted $\adeg_{w, \eps}(f)$, is the least weighted degree of any polynomial that approximates $f$ pointwise to error $\eps$.}
This lower bound is tight for all $w$, matching an upper bound of Ambainis \cite{Amb10}.   The only difference in our dual polynomial construction for the weighted case is in the distribution over sets $S$, and the lower bound in the weighted case is derived from anti-concentration of \emph{weighted} sums of Bernoulli random variables.  

Both statements are corollaries of the following theorem.
\begin{theorem}
\label{thm:dualpoly}
Define $\AND \colon \{-1, 1\}^n \to \{0, 1\}$ as $\AND(x)=1$ if and only if $x=1^n$. 
The function $\phi$ defined in Equation \eqref{eq:dualq} is a dual witness for $\adeg_{w, \eps}(\AND) \geq d$ for 
$\eps = \pr_{X \sim \{-1, 1\}^n}[\ip{w, X} \geq d]$. 
\end{theorem}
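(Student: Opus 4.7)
The plan is to verify the three defining properties of a dual witness: (a) normalization $\|\phi\|_1 = 1$; (b) orthogonality $\sum_x \phi(x) \prod_{i \in T} x_i = 0$ for every $T \subseteq [n]$ of weighted degree $w(T) := \sum_{i \in T} w_i$ strictly less than $d$; and (c) correlation $\sum_x \phi(x) \AND(x) \geq \eps$. The crucial ingredient, implicit in Equation~\eqref{eq:dualq} for the weighted case, is to take $S$ uniform over $\mathcal{S} := \{S \subseteq [n] : w(S) \leq (W-d)/2\}$, where $W = \sum_{i \in [n]} w_i$.

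I would first expand $\phi$ in the character basis. Writing $\chi_U(x) := \prod_{i \in U} x_i$, the identity $\chi_S \chi_{S'} = \chi_{S \triangle S'}$ gives $(\E_S \chi_S(x))^2 = \E_{S, S'} \chi_{S \triangle S'}(x)$ for independent $S, S'$, and multiplying by $\chi_{[n]}(x)$ yields
\begin{equation*}
\phi(x) = \frac{(-1)^n}{Z}\, \E_{S, S'}\, \chi_{[n] \setminus (S \triangle S')}(x).
\end{equation*}
By orthogonality $\sum_x \chi_U(x) \chi_T(x) = 2^n \cdot [U = T]$, condition (b) reduces to showing $\pr_{S, S'}[\,[n] \setminus (S \triangle S') = T\,] = 0$ whenever $w(T) < d$. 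But for $S, S' \in \mathcal{S}$ we have $w(S \triangle S') \leq w(S) + w(S') \leq W - d$, so $w([n] \setminus (S \triangle S')) \geq d > w(T)$, forcing $[n] \setminus (S \triangle S') \neq T$.

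For (a), nonnegativity of $(\E_S \chi_S(x))^2$ gives
\begin{equation*}
\|\phi\|_1 = \frac{1}{|Z|}\sum_x (\E_S \chi_S(x))^2 = \frac{2^n}{|Z|}\, \E_{S, S'}[S = S'] = \frac{2^n}{|Z|\cdot|\mathcal{S}|},
\end{equation*}
so setting $Z = (-1)^n \cdot 2^n / |\mathcal{S}|$ achieves $\|\phi\|_1 = 1$ and simultaneously makes $\phi(1^n) = (-1)^n/Z = |\mathcal{S}|/2^n$ positive. Because $\AND$ is supported only on $1^n$, this gives $\sum_x \phi(x) \AND(x) = |\mathcal{S}|/2^n$.

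The last step identifies $|\mathcal{S}|/2^n$ with $\eps$. Under the bijection $X \leftrightarrow S(X) := \{i : X_i = -1\}$, uniform $X \in \{-1,1\}^n$ corresponds to uniform $S(X) \subseteq [n]$, and $\ip{w, X} = W - 2 w(S(X))$, so $\pr_X[\ip{w, X} \geq d] = \pr[w(S(X)) \leq (W - d)/2] = |\mathcal{S}|/2^n$, matching $\sum_x \phi(x) \AND(x)$. The main potential obstacle is purely the sign bookkeeping with $(-1)^n$ and $Z$, together with verifying that the support condition $w(S) \leq (W-d)/2$ translates exactly into the probability $\pr[\ip{w, X} \geq d]$; all remaining computations are routine Fourier manipulation on $\{-1,1\}^n$.
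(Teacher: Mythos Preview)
Your proof is correct and follows essentially the same approach as the paper: verify pure high weighted degree via the Fourier expansion of $\phi$, compute the normalization constant using orthogonality of the characters $\chi_{S\triangle S'}$, and identify $|\mathcal{S}|/2^n$ with $\pr_{X}[\ip{w,X}\geq d]$ through the bijection between subsets and sign vectors. The only cosmetic difference is that the paper argues pure high degree by saying that $(\E_S\chi_S)^2$ has weighted degree at most $W-d$ so multiplying by $\chi_{[n]}$ forces every monomial to have weighted degree at least $d$, whereas you phrase the same observation as $w([n]\setminus(S\triangle S'))\geq d$; your sign bookkeeping with $Z=(-1)^n 2^n/|\mathcal{S}|$ is in fact a bit more careful than the paper's.
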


By combining, in a black-box manner, 
the dual polynomial for the weighted-approximate
degree of $\AND$ with prior work (e.g., \cite[Proof of Theorem 7]{KV}), one obtains, for any read-once DNF $f$, an explicit
dual polynomial for the fact that $\adeg_{1/3}(f) \geq \Omega(n^{1/2})$. 
Very recent work of Ben-David et al.~\cite{BBGK18} established this result for the first time, shaving
logarithmic factors off of prior work \cite{BT13, KV}. In fact,  Ben-David et al. \cite{BBGK18} prove more generally that any depth-$d$ read-once $\AND$-$\OR$ formula has approximate degree~$2^{-O(d)} \sqrt{n}$.  Their method, however, does not appear to yield an explicit dual polynomial, even in the case $d=2$.

\medskip \noindent \textbf{Discussion.}
It has been well known that the $\eps$-approximate degree of the $\AND$ function on $n$ variables is $\Theta\left(\sqrt{n \log (1/\eps)}\right)$ \cite{NiS94, BCdWZ99}, a fact which has many applications in theoretical computer science. This is superficially reminiscent of Chernoff bounds, which state that the middle $\Theta\left(\sqrt{n \log(1/\varepsilon)}\right)$ layers of the Hamming cube contain a $1-\varepsilon$ fraction of all inputs (i.e., ``most'' $n$-bit strings have Hamming weight close to $n/2$). However, these two phenomena have not previously been connected, and it is not a priori clear why approximate degree should be related to concentration of measure. An approximating polynomial $p$ for $f$ must approximate $f$ at \emph{all} inputs in $\{-1, 1\}^n$. Why should it matter that \emph{most} (but very far from all) inputs have Hamming weight close to $n/2$? 

The new dual witness for $\AND$ constructed in Equation \eqref{eq:dualq} above provides a surprising answer to this question. The connection between (anti-)concentration and approximate degree of $\AND$ arises not because of the number of \emph{inputs} to $f$ that have Hamming weight close to $n/2$, but because of the number of \emph{parity functions} on $n$ bits that have \emph{degree} close to $n/2$. This connection appears to be rather deep, as evidenced by our construction's ability to yield a tight lower bound in the case of weighted approximate degree.

\vspace{-1.85mm}
\subsection{Indistinguishability for Symmetric Distributions}
\label{s:details}
In this section, for convenience we consider functions mapping $\{0, 1\}^n$ to $\{0, 1\}$. Two distributions $\mu$ and $\nu$ over $\{0,1\}^n$ are {\em (statistically) $(k, \delta)$-wise indistinguishable} if for all subsets $S \subseteq \{1, \dots, n\}$ of size $k$, the induced marginal distributions $\mu|_S$ and $\nu|_S$ are within statistical distance $\delta$.   When $\delta = 0$, we say they are {\em (perfectly) $k$-wise indistinguishable}.

For general pairs of distributions, perfect $k$-wise indistinguishability does not imply any sort of security against distinguishers of size $k + 1$.  Any binary linear error-correcting code of distance $k + 1$ and block length $n$ induces a pair of distributions (the uniform distribution over codewords and one of its affine shifts) that are perfectly $k$-wise indistinguishable, yet perfectly $(k + 1)$-wise distinguishable.  

In contrast, we prove that perfect $k$-wise indistinguishability for {\em symmetric} distributions implies strong statistical security against larger adversaries:  

\begin{theorem}
\label{thm:mainupper}
If $\mu$ and $\nu$ are symmetric over $\{0, 1\}^n$ and perfectly $k$-wise indistinguishable, then they are statistically $(K, O(K^{3/2}) \cdot e^{-k^2/1156K})$-wise indistinguishable for all $1\leq k < K \leq n/64$.
\end{theorem}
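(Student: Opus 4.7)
By symmetry, $\mu$ and $\nu$ are determined by their Hamming-weight distributions $p$ and $q$ on $\{0, 1, \ldots, n\}$. The first step is to note that perfect $k$-wise indistinguishability is equivalent to the moment-matching condition $\sum_i (p(i)-q(i))\,r(i) = 0$ for every polynomial $r$ of degree at most $k$, because the hypergeometric probabilities $H_{i,j}^{(k)} := \binom{i}{j}\binom{n-i}{k-j}/\binom{n}{k}$ for $j = 0, \ldots, k$ span the polynomials of degree $\leq k$ in $i$. The marginals $\mu|_T$ and $\nu|_T$ on any size-$K$ set $T$ are again symmetric, with Hamming-weight distributions $P(j) = \sum_i p(i) H_{i,j}^{(K)}$ and $Q(j) = \sum_i q(i) H_{i,j}^{(K)}$, so the statistical distance equals $\tfrac12\|P - Q\|_1$.

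The second step is to bound $\|P - Q\|_1 = \sum_j |P(j) - Q(j)|$ termwise. For any $j$ and any polynomial $r_j$ of degree at most $k$ in $i$, the moment-matching condition gives
\[ |P(j) - Q(j)| = \Bigl|\sum_i (p-q)(i)\,\bigl(H_{i,j}^{(K)} - r_j(i)\bigr)\Bigr| \;\leq\; 2 \max_{i \in \{0, \ldots, n\}} \bigl|H_{i,j}^{(K)} - r_j(i)\bigr|. \]
The main estimate to establish is then
\[ \inf_{\deg r_j \leq k}\ \max_{i \in \{0, \ldots, n\}} \bigl|H_{i,j}^{(K)} - r_j(i)\bigr| \;\leq\; O(\sqrt{K})\cdot \exp\bigl(-\Omega(k^2/K)\bigr) \]
for every $j \in \{0, \ldots, K\}$; summing the resulting $K+1$ bounds then produces the target $\|P - Q\|_1 \leq O(K^{3/2}) \exp(-\Omega(k^2/K))$.

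The final, and main, step is to establish this univariate approximation bound. Viewed as a polynomial of degree $K$ in $i$, the hypergeometric PMF $i \mapsto H_{i,j}^{(K)}$ resembles a Gaussian bump of height $\Theta(1/\sqrt{K})$ and width $\Theta(n/\sqrt{K})$ centered at $i^* = jn/K$ (the boundary cases $j \in \{0,K\}$ are monotone but of the same effective bandwidth). The hypothesis $K \leq n/64$ is precisely what makes the relevant sampling-without-replacement concentration quantitatively as sharp as sampling-with-replacement concentration. The candidate approximator $r_j$ is obtained by expanding $H_{\cdot, j}^{(K)}$ in an appropriate discrete orthogonal polynomial basis on $\{0, \ldots, n\}$---Krawtchouk polynomials with parameter $1/2$, or Hahn polynomials---and truncating at degree $k$. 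The hard part is controlling the $\ell_\infty$-norm of the high-degree tail, which is accomplished by combining Chernoff-type tail bounds for the hypergeometric distribution with standard magnitude estimates for the chosen orthogonal polynomials at integer arguments. The resulting $\exp(-\Omega(k^2/K))$ factor has the same origin as the $\Theta(\sqrt{K \log(1/\varepsilon)})$ approximate-degree bound for $\AND_K$ witnessed by Theorem~\ref{thm:dualpoly}, reflecting the Chernoff anti-concentration picture that pervades this paper.
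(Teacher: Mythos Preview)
Your reduction to a univariate approximation problem is exactly the paper's framework. The paper also decomposes the distinguisher into Hamming-weight indicators $Q_w$, applies Minsky--Papert symmetrization to obtain the degree-$K$ polynomial $p_w(t) = H^{(K)}_{(1-t)n/2,\,w}$ (your $H^{(K)}_{i,j}$ after the affine change $t = 1 - 2i/n$), and then reduces to showing that each $p_w$ admits a degree-$k$ uniform approximation with error $O(\sqrt{K})\exp(-\Omega(k^2/K))$. Your first two steps, including the moment-matching observation, match the paper's proof of Theorem~\ref{thm:mainupper} from Lemma~\ref{lemma:main3} essentially line for line.

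Where the approaches diverge is the proof of the univariate estimate itself. The paper expands $p_w$ in \emph{Chebyshev} polynomials on $[-1,1]$, relates the Chebyshev coefficients $c_d$ to the ordinary coefficients of an auxiliary polynomial $g(s) = C_w\prod_{z\in Z_w}(s^2-2sz+1)/2$, and bounds $\sum_d (1+\eps)^{2d}c_d^2$ via Parseval on the complex unit circle, i.e., by estimating $\max_\theta |g((1+\eps)e^{i\theta})|^2$. The choice of Chebyshev is deliberate: since $|T_d|\le 1$ on $[-1,1]$, the tail $\sum_{|d|>k}|c_d|$ directly controls the uniform error, and one only needs coefficient decay. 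Your proposal instead truncates in a Krawtchouk or Hahn basis on $\{0,\dots,n\}$. Those polynomials are \emph{not} uniformly bounded---Krawtchouk polynomials reach magnitude $\binom{n}{d}$ at the endpoints---so converting coefficient decay into an $\ell_\infty$ bound requires controlling both the coefficients and the pointwise sizes simultaneously. Your phrase ``combining Chernoff-type tail bounds \ldots\ with standard magnitude estimates'' is where the real work would have to happen, and it is not clear that any off-the-shelf Krawtchouk/Hahn estimate delivers the clean $\exp(-k^2/K)$ rate; the paper's complex-analytic argument (Claims~\ref{claim:normg} and~\ref{claim:fupper}) is several pages of careful case analysis even in the Chebyshev setting, and the connection to Krawtchouk polynomials it notes in passing is for the limiting case $n\to\infty$ only. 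So your plan is structurally correct but underspecified at the one genuinely hard step; the paper's Chebyshev route is what makes that step go through.
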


Theorem~\ref{thm:mainupper} has the following direct consequence for secret sharing schemes over bits with symmetric reconstruction.  We say $(\mu, \nu)$ are $\alpha$-reconstructible by $f$ if $\E_{X \sim \mu}[f(X)] - \E_{Y \sim \nu}[f(Y)] \geq \alpha$.

\begin{corollary}
\label{cor:symmetricimperfect}
Let $f$ be a symmetric Boolean function.
There exists a pair of distributions $\mu$ and $\nu$ that are $\left(K, K^{3/2}\cdot e^{-\Omega (\widetilde{\deg}_{1/3}(f)^2/K)} \right)$-indistinguishable for all $K \leq n/64$, but are $\Omega(1)$-reconstructible by $f$.
\end{corollary}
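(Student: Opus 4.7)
The plan is to extract a candidate pair of distributions from LP duality for $\widetilde{\deg}_{1/3}(f)$, symmetrize them so that Theorem~\ref{thm:mainupper} applies, and then invoke that theorem as a black box.

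Set $k = \widetilde{\deg}_{1/3}(f)$. By LP duality (as reviewed in the introduction), there exist distributions $\mu_0, \nu_0$ on $\{0,1\}^n$ that are perfectly $(k-1)$-wise indistinguishable and satisfy $\E_{X \sim \mu_0}[f(X)] - \E_{Y \sim \nu_0}[f(Y)] \geq 2/3$. These distributions need not be symmetric, so I cannot feed them directly into Theorem~\ref{thm:mainupper}.

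Next I would symmetrize. Define $\mu(x) = \E_{\pi}[\mu_0(\pi \cdot x)]$ and $\nu(y) = \E_{\pi}[\nu_0(\pi \cdot y)]$, where $\pi$ is a uniformly random permutation of $[n]$ acting by coordinate permutation. By construction, $\mu$ and $\nu$ are symmetric. Two observations make the symmetrization work. First, because $f$ is symmetric, $\E_{X \sim \mu}[f(X)] = \E_\pi \E_{X \sim \mu_0}[f(\pi \cdot X)] = \E_{X \sim \mu_0}[f(X)]$, and likewise for $\nu$, so $(\mu, \nu)$ remain $(2/3)$-reconstructible by $f$. Second, for any $S \subseteq [n]$ with $|S| \leq k-1$, the marginal $\mu|_S$ equals the average over $\pi$ of $\mu_0|_{\pi^{-1}(S)}$ (identified with a distribution on $\{0,1\}^S$ via the bijection induced by $\pi^{-1}$); since $\mu_0$ and $\nu_0$ agree on every marginal of size at most $k-1$, the same averaging yields $\mu|_S = \nu|_S$, so $(\mu, \nu)$ are perfectly $(k-1)$-wise indistinguishable.

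Finally I would invoke Theorem~\ref{thm:mainupper} with $k-1$ in place of its $k$ to conclude that $(\mu, \nu)$ are statistically $(K, O(K^{3/2}) \cdot e^{-(k-1)^2/1156K})$-wise indistinguishable for every $k-1 < K \leq n/64$. For $K \leq k-1$ the distributions are already perfectly $K$-wise indistinguishable, which is strictly stronger, so the same conclusion holds across the entire range $K \leq n/64$. Absorbing the $-1$ and the leading $O(1)$ factor into the constant hidden in the exponent puts the bound in the stated form $K^{3/2} \cdot e^{-\Omega(\widetilde{\deg}_{1/3}(f)^2/K)}$. The genuine technical content is Theorem~\ref{thm:mainupper}; the corollary is essentially a packaging step. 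The only points to verify are that symmetrization preserves $f$-reconstructibility (uses symmetry of $f$) and perfect $(k-1)$-wise indistinguishability (uses the permutation-invariance of the averaging), and I do not expect either to present a serious obstacle.
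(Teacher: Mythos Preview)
Your proposal is correct and takes essentially the same approach as the paper. The paper's one-line proof simply invokes Theorem~\ref{thm:mainupper} together with ``the fact that any symmetric function has an optimal dual polynomial that is itself symmetric''; your symmetrization-by-averaging argument is exactly the standard proof of that fact, so you have merely unpacked what the paper treats as a black box.
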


Corollary~\ref{cor:symmetricimperfect} is an immediate consequence of our Theorem~\ref{thm:mainupper}, and the
fact that any symmetric function has an optimal dual polynomial that is itself symmetric. 
In the special case $f = \AND$ (or equivalently $f = \OR$), Corollary~\ref{cor:symmetricimperfect} implies the existence of a \emph{visual secret sharing scheme} (see, for example~\cite{NaorS94}) that is $\left(K, K^{3/2}\cdot e^{-\Omega (n/K)} \right)$-statistically secure against all coalitions of size $K$, simultaneously for all $K$ up to size $n/64$. 
This property, where security guarantees are in place for many coalition sizes at the same time, is in contrast to an earlier result of Bogdanov and Williamson~\cite{BW17} where they proved that for any fixed coalition size $K$, there is a visual secret sharing scheme that is $(K, e^{-\Omega (n/K)})$-statistically secure. In their construction, the distribution of shares $\mu$ and $\nu$ depend on the value of $K$.

We remark that the bound of Corollary~\ref{cor:symmetricimperfect} cannot hold in general for $K = n$, since there exists distributions that are perfectly $\Omega(n)$-wise indistinguishable but are reconstructible by the majority function on all $n$ inputs.  We do not however know if a bound of the form $K \leq (1 - \Omega(1))n$ is tight in this context.

\paragraph*{Tight weight-degree tradeoffs for polynomials approximating symmetric functions.}
Let  $f \colon \{0, 1\}^n \to \{0, 1\}$ be any function. For any integer $d \geq 0$, denote by $W_\eps(f, d)$ the minimum \emph{weight} of any degree-$d$ polynomial that approximates $f$ pointwise to error $\eps$. By the weight of a polynomial, we mean the $\ell_1$-norm of its coefficients over the parity (Fourier) basis\footnote{In fact, our main weight lower bound (Corollary \ref{cor: wtdeglb}) holds over any set of functions (not just parities) that each depend on at most $d$ variables.}. 
In Section \ref{s:newdetails}, we observe that Corollary~\ref{cor:symmetricimperfect} implies weight-degree trade-off lower bounds for symmetric functions.

\begin{corollary}\label{cor: wtdeglb}
For any symmetric function $f \colon \{0, 1\}^n \to \{0, 1\}$, any constant $\eps \in (0, 1/2)$, and any integer $K$ such that $n/64 \geq K \geq \adeg_{\eps}(f)$, we have $W_\eps(f, K) \geq K^{-3/2} \cdot 2^{\Omega\left(\adeg_{1/3}(f)^2/K\right)}$.
\end{corollary}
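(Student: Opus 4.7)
The plan is to exhibit a pair of distributions $\mu, \nu$ against which the quantity $\E_\mu[p] - \E_\nu[p]$ is forced to be both small (because the weight of $p$ is small and $\mu, \nu$ are indistinguishable on $K$-sized projections) and large (because $p$ approximates $f$, which reconstructs $\mu, \nu$ well). Fix a minimum-weight degree-$K$ polynomial $p$ approximating $f$ to error $\eps$, and expand it in the Fourier basis as $p = \sum_S c_S \chi_S$, where the sum is over $S \subseteq [n]$ with $|S| \leq K$, the characters $\chi_S$ take values in $\pmone$, and $W := \sum_S |c_S| = W_\eps(f, K)$.

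Invoke Corollary~\ref{cor:symmetricimperfect} to obtain symmetric distributions $\mu, \nu$ on $\zone^n$ that are $(K, \delta)$-wise statistically indistinguishable, with $\delta = K^{3/2} \cdot e^{-\Omega(\adeg_{1/3}(f)^2/K)}$, and that are $\alpha$-reconstructible by $f$ for some constant $\alpha > 2\eps$. The reconstructibility threshold in Corollary~\ref{cor:symmetricimperfect} is written as $\Omega(1)$, but its proof (which instantiates a symmetric dual polynomial for $f$ inside Theorem~\ref{thm:mainupper}) actually yields reconstructibility at least $1 - 2\eps'$ whenever a dual witness at error $\eps'$ is used. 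Since $\adeg_{\eps'}(f) = \Theta(\adeg_{1/3}(f))$ for every constant $\eps' \in (0, 1/2)$, we can choose $\eps'$ as a sufficiently small constant (for example, $\eps' < 1/4 - \eps/2$) so that $\alpha > 2\eps$ while only changing the hidden constant inside the $\Omega(\cdot)$ in the exponent of $\delta$.

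With $\mu, \nu$ fixed, the two-sided estimate is routine. For the upper bound, each character $\chi_S$ depends on at most $K$ coordinates and is bounded by $1$ in absolute value, so $(K, \delta)$-indistinguishability gives $|\E_\mu[\chi_S] - \E_\nu[\chi_S]| \leq 2\delta$ and hence
$$\bigabs{\E_\mu[p] - \E_\nu[p]} = \Bigabs{\sum_S c_S \bigl(\E_\mu[\chi_S] - \E_\nu[\chi_S]\bigr)} \leq 2\delta \cdot W.$$
For the lower bound, the pointwise approximation $|p(x) - f(x)| \leq \eps$ together with reconstructibility yields
$$\bigabs{\E_\mu[p] - \E_\nu[p]} \geq \bigl(\E_\mu[f] - \E_\nu[f]\bigr) - 2\eps \geq \alpha - 2\eps.$$
Combining these bounds gives $W \geq (\alpha - 2\eps)/(2\delta) = \Omega(1/\delta)$, which after absorbing the constant $\alpha - 2\eps > 0$ into the $\Omega$ in the exponent yields $W_\eps(f, K) \geq K^{-3/2} \cdot 2^{\Omega(\adeg_{1/3}(f)^2/K)}$.

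The main obstacle in this argument is the first step: unpacking the $\Omega(1)$ reconstructibility in Corollary~\ref{cor:symmetricimperfect} well enough to guarantee $\alpha > 2\eps$ for every constant $\eps < 1/2$, while keeping the exponent of $\delta$ proportional to $\adeg_{1/3}(f)^2/K$. Once that is in place, the rest is a standard weight-degree duality calculation. A secondary point to verify is that the conditions $K \geq \adeg_\eps(f)$ (needed for $W_\eps(f, K)$ to be finite) and $K \leq n/64$ (inherited from Theorem~\ref{thm:mainupper}) line up so that Corollary~\ref{cor:symmetricimperfect} applies uniformly over the entire stated range of $K$.
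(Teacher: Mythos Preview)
Your proposal is correct and follows essentially the same approach as the paper's proof: both invoke Corollary~\ref{cor:symmetricimperfect} to obtain the pair $(\mu,\nu)$ (equivalently, the dual $\phi=(\mu-\nu)/2$), then sandwich $\E_\mu[p]-\E_\nu[p]$ between a lower bound coming from reconstructibility plus the pointwise approximation $|p-f|\le\eps$, and an upper bound coming from $(K,\delta)$-indistinguishability applied term-by-term to the Fourier expansion of $p$. The only cosmetic difference is that the paper phrases the calculation via inner products with $\phi$ and then disposes of the constant-$\eps$ issue by citing ``standard error reduction'' (i.e., $\adeg_{\eps'}(f)=\Theta(\adeg_{1/3}(f))$ for constants $\eps'$), whereas you unpack that step explicitly by choosing $\eps'$ small enough to force $\alpha>2\eps$.
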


The following theorem shows that the lower bound obtained in Corollary \ref{cor: wtdeglb} is tight (up to polylogarithmic factors in the exponent) for all symmetric functions.  

\begin{theorem}\label{thm: symub}
For any symmetric function  $f \colon \{0, 1\}^n \to \{0, 1\}$, any constant $\eps \in (0, 1/2)$ and $K > \adeg_\eps(f) \cdot \sqrt{\log n}$, $W_{\eps}(f, K) \leq 2^{\tilde{O}(\adeg_{1/3}(f)^2/K)}$.\footnote{Here and throughout, the $\tilde{O}$ notation hides polylogarithmic factors in $n$.}
\end{theorem}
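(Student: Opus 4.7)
My plan is to combine Paturi's structural theorem for symmetric Boolean functions with a low-weight Chebyshev-based construction for threshold functions near the Hamming boundary, then combine via a signed sum.

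\textbf{Paturi decomposition.} Let $d := \adeg_\eps(f)$. By Paturi's characterization of approximate degree for symmetric Boolean functions, $d = \Theta(\sqrt{n(n - \Gamma(f))})$ with $\Gamma(f) = \min\{|2k + 1 - n| : f_k \neq f_{k+1}\}$, and hence every layer $k$ at which $f$ changes value satisfies $\min(k, n - k) \leq M$ for $M = O(d^2/n)$. Writing $f = c + \sum_k \sigma_k T_k$ where $T_k(x) := [|x| \geq k]$, $\sigma_k \in \{-1, 0, +1\}$ records the jumps of $f$ (nonzero for at most $O(d^2/n)$ values of $k$, all with $\min(k, n - k) \leq M$), and $c \in \{0, 1\}$ is the value of $f$ on its middle plateau, reduces the problem to approximating each boundary threshold separately.

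\textbf{Per-threshold low-weight construction.} For each $k$ with $\sigma_k \neq 0$ let $\ell := \min(k, n - k) \leq M$. The aim is to construct a degree-$K$ polynomial $p_k$ approximating $T_k$ pointwise to error $\eps/n$ with weight $2^{\tilde{O}(n\ell/K)} = 2^{\tilde{O}(d^2/K)}$. The case $\ell \in \{0, 1\}$ is exactly the known $\AND$/$\OR$ weight-degree tradeoff \cite{STT, BW17}. For general $\ell \leq M$, set $p_k(x) := q_k(|x|)$ where $q_k$ is a degree-$K$ univariate polynomial obtained by multiplying a suitably shifted and scaled Chebyshev polynomial of degree $K - \ell$ (close to $0$ on the layers where $T_k = 0$ and close to $1$ on the layers where $T_k = 1$) by a degree-$\ell$ interpolation factor that annihilates the ``forbidden'' boundary layers at which the Chebyshev factor is far from $T_k$. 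The parity-basis weight of $p_k$ equals $\sum_j \binom{n}{j} |a_j|$ for the Krawtchouk coefficients $a_j$ of $q_k$, and standard Chebyshev-coefficient estimates give the claimed bound.

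\textbf{Combination and main obstacle.} Set $p := c + \sum_k \sigma_k p_k$. The degree is $K$, the total error is at most $O(d^2/n) \cdot \eps/n \leq \eps$, and by the triangle inequality for Fourier coefficients the total weight is at most $O(d^2/n) \cdot 2^{\tilde{O}(d^2/K)} = 2^{\tilde{O}(d^2/K)}$ (the polynomial prefactor absorbed into $\tilde{O}$). The hypothesis $K > \adeg_\eps(f) \cdot \sqrt{\log n}$ enters here precisely because approximating each boundary threshold pointwise to error $\eps/n$ requires degree $\Theta(d \sqrt{\log n})$. The main obstacle is the per-threshold construction for $\ell > 0$: extending the Krawtchouk-coefficient analysis of \cite{STT, BW17} from $\AND$ (the $\ell = 0$ case) to general Hamming-ball indicators, and showing that multiplying the Chebyshev factor by the degree-$\ell$ correction does not blow up the parity-basis weight beyond $2^{\tilde{O}(n\ell/K)}$.
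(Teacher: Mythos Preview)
Your decomposition is genuinely different from the paper's, and the obstacle you flag is real --- the paper simply sidesteps it.

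The paper does not decompose $f$ into threshold functions $T_k$. Instead, after invoking Paturi to conclude (WLOG) that $f$ vanishes on all inputs of Hamming weight in $[k_f+1, n-k_f-1]$ with $k_f = \Theta(\adeg_{1/3}(f)^2/n)$, it observes that $|\supp(f)| \le 2n^{k_f}$ and writes
\[
f(x) = \sum_{y \in \supp(f)} \EQ_y(x),
\]
a sum of at most $2n^{k_f}$ \emph{point indicators}. Each $\EQ_y$ is literally $\AND_n$ with some inputs negated, so the per-summand construction reduces to a weight-degree tradeoff for $\AND$ alone. This is handled by the elementary trick $\AND_n = \AND_\ell \circ \AND_{n/\ell}$: compute the inner $\AND_{n/\ell}$ exactly (constant weight) and approximate the outer $\AND_\ell$ to error $n^{-\Theta(k_f)}$ using a standard degree-$O(\sqrt{\ell k_f \log n})$ approximation, whose weight is bounded crudely via Parseval and Cauchy--Schwarz. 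Choosing $\ell$ so that the degree is $K$ gives weight $2^{\tilde O(nk_f/K)} = 2^{\tilde O(\adeg_{1/3}(f)^2/K)}$ per summand; the $n^{k_f}$ summands are absorbed into the $\tilde O$.

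What each route buys: your threshold decomposition has only $O(k_f)$ summands rather than $n^{O(k_f)}$, but the price is that each summand is a genuine threshold $T_k$, not a conjunction, and you then need a bespoke Chebyshev-times-interpolation construction whose parity-basis weight you have not controlled. The paper's route trades a much larger (but still tolerable) number of summands for the conceptual simplicity that every summand is $\AND$, so no new Krawtchouk/Chebyshev analysis is needed at all. If you want to rescue your plan with minimal new work, note that for $k \le M$ one has $1 - T_k = \sum_{|y| < k} \EQ_y$, a sum of at most $n^{k}$ point indicators; applying the paper's $\AND$ lemma to each of these collapses your argument into theirs. The direct Chebyshev route you sketch may well work, but ``standard Chebyshev-coefficient estimates'' is not enough: you would need to bound $\sum_j \binom{n}{j}|\hat p_{k,j}|$ for your explicit construction, and that computation is not in the literature for general boundary thresholds.
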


Theorem \ref{thm: symub} also implies that Corollary \ref{cor:symmetricimperfect} is tight (up to polylogarithmic factors in the exponent)
for all symmetric $f$ and for all $K \geq \adeg_{1/3}(f) \sqrt{\log n}$. This is because any improvement to Corollary \ref{cor:symmetricimperfect}  would yield an improvement to Corollary \ref{cor: wtdeglb},
contradicting Theorem \ref{thm: symub}. 

\medskip
\noindent \textbf{Essentially Optimal Ramp Visual Secret Sharing Schemes.}
The following result shows that in the case $f=\AND$, 
Corollary \ref{cor:symmetricimperfect}  is essentially tight for \emph{all} $K \geq 2$, and Theorem~\ref{thm:mainupper} is tight as a reduction from perfect to approximate indistinguishability for symmetric distributions. It does so by constructing essentially optimal ramp visual secret sharing schemes.\footnote{A visual secret sharing scheme is a scheme where the reconstruction function is the $\AND$ of some subset of the shares. A ramp scheme is one where there is not necessarily a sharp threshold between the perfect secrecy and reconstruction thresholds; in particular, we allow for $K>k+1$.}  

\begin{theorem}
\label{thm:mainlowerprob}
For all $2\leq k < K\leq n$ there exist symmetric $k$-wise indistinguishable distributions $\mu$ and $\nu$ over $n$-bit strings that are $\sqrt{2^{-4K + 3} \cdot \sum_{d > k} \binom{2K}{K + d}^2}$-reconstructible by $\AND_K$, where $\AND_K(x)$ is the $\AND$ of the first $K$ bits of $x$.
\end{theorem}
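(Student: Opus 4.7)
The plan is to reduce the problem to a Hamming-weight construction and then exhibit the distributions explicitly. Since $\mu$ is symmetric on $\{0,1\}^n$ and $\AND_K$ depends only on the first $K$ coordinates, $\E_\mu[\AND_K(X)] = \E_\mu[\binom{|X|}{K}/\binom{n}{K}]$, and symmetric $k$-wise indistinguishability is equivalent to $\E_\mu[\binom{|X|}{j}] = \E_\nu[\binom{|X|}{j}]$ for $0 \le j \le k$. The task thus reduces to finding Hamming-weight distributions $p_\mu, p_\nu$ on $\{0,1,\dots,n\}$ such that $\sum_w (p_\mu(w)-p_\nu(w))\binom{w}{j} = 0$ for $j \le k$ and $\sum_w (p_\mu(w)-p_\nu(w))\binom{w}{K}/\binom{n}{K}$ is at least the claimed quantity.

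I specialize to $n = 2K$; extension to other $n$ is by an appropriate embedding argument. The construction is motivated by the observation that $\sum_{d>k}\binom{2K}{K+d}^2/4^{2K}$ equals $\Pr[B_1 = B_2 > K + k]$ for independent $B_1, B_2 \sim \mathrm{Bin}(2K,1/2)$, which suggests sampling Hamming weights with probability proportional to $\binom{2K}{K+d}^2$. Concretely I set $\tau := p_\mu - p_\nu$ to be antisymmetric around $K$, supported on $|w-K| > k$, with $\tau(K+d) = \bigl(\operatorname{sign}(d)\binom{2K}{K+d}^2 + Q(d)\bigr)/M$, where $Q$ is an odd polynomial in $d$ of degree at most $k$ (with $\lceil k/2\rceil$ coefficients) chosen to satisfy the odd moment constraints via a Hankel linear system, and $M$ is fixed so that $\|\tau\|_1 = 2$. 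Antisymmetry automatically reduces the $k+1$ moment constraints to the $\lceil k/2\rceil$ odd-moment ones $\sum_{d>k}\tau(K+d)d^i = 0$ for odd $i \le k$. Using the combinatorial identity
\[
\frac{\binom{2K}{K+d}\binom{K+d}{K}}{\binom{2K}{K}} = \binom{K}{d},
\]
the gap reduces (modulo the $Q$-correction) to $\sum_{d>k}\binom{2K}{K+d}\binom{K}{d}/M$, and a Cauchy--Schwarz-type estimate comparing this to $\sqrt{\sum_{d>k}\binom{2K}{K+d}^2}$ together with the explicit value of $M$ yields the lower bound $\sqrt{2^{-4K+3}\sum_{d>k}\binom{2K}{K+d}^2}$ on the gap.

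The main obstacle is controlling the effect of the polynomial correction on both the normalization $\|\tau\|_1$ and the gap: a priori the correction could cancel the leading contribution or significantly inflate $\|\tau\|_1$. The key observation is that $\binom{2K}{K+d}^2$ decays rapidly in $|d|$ while $Q$ has degree only $k < K$, so the Hankel system is well-conditioned and the correction contributes only a constant-factor loss to both quantities (absorbed into the $2^{-4K+3}$ constant). The hypothesis $k \ge 2$ ensures nondegeneracy of this Hankel system; the cases $k \le 1$ admit simpler direct constructions that lie outside the theorem's hypothesis range.
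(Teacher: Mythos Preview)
Your proposal has a genuine quantitative gap: the construction $\tau(K+d)\propto\operatorname{sign}(d)\binom{2K}{K+d}^2+Q(d)$ does not achieve the stated reconstruction advantage, and the shortfall is exponential in $K$, not a constant factor. Take $n=2K$ and drop the correction $Q$ (which you argue is a perturbation). Then $M=\sum_{d>k}\binom{2K}{K+d}^2$ and, via your identity, the gap is
\[
\frac{1}{M}\sum_{d>k}\binom{2K}{K+d}\binom{K}{d}.
\]
For $k$ of constant size the numerator is at most $\sum_{d\ge0}\binom{2K}{K+d}\binom{K}{d}=\binom{3K}{K}=\Theta\bigl((27/4)^K K^{-1/2}\bigr)$ by Vandermonde, while $M=\Theta\bigl(16^K K^{-1/2}\bigr)$; the gap is therefore $\Theta\bigl((27/64)^K\bigr)$, whereas the target is $\Theta(K^{-1/4})$. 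The defect is structural: your $\tau$ puts almost all of its $\ell_1$ mass at $|d|=\Theta(\sqrt{K})$, where the reconstruction functional $\binom{K+d}{K}/\binom{2K}{K}$ is exponentially small in $K$. A degree-$\le k$ correction $Q$ cannot repair this without moving essentially all of the mass, contradicting your ``constant-factor loss'' premise; and since your ansatz spans only a $\lceil k/2\rceil$-dimensional affine slice of the $(K-k)$-dimensional space of admissible antisymmetric $\tau$, it simply need not contain an adequate witness.

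The paper sidesteps explicit primal construction entirely. By LP duality it suffices to lower-bound the sup-norm error of any degree-$k$ approximation to the univariate symmetrization of $\AND_K$, which in the limit $n\to\infty$ is $p_0^\infty(t)=2^{-K}(t+1)^K$. Its Chebyshev coefficients are computed in closed form as $c_d=2^{-2K}\binom{2K}{K+d}$; for any degree-$k$ polynomial $q$ the coefficients of $p_0^\infty-q$ of index $|d|>k$ are unchanged, so orthogonality under the Chebyshev measure gives $\E_\sigma[(p_0^\infty-q)^2]\ge 2\sum_{d>k}c_d^2$, whence the exact sup-norm bound. A one-line projection argument (discard the last coordinate) then transfers the bound from large $n$ down to all finite $n\ge K$.
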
 

\noindent \emph{Discussion of Theorem \ref{thm:mainlowerprob}}.
This theorem gives the existence of a ramp visual secret sharing scheme that is perfectly secure against any $k$ parties, but in which any $K > k$ parties can reconstruct the secret with the above advantage. This generalizes the schemes in~\cite{BW17} where only reconstruction by all $n$ parties was considered. 

Let us express the reconstruction advantage appearing in Theorem \ref{thm:mainlowerprob}
in a manner more easily comparable to other results in this manuscript.
Standard results on anti-concentration of the Binomial distribution state that 
$2^{-2K} \cdot \sum_{d > k} \binom{2K}{K + d} = e^{-\Theta(k^2/K)}$ (see, e.g., \cite{kleinyoung}).
The Cauchy-Schwarz inequality then implies that the reconstruction advantage
appearing in Theorem \ref{thm:mainlowerprob} is at least  $K^{-1/2}\cdot e^{-O(k^2/K)}$.\footnote{
Theorem \ref{thm:mainlowerprob}
is closely related to Theorem \ref{thm:dualpoly},
in that Theorem \ref{thm:mainlowerprob} gives \emph{another} anti-concentration-based proof that $\adeg_{\eps}(\AND_K) \geq k$ for $\eps=K^{-1/2}\cdot e^{-\Theta(k^2/K)}$.
However, the two results are incomparable.
Theorem \ref{thm:mainlowerprob} does not yield an explicit dual polynomial for $\AND_K$, and 
the $\eps$-approximate degree lower bound for $\AND_K$ implied by Theorem \ref{thm:mainlowerprob} 
is loose by the $K^{-1/2}$ factor appearing in the expression for $\eps$.
On the other hand, Theorem \ref{thm:dualpoly} only yields a visual secret sharing scheme
with reconstruction by all $n$ parties,  while Theorem \ref{thm:mainlowerprob} 
yields a ramp scheme with non-trivial reconstruction advantage by the $\AND$ of the first $K$ (out of $n$) parties.
}

Hence, the visual secret sharing schemes given in Theorem~\ref{thm:mainlowerprob} are nearly optimal; if the reconstruction advantage could be improved by more than the leading $\poly(K)$ factor (or the constant factor in the exponent), then this would contradict Theorem~\ref{thm:mainupper} which upper bounds the distinguishing advantage of any statistical test over $K$ bits against symmetric, perfectly $k$-wise indistinguishable distributions. Theorem~\ref{thm:mainlowerprob} also shows that the indistinguishability parameter in Theorem~\ref{thm:mainupper} cannot be significantly improved, even in the restricted case where the only statistical test is $\AND_K$. 

\medskip
In Section~\ref{sec:robustness} we describe another application of Theorem~\ref{thm:mainupper} to security against share consolidation and ``downward self-reducibility'' of visual secret shares.

\subsection{Related Works} 

\medskip \noindent \textbf{Prior Work.} Servedio, Tan, and Thaler \cite{STT} established Corollary \ref{cor: wtdeglb} and Theorem \ref{thm: symub}
in the special case $f=\OR$, showing that degree $d$ polynomials that approximate the $\OR$ function
require weight $2^{\tilde{\Theta}(n/d)} = 2^{\tilde{\Theta}(\adeg_{1/3}(\OR)^2/d)}$.\footnote{These bounds
for $\OR$ were implicit in \cite{STT}, but not explicitly highlighted. The upper bound was explicitly stated in \cite[Lemma 4.1]{chandrasekaran2014faster},
which gave applications to differential privacy, and the lower
bound in \cite[Lemma 32]{BT13eccc}, which used it to establish tight weight-degree tradeoffs for polynomial threshold functions computing read-once DNFs.}
 They used
this result to establish tight weight-degree tradeoffs for polynomial threshold functions computing decision lists.
As previously mentioned, Bogdanov and Willamson \cite{BW17} generalized the weight-vs-degree lower bound from \cite{STT} beyond polynomials,
thereby
obtaining a visual secret-sharing scheme  for any fixed $K$ that is $(K, e^{-\Omega (n/K)})$-statistically secure.

Elkies \cite{elkiesmathoverflow} and Sachdeva and Vishnoi \cite{sachdeva2013approximation} exploit concentration of measure to prove a tight upper bound
on the degree of univariate polynomials that approximate the function $t \mapsto t^n$ over the domain $[-1, 1]$.
Their techniques inspired our (much more technical) proof of Theorem \ref{thm:mainupper}.

 \medskip \noindent \textbf{Other Related Work.} This work subsumes Bogdanov's manuscript~\cite{Bog18}, which shows a slightly weaker lower bound on the weighted approximate degree of AND, and does not derive an explicit dual polynomial.  In independent work, Huang and Viola \cite{hv19} prove a weaker form of our Corollary~\ref{cor:symmetricimperfect}:  their distributions $\mu, \nu$ depend on the value of $K$.  They also prove (a slightly tighter version of) Theorem \ref{thm: symub}, thereby establishing that the statistical distance in Corollary~\ref{cor:symmetricimperfect} is tight.

\subsection{Techniques and Organization} 

The proof of Theorem~\ref{thm:dualpoly} (Section~\ref{s:weighteddeg}) is an elementary verification that the function $\phi$ given in \eqref{eq:dualq} is a dual polynomial.  The only property that is not immediate is correlation with $\AND$.  Verifying this property amounts to upper bounding the normalization constant $Z$, which follows from orthogonality of the Fourier characters.

In the proof of Theorem~\ref{thm:mainupper} (Section~\ref{s:apxfromperfect}), a $K$-bit statistical distinguisher for symmetric distribution is first decomposed into a sum of at most $K+1$ tests $Q_w$ that evaluate to 1 only when the input has Hamming weight exactly $w$.  Lemma~\ref{lemma:main3} shows that the univariate symmetrizations $p_w$ of these distinguishers can be pointwise approximated by a degree-$k$ polynomial with error at most $O(K^{1/2}) \cdot e^{-\Omega(k^2/K)}$.  

To construct the desired approximation, we derive an identity relating the moment generating function of the squared Chebyshev coefficients of $p_w$ (interpreted as relative probabilities) to the average magnitude of a polynomial $g$ related to $p_w$ on the unit complex circle (Claims~\ref{claim:regcoeffs} and~\ref{claim:normg}).  We bound these magnitudes analytically (Claim~\ref{claim:fupper}) and derive tail inequalities for the Chebyshev coefficients from bounds on the moment generating function as in standard proofs of Chernoff-Hoeffding bounds.

In the special case when the secrecy parameters $k$ and $K$ are fixed and the number of parties $n$ approaches infinity, $p_w(t)$ turns out to equal  $C_w (t - 1)^w (t + 1)^{K - w}$, where $C_w$ is some quantity independent of $t$. In this case, the Chebyshev coefficients are the regular coefficients of the polynomial $g^\infty(s) = 2^{-w} C_w (s - 1)^{2w}(s + 1)^{2(K-w)}$.\footnote{The $i$-th coefficient of $g^\infty$ is the value of the $i$-th Kravchuk polynomial with parameter $2K$ evaluated at $2w$.}  When $w = 0$, $K/2$, or $1$, the coefficients of $g^\infty$ are exponentially concentrated around the middle as they follow the binomial distribution.  We prove that this exponential decay in magnitudes happens for all values of $w$, which requires understanding complicated cancellations in the algebraic expansion of $g^\infty(s)$.
We generalize this analysis to the finitary setting $n \geq 64K$.

We prove Theorem \ref{thm: symub} (Section \ref{s:symub}) by writing any symmetric function $f$ as a sum of at most $\ell := \min\{|f^{-1}(0)|, |f^{-1}(1)|\}$ many conjunctions, and approximating each conjunction to such low error (namely error $\ll \ell$) that the sum of all approximations is an approximation for $f$ itself. Theorem \ref{thm: symub} then follows by constructing
 low-weight, low-degree polynomial
approximations for each conjunction in the sum. 

Theorem~\ref{thm:mainlowerprob} (Section~\ref{s:mainlowerprob}) is proved by lower bounding the error of degree $k$ polynomial approximations to the symmetrization $f$ of the function $\AND_K\left(x|_{\{1, \dots, K\}}\right)$.  By duality, a lower bound on approximation error translates into a secret sharing scheme with the same reconstruction advantage.  To lower bound the error, we estimate the values of the coefficients in the Chebyshev expansion of $f$ with indices larger than $k$.  Owing to orthogonality, the largest of these coefficients lower bounds the approximation error of any degree-$k$ polynomial.

In Section~\ref{sec:robustness} we formulate a security of secret sharing against consolidation and downward self-reducibility of visual schemes, and derive these properties from the main results. 

\section{Dual Polynomial For the Weighted Approximate Degree of AND}
\label{s:weighteddeg}

In this section we prove Theorem~\ref{thm:dualpoly} and derive its two corollaries about the unweighted and weighted approximate degree of AND.  

\medskip \noindent \textbf{Notation and Definitions.} 
Let $[n]=\{1, \dots, n\}$. Given a vector $w \in \R_{\geq 0}^n$, define the weight of a monomial $\chi_{S}(x) = \prod_{i \in S} x_i, x_i \in \{-1, 1\}$ to equal $\sum_{i \in S}w_i$.  Define the $w$-weighted degree of a polynomial to be the maximum weight of a monomial in it.  That is, if $p = \sum_{S \subseteq [n]}c_S\chi_S$, then define
\[
\deg_w(p) = \max_{S : c_S \neq 0}w(S).
\]
Define the $w$-weighted $\eps$-approximate degree $\adeg_{w, \eps}(f)$ to be the minimum $w$-weighted degree of a polynomial $p$ that satisfies $\abs{p(x) - f(x)} \leq \eps$ for all $x$ in the domain of $f$.
Given two real-valued functions $f, g$ over domain $\{-1, 1\}^n$, define $\langle f, g \rangle := \frac{1}{2^n}\sum_{x \in \{-1, 1\}^n} f(x) \cdot g(x)$. 

\begin{lemma}\label{lem:dualpoly}
For any finite set $X$ and any function $f \colon X \rightarrow \mathbb{R},~\adeg_{w, \eps}(f) \geq d$ iff there exists a function $\phi: X \rightarrow \mathbb{R}$ satisfying the following conditions.
\begin{itemize}
   \item \emph{Pure high degree}: For any real polynomial $p$ of weighted degree is at most $d$,~$\langle \phi, p \rangle = 0$.
    \item \emph{Normalization}: $\sum_{x \in X}|\phi(x)| = 1$,
    \item \emph{Correlation}: $\langle \phi, f\rangle  \geq \eps$,
\end{itemize}
\end{lemma}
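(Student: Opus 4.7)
The plan is to derive the lemma from linear programming (LP) duality, which is the standard route for such minimax characterizations of approximate degree. First I would write the statement ``$\adeg_{w,\eps}(f) \leq d$'' as a feasibility LP: the variables are the coefficients $c_S \in \R$ indexed by $S \subseteq [n]$ with $w(S) \leq d$, and the constraints are $|\sum_S c_S \chi_S(x) - f(x)| \leq \eps$ for each $x \in X$. To streamline the duality computation, I would split each absolute-value constraint into the pair $\sum_S c_S \chi_S(x) - f(x) \leq \eps$ and $-\sum_S c_S \chi_S(x) + f(x) \leq \eps$.

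Next I would compute the LP dual. Introducing nonnegative dual multipliers $\mu(x), \nu(x)$ for the two inequalities at each $x$, stationarity in each $c_S$ forces $\sum_x (\mu(x) - \nu(x)) \chi_S(x) = 0$ for every $S$ with $w(S) \leq d$, and the dual objective becomes $\sum_x (\mu(x) - \nu(x)) f(x) - \eps \sum_x (\mu(x) + \nu(x))$. Setting $\phi(x) = \mu(x) - \nu(x)$ and observing that WLOG $\mu(x)$ and $\nu(x)$ have disjoint support so that $|\phi(x)| = \mu(x) + \nu(x)$, the dual says: the primal is infeasible iff there exists $\phi$ orthogonal to every $\chi_S$ of weighted degree at most $d$ with $\sum_x \phi(x) f(x) > \eps \sum_x |\phi(x)|$. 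Scaling $\phi$ so that $\sum_x |\phi(x)| = 1$ yields exactly the three conditions stated in the lemma. The ``pure high degree'' formulation with arbitrary polynomials (rather than just characters) follows from linearity: orthogonality to every $\chi_S$ with $w(S) \leq d$ is equivalent to orthogonality to every polynomial of weighted degree at most $d$.

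To finish, I would invoke strong LP duality: since we are in a finite-dimensional LP over a finite set $X$ and both the primal and dual are manifestly feasible (e.g., $\phi \equiv 0$ on the dual side, and the zero polynomial with $\eps' = \max_x |f(x)|$ on the primal side), there is no duality gap. Hence the non-existence of an approximating polynomial of weighted degree at most $d$ with error $\eps$ is equivalent to the existence of a dual witness $\phi$ with the claimed properties.

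I expect no serious obstacle here; the proof is essentially bookkeeping for LP duality. The only points requiring a little care are (i) handling the weighted-degree restriction, which only dictates which $S$ index the primal variables and does not change the shape of the LP, and (ii) reconciling normalizations: the inner product $\langle \cdot, \cdot \rangle$ in the excerpt was defined with a $2^{-n}$ factor for $X=\pmone^n$, so when stating the lemma for general $X$ one should read $\langle \phi, g\rangle = \sum_x \phi(x) g(x)$ (or absorb the rescaling into $\phi$), a cosmetic change that does not affect the argument.
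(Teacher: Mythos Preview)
Your proposal is correct and follows exactly the route the paper indicates: the paper states that ``the lemma follows by linear programming duality'' and then only writes out the easy ``if'' direction via the one-line H\"older-type bound $\|f-p\|_\infty \cdot \|\phi\|_1 \geq \ip{\phi,f-p} = \ip{\phi,f} \geq \eps$. Your LP/Farkas derivation is the standard way to obtain the full equivalence and is what the paper is implicitly invoking; the only quibble is your phrasing ``both the primal and dual are manifestly feasible'' for what you set up as a \emph{feasibility} LP---cleaner is either to cast it as the optimization LP $\min t$ subject to $|p(x)-f(x)|\le t$ (always feasible and bounded, so strong duality applies) or to invoke Farkas' lemma directly.
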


We call $\phi$ a dual witness for $\adeg_{w, \eps}(f) \geq d$.  The lemma follows by linear programming duality and is a straightforward generalization of previous results (see e.g.~\cite{Spalek08, BT13}). We prove the ``if'' direction, which is sufficient for our purposes.

\begin{proof}
For any $p$ of weighted degree at most $d$,
\[ \|f - p\|_{\infty} = \|f - p\|_{\infty} \|\phi\|_1 \geq \ip{\phi, f - p} = \ip{\phi, f} - \ip{\phi, p} \geq \eps. \]
\end{proof}

The dual polynomial of interest is 
\[
\phi(x) = \frac{(-1)^n}{Z}\chi_{[n]}(x) \cdot \E_{S \sim \H}[\chi_S(x)]^2,
\]
where $x \in \{-1, 1\}^n$, $\H$ is the uniform distribution over the sets $\bra{S \subseteq [n] : w(S) \leq (\|w\|_1 - d)/2}$, and $Z$ is the normalization constant 
\[ Z = \sum_{x \in \pmone^n} \E_{S \sim \H}[\chi_S(x)]^2. \]

\begin{proof}[Proof of Theorem~\ref{thm:dualpoly}]
We prove the theorem by showing that $\phi$ satisfies the three conditions of Lemma~\ref{lem:dualpoly}.  The expression $\E_{S \sim \H}[\chi_S(x)]^2$ can be written as a sum of products of pairs of monomials of weight at most $(\|w\|_1 - d)/2$, so its weighted degree is at most $\|w\|_1 - d$.  Thus every monomial that occurs in the expansion of $\chi_{[n]}(x) \E_{S \sim \H}[\chi_S(x)]^2$ must have weighted degree {\em at least} $d$, and so $\phi$ has pure high weighted degree at least $d$ as desired.

The scaling by $Z$ in the definition of $\phi$ ensures that $\phi$ has $L_1$ norm 1.  The correlation of $\phi$ and $\AND$ is given by
$ \ip{\phi, \AND} = \phi(1^n) = \frac{1}{Z}. $
Finally, the normalization constant $Z$ evaluates to
\begin{align*} 
Z &= \sum_{x \in \{-1, 1\}^n} \E_{S \sim \H}[\chi_S(x)]^2
  = \sum_{x \in \{-1, 1\}^n} \E_{S \sim \H}[\chi_S(x)]\E_{T \sim \H}[\chi_T(x)] \\ 
  &= \sum_{x \in \{-1, 1\}^n} \E_{S, T \sim \H}[\chi_{S \Delta T}(x)] 
  = \E_{S, T \sim \H} \sum_{x \in \{-1, 1\}^n} \chi_{S \Delta T}(x)  \\
  &= 2^n \pr[S = T] 
  = \frac{2^n}{\abs{\H}},
\end{align*}
since the inner summation over $x$ evaluates to $2^n$ when $S = T$, and zero otherwise.

It remains to show that $1/Z = \abs{\H}/2^n$ equals the desired expression for $\eps$.  For a set $S \subseteq [n]$, let $X(S) \in \{-1, 1\}^n$ be the string that assigns values $1$ and $-1$ to elements inside and outside $S$, respectively.  Then $w(S) = \|w\|_1/2 + \ip{w, X(S)}/2$, so
\[
\frac{\abs{\H}}{2^n} 
   = \pr_{S \subseteq [n]}[w(S) \geq \|w\|_1/2 + d/2] 
   = \pr_{X \sim \{-1, 1\}^n}[\ip{w, X} \geq d]. 
\]
\end{proof}

\begin{corollary}[Approximate degree of AND]
\label{cor:and}
Recall that $\AND \colon \{-1, 1\}^n \to \{0, 1\}$ denotes the function satisfying $\AND(x)=1$ if and only if $x=1^n$. 
If $p$ has degree at most $d$, then $|p(x) - \AND(x)| \geq \pr[X \leq (n - d)/2]$ for some $x$, where $X$ is a $\mathrm{Binomial}(n, 1/2)$ random variable.
\end{corollary}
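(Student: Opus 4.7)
The plan is to derive the corollary as an immediate specialization of Theorem~\ref{thm:dualpoly} to the all-ones weight vector $w = (1, 1, \dots, 1)$. In this case the $w$-weighted degree of a monomial $\chi_S$ is just its ordinary degree $|S|$, so $\adeg_{w, \eps}(\AND) = \adeg_\eps(\AND)$, and the distribution $\mathcal{H}$ defining $\phi$ becomes the uniform distribution on subsets of $[n]$ of size at most $(n-d)/2$. Theorem~\ref{thm:dualpoly} then asserts that the function $\phi$ from Equation~\eqref{eq:dualq} is a dual witness for $\adeg_\eps(\AND) \geq d$ with
\[
\eps = \pr_{X \sim \{-1,1\}^n}[\ip{\mathbf{1}, X} \geq d].
\]

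The second step is to rewrite this probability in the form stated in the corollary. If $X_1, \dots, X_n$ are i.i.d.\ uniform on $\{-1, 1\}$ and $Y_i = (1 - X_i)/2 \in \{0, 1\}$, then $Y := \sum_i Y_i$ is distributed as $\mathrm{Binomial}(n, 1/2)$, and $\ip{\mathbf{1}, X} = n - 2Y$. Thus $\ip{\mathbf{1}, X} \geq d$ if and only if $Y \leq (n - d)/2$, which gives
\[
\eps = \pr[Y \leq (n - d)/2].
\]

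Finally, by the ``if'' direction of Lemma~\ref{lem:dualpoly} (explicitly verified in its proof), the existence of a dual witness $\phi$ of pure high degree $d$, unit $\ell_1$-norm, and correlation at least $\eps$ with $\AND$ implies that any polynomial $p$ of degree at most $d$ satisfies $\|p - \AND\|_\infty \geq \eps$, i.e.\ $|p(x) - \AND(x)| \geq \pr[X \leq (n-d)/2]$ for some $x$.

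There is essentially no obstacle beyond correctly translating between the $\{-1, 1\}^n$ dot-product formulation of Theorem~\ref{thm:dualpoly} and the binomial tail formulation stated in the corollary; all the work has already been done in proving Theorem~\ref{thm:dualpoly} itself.
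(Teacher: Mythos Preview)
Your proposal is correct and takes exactly the same approach as the paper: apply Theorem~\ref{thm:dualpoly} with the all-ones weight vector and translate $\pr_{X \sim \{-1,1\}^n}[\ip{\mathbf{1}, X} \geq d]$ into the binomial tail $\pr[Y \leq (n-d)/2]$. The paper's own proof is a single line (``Apply Theorem~\ref{thm:dualpoly} to the weight vector $w = (1, 1, \dots, 1)$''), so you have simply spelled out the translation that the paper leaves implicit.
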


The expression on the right is lower bounded by the larger of $1/2 - O(d/\sqrt{n})$ and $2^{-O(d^2/n)}$.  In the large $d$ regime ($d \geq \sqrt{n}$), this bound is tight ~\cite{KahnLS96, BCdWZ99}

\begin{proof}
Apply Theorem~\ref{thm:dualpoly} to the weight vector $w = (1, 1, \dots, 1)$.  
\end{proof}

Earlier constructions of dual polynomials for AND are quite different from our Corollary~\ref{cor:and}  \cite{KahnLS96, Spalek08, BT13, sherstov15} and are based on real-valued polynomial interpolation. Specifically, for a carefully chosen set $T\subseteq \{0, 1, \dots, n\}$ of size $|T|=2d$, the prior constructions consider a \emph{univariate} polynomial
$p(t) = \prod_{i \in [n] \setminus T} (t-i)$,
and they define $\psi(x) = p(|x|),$ where $|x|$ denotes the Hamming weight of $x$. Clearly $\psi$ has degree at most $n-|T|$. A fairly complicated
calculation is required to show that, for an appropriate choice of $T$, defining $\psi$ in this way
ensures that $|\psi(1^n)|$ captures an $\varepsilon$-fraction of the $L_1$-mass of $\psi$.

\label{s:discusss}

\begin{corollary}[Weighted approximate degree of AND]
\label{cor:weighted}
$\widetilde{deg}_{w, 3/32}(\AND) \geq \|w\|_2/2$.
\end{corollary}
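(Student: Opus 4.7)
The plan is to invoke Theorem~\ref{thm:dualpoly} with $d = \|w\|_2/2$, which immediately yields $\adeg_{w,\eps}(\AND) \geq \|w\|_2/2$ for $\eps = \pr_{X \sim \{-1,1\}^n}[\ip{w,X} \geq \|w\|_2/2]$. The entire task therefore reduces to an anti-concentration estimate: showing that a weighted Rademacher sum exceeds half its standard deviation with probability at least $3/32$. This is a distribution-free statement depending only on the first few moments of the sum, so it should be provable by an elementary second-moment/fourth-moment (Paley--Zygmund type) argument.

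The main computational step is to control the moments of $S := \ip{w, X}$. By independence of the $X_i$ and the fact that $\E[X_i] = 0$, $\E[X_i^2] = 1$, one immediately gets $\E[S^2] = \|w\|_2^2$. Expanding $S^4 = \sum_{i,j,k,l} w_i w_j w_k w_l X_i X_j X_k X_l$ and noting that the expectation of a product of Rademachers vanishes unless every index appears an even number of times, only two contributions survive: the ``all four equal'' term, which contributes $\|w\|_4^4$, and the ``two distinct pairs'' terms, of which there are $\binom{4}{2} = 6$ orderings per unordered pair, contributing $3(\|w\|_2^4 - \|w\|_4^4)$. Summing gives $\E[S^4] = 3\|w\|_2^4 - 2\|w\|_4^4 \leq 3\|w\|_2^4$.

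Now I apply the standard split: writing $t = \|w\|_2/2$,
\[
\|w\|_2^2 = \E[S^2 \cdot \mathbbm{1}_{|S| < t}] + \E[S^2 \cdot \mathbbm{1}_{|S| \geq t}] \leq t^2 + \sqrt{\E[S^4]} \cdot \sqrt{\pr[|S| \geq t]}
\]
by Cauchy--Schwarz. Rearranging with $t^2 = \|w\|_2^2/4$ and $\E[S^4] \leq 3\|w\|_2^4$ gives $\pr[|S| \geq \|w\|_2/2] \geq 3/16$. The distribution of $S$ is symmetric about $0$ (since $X \mapsto -X$ preserves the uniform law on $\{-1,1\}^n$), so $\pr[S \geq t] = \pr[S \leq -t]$, and therefore $\pr[S \geq \|w\|_2/2] \geq 3/32$. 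Combined with Theorem~\ref{thm:dualpoly}, this completes the proof.

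There is essentially no hard step: the genuine content lives in Theorem~\ref{thm:dualpoly}, which converts the corollary into a purely probabilistic statement. The only minor subtlety to keep in mind is that $\|w\|_2/2$ need not be an integer, so one should confirm that the weighted-degree formulation of Lemma~\ref{lem:dualpoly} makes sense for non-integer thresholds (it does, since weighted degree is a real-valued quantity). The constant $3/32$ is tight for this fourth-moment approach; sharper constants would require Berry--Esseen or exact binomial estimates, but are unnecessary for the stated bound.
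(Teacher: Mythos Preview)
Your proof is correct and essentially identical to the paper's. The paper invokes the Paley--Zygmund inequality (stated as a lemma) applied to $Z = \ip{w,X}^2$ with $\theta = 1/4$, using the same moment computations $\E[Z] = \|w\|_2^2$ and $\E[Z^2] \leq 3\|w\|_2^4$; your Cauchy--Schwarz split is exactly the standard proof of Paley--Zygmund written out inline, and both arguments arrive at $\pr[|S| \geq \|w\|_2/2] \geq 3/16$ before halving by symmetry and applying Theorem~\ref{thm:dualpoly}.
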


The proof uses the Paley-Zygmund inequality:

\begin{lemma}[Paley-Zygmund inequality]\label{lem: pz}
Let $Z \geq 0$ be any random variable with finite variance.  Then, for any $0 < \theta < 1$,
\[
\Pr[Z \geq \theta\E(Z)] \geq (1 - \theta)^2\frac{(\E[Z])^2}{\E[Z^2]}.
\]
\end{lemma}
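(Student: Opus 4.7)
The plan is to invoke Theorem~\ref{thm:dualpoly} with $d = \|w\|_2/2$ and then lower bound the resulting error parameter $\eps = \pr_{X \sim \{-1,1\}^n}[\ip{w, X} \geq \|w\|_2/2]$ by $3/32$, using the Paley-Zygmund inequality applied to a suitable squared random variable. Concretely, set $Y = \ip{w, X} = \sum_i w_i X_i$ where the $X_i$ are independent uniform $\pm 1$. The distribution of $Y$ is symmetric about $0$, so it suffices to prove the two-sided bound $\pr[|Y| \geq \|w\|_2/2] \geq 3/16$, and then split by symmetry to obtain $\pr[Y \geq \|w\|_2/2] \geq 3/32$.

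To get the two-sided bound, I would apply Lemma~\ref{lem: pz} to the nonnegative random variable $Z = Y^2$. Here $\E[Z] = \sum_i w_i^2 = \|w\|_2^2$, which is immediate from independence and $\E[X_i] = 0$, $\E[X_i^2] = 1$. The second moment $\E[Z^2] = \E[Y^4]$ needs to be controlled: expanding $Y^4 = (\sum_i w_i X_i)^4$ and using independence together with $X_i^2 = 1$, the only surviving terms are those where every index appears an even number of times, yielding $\E[Y^4] = 3\bigl(\sum_i w_i^2\bigr)^2 - 2\sum_i w_i^4 \leq 3\|w\|_2^4$. Then Paley-Zygmund with $\theta = 1/4$ gives
\[
\pr\bigl[Y^2 \geq \tfrac14 \|w\|_2^2\bigr] \geq (1 - \tfrac14)^2 \cdot \frac{\|w\|_2^4}{3\|w\|_2^4} = \frac{3}{16},
\]
which is exactly the two-sided bound I want. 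Splitting by the symmetry of $Y$ finishes the estimate on $\eps$.

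The main step, and really the only content beyond plugging into Theorem~\ref{thm:dualpoly}, is the fourth-moment computation of $Y$. This is routine but is the place where the concrete constant $3/32$ in the statement is pinned down: a tighter fourth-moment bound or a sharper anti-concentration inequality (e.g., Berry-Esseen, or the sharp constant due to Oleszkiewicz) would give a better numerical constant in place of $3/32$, but the qualitative statement $\adeg_{w,\Omega(1)}(\AND) \geq \Omega(\|w\|_2)$ already falls out of any such second-moment method. I do not anticipate any real obstacle; the whole argument is a short combination of Theorem~\ref{thm:dualpoly} with a standard anti-concentration estimate for weighted Rademacher sums.
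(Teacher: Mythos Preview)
Your proposal does not prove the stated lemma at all. The statement in question is the Paley--Zygmund inequality itself (Lemma~\ref{lem: pz}), a general probabilistic inequality about an arbitrary nonnegative random variable $Z$. Your write-up instead \emph{invokes} Lemma~\ref{lem: pz} as a black box and uses it to derive the weighted approximate degree lower bound $\adeg_{w,3/32}(\AND) \geq \|w\|_2/2$; that is, you have written a proof of Corollary~\ref{cor:weighted}, not of Lemma~\ref{lem: pz}. Nothing in your text establishes why $\Pr[Z \geq \theta \E Z] \geq (1-\theta)^2 (\E Z)^2/\E[Z^2]$ holds for a general $Z \geq 0$.

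For context, the paper does not supply a proof of Lemma~\ref{lem: pz} either; it is quoted as a classical fact and then used exactly as you use it. If your actual target was Corollary~\ref{cor:weighted}, then your argument is correct and essentially identical to the paper's: both apply Theorem~\ref{thm:dualpoly} with $d = \|w\|_2/2$, both compute $\E[\ip{w,X}^2] = \|w\|_2^2$ and $\E[\ip{w,X}^4] \leq 3\|w\|_2^4$, both apply Paley--Zygmund to $Z = \ip{w,X}^2$ with $\theta = 1/4$, and both halve by symmetry to get $3/32$. But as a proof of the Paley--Zygmund inequality, which is what was asked, the proposal is simply off-target.
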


\begin{proof}[Proof of Corollary~\ref{cor:weighted}]
We apply the Paley-Zygmund inequality to $\ip{w, X}^2$.  First, $\E[\ip{w, X}]^2 = \|w\|_2^2$ and $\E[\ip{w, X}^4] = \sum w_i^4 + 3 \sum w_i^2w_j^2 \leq 3\|w\|_2^2$.  Then
\[
\Pr\left[\ip{w, X} \geq \frac{\|w\|_2}{2}\right] = 
\frac12 \Pr\left[\abs{\ip{w, X}} \geq \frac{\|w\|_2}{2}\right] = 
\frac12 \Pr\left[\ip{w, X}^2 \geq \frac{\|w\|_2^2}{4}\right] \geq \frac12 \cdot \frac{9}{16}\cdot\frac{1}{3} = \frac{3}{32},
\]
where the first equality follows from the sign-symmetry of $X$.  Applying Theorem~\ref{thm:dualpoly} with $d = \|w\|_2/2$ yields the claim.
\end{proof}

\section{Approximate Indistinguishability from Perfect Indistinguishability}
\label{s:apxfromperfect}

In this section, we prove Theorem \ref{thm:mainupper}, which states that any pair of symmetric and perfectly $k$-wise indistinguishable distributions over $\{0, 1\}^n$ are also approximately indistinguishable against statistical tests that observe $K > k$ of the bits.  We may and will assume without loss of generality that the statistical test is a symmetric function, meaning that it depends only on the Hamming weight of the observed bits of its input.

Let $X$ and $Y$ denote an arbitrary pair of symmetric $(k, 0)$-wise indistinguishable distributions over $\{0, 1\}^n$. We will be interested in obtaining an upper bound on the statistical distance of their projections to any $K$ indices of $[n]$, namely the advantage $\E_X[T(X|_S)-\E_Y[T(Y|_S)]$ where $T:\{0,1\}^K \rightarrow \{0,1\}$ is a symmetric function and $S \subseteq [n]$ is any set of size $K$.  We can decompose $T$ into a sum of tests $Q_w:\{0,1\}^K \rightarrow \{0,1\}$, where $Q_w$ 
outputs 1 if and only if the Hamming weight of its input is exactly $w$.  Specifically, we decompose $T$ as 
\begin{equation}
\label{eq:expT}
 T = \sum_{w=0}^K b_wQ_w,
\end{equation}
where each $b_w$ is either zero or one.  We will bound the distinguishing advantage of each $Q_w$ in the sum individually.  This advantage is captured by a univariate function $p_w$ that expresses $Q_w$ in terms of the Hamming weight of its input, after shifting and scaling the Hamming weight to reside in the interval $[-1, 1]$.

\begin{fact}
\label{fact:pw}
Let $S\subseteq [n]$ be any set of size $K$. There exists a univariate polynomial $p_w$ of degree at most $K$ such that
the following holds.
For all $t \in \{-1, -1+2/n, \dots, 1-2/n, 1\}$, $p_w(t) = \E_{Z}[Q_w(Z|_S)]$ where $Z$ is a random string of Hamming weight $\phi^{-1}(t) = (1 - t)n/2 \in \{0, 1, \dots, n\}$.
\end{fact}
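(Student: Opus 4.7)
The plan is to compute $\E_Z[Q_w(Z|_S)]$ in closed form, recognize the resulting expression as a polynomial of degree $K$ in the Hamming weight $h$ of $Z$, and then change variables from $h$ to $t$.

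First, fix $S \subseteq [n]$ of size $K$ and an integer $h \in \{0, 1, \dots, n\}$, and let $Z$ be uniformly random among binary strings of Hamming weight $h$. The number of ones of $Z$ that fall in $S$ is then hypergeometrically distributed, and by definition of $Q_w$ we have
\[
\E_Z[Q_w(Z|_S)] = \frac{\binom{K}{w}\binom{n-K}{h-w}}{\binom{n}{h}},
\]
with the convention that $\binom{a}{b}=0$ whenever $b<0$ or $b>a$.

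Next, I would cancel factorials to rewrite this expression as
\[
\binom{K}{w}\cdot\frac{(n-K)!}{n!}\cdot h(h-1)\cdots(h-w+1)\cdot (n-h)(n-h-1)\cdots(n-h-K+w+1).
\]
The two falling-factorial products are polynomials in $h$ of degrees $w$ and $K-w$ respectively, so the entire expression is a polynomial $P(h)$ of degree exactly $K$. At each boundary integer $h < w$ or $h > n-K+w$, one of these products contains a zero among its factors, so $P(h)$ automatically reproduces the vanishing of the hypergeometric probability outside its natural support, without any case distinction.

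Finally, substituting the affine change of variable $h = (1-t)n/2$ yields $p_w(t) := P((1-t)n/2)$, a polynomial of degree at most $K$ in $t$. By construction $p_w$ agrees with $\E_Z[Q_w(Z|_S)]$ at exactly the $n+1$ grid points $t = 1-2h/n$ with $h \in \{0, 1, \dots, n\}$, which is precisely the set stated in the fact. There is no genuine obstacle: the only point that needs attention is confirming that the closed form is a single polynomial in $h$ rather than a piecewise one, and the falling-factorial rewriting makes this transparent.
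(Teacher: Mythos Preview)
Your proof is correct and takes a more explicit route than the paper's. The paper simply invokes Minsky--Papert symmetrization as a black box: since $Q_w(x|_S)$ is a $K$-junta, hence a multilinear polynomial of total degree at most $K$, its average over inputs of a fixed Hamming weight is a univariate polynomial in that weight of degree at most $K$; the affine change of variable $h \mapsto t$ then gives $p_w$. Your direct hypergeometric computation reaches the same conclusion while simultaneously producing the factored form $p_w(t) = C_w \prod_{z\in Z_w}(t-z)$, which the paper derives only afterward in Equation~\eqref{eq:pw} by separately arguing where $p_w$ must vanish. So your argument is slightly more informative for this particular $Q_w$, whereas the paper's symmetrization argument has the advantage of applying verbatim to any $K$-junta test, not just the Hamming-weight indicators.
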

\begin{proof}
This statement is a simple extension of Minsky and Papert's classic symmetrization technique \cite{MiP69}.
Specifically, Minsky and Papert showed that for any polynomial $p_n \colon \{0, 1\}^n \to \R$,
there exists a univariate polynomial $P$ of degree at most the total degree of $p_n$, such
that for all $i \in \{0, \dots, n\}$, $P(i) = \mathbb{E}_{|x|=i}[p_n(x)]$. 
Apply this result to $p_n(x) = Q_w(x|_S)$ and let $p_w(t) = P(\phi^{-1}(t)) = P\left((1 - t)n/2\right)$.
The fact then follows from the observation that the total degree of $Q_w(x|_S)$ is at most $K$,
since this function is a $K$-junta. 
\end{proof}

In particular, the value $p_w(t)$ is a probability for every $t \in \{-1, -1+2/n, \dots, 1-2/n, 1\}$.
Moreover, this probability must equal zero when the Hamming weight of $Z$ is less than $w$ or greater than $n - K + w$.  Therefore $p_w$ has $K$ distinct zeros at the points $Z_w = Z_{-} \cup Z_{+}$, where
\begin{equation}
\label{eq:zeros}
Z_{-}=\left\{-1+ 2h/n: h=0,...,K-w-1\right\}, \qquad Z_{+} = \{1 - 2h/n: h=0,...,w-1\}.
\end{equation}
and so $p_w$ must have the form
\begin{equation}
\label{eq:pw}
p_w(t) = C_w \cdot \prod_{z \in Z_w} (t - z)
\end{equation}
for some $C_w$ that does not depend on $t$.\footnote{$p_w$, $C_w$, and $Z_w$ also depend on $K$ and $n$ but we omit those arguments from the notation as they will be fixed in the proof.}  As $p_w(t)$ is probability when $t \in \{-1, -1 + 2/n, \dots, 1 - 2/n, 1\}$, the function $p_w$ is 1-bounded at those inputs.  In fact, $p_w$ is uniformly bounded on the interval $[-1, 1]$:

\begin{claim}
\label{claim:bounded}
Assuming $n \geq 64K$, $\abs{p_w(t)} \leq 2$ for all $t \in [-1, 1]$.
\end{claim}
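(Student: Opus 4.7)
The plan is to derive an explicit formula for $|p_w(t)|$ at arbitrary $t \in [-1, 1]$ and then split the analysis into a main region and two zero regions. Parameterize $t = -1 + 2y/n$ with $y \in [0, n]$. Substituting $t - g_h = 2(y - h)/n$ (for $g_h \in Z_-$) and $t - g_{n-h} = -2(n - h - y)/n$ (for $g_{n-h} \in Z_+$) into the product form \eqref{eq:pw}, and identifying $C_w$ from the known value $p_w(g_{K-w}) = \binom{n - K + w}{w}/\binom{n}{K}$, one obtains
\[
|p_w(t)| = \binom{K}{w}\cdot\frac{\bigl|y^{\underline{K - w}}\cdot (n - y)^{\underline{w}}\bigr|}{n^{\underline{K}}},
\]
where $x^{\underline{m}} := x(x-1)\cdots(x - m + 1)$ is the falling factorial. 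This agrees with the hypergeometric PMF at integer $y$ and extends continuously in between.

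In the main region $y \in [K - w, n - w]$, for $t \in [g_j, g_{j+1}]$ with $j \in [K - w, n - w - 1]$ (both endpoints non-zero grid points), a telescoping calculation gives
\[
\frac{|p_w(t)|}{p_w(g_{j+1})} \leq \prod_{i=0}^{w-1}\frac{n - j - i}{n - j - 1 - i} = \frac{n - j}{n - j - w},
\]
which is at most $2$ whenever $j \leq n - 2w$. A symmetric comparison with $g_j$ yields $|p_w(t)|/p_w(g_j) \leq (j+1)/(j - K + w + 1)$, at most $2$ whenever $j \geq 2K - 2w - 1$. These two sub-ranges jointly cover all of $[K - w, n - w - 1]$ provided $n \geq 2K - 3$, which is amply implied by $n \geq 64K$. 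Since $p_w(g_j) \leq 1$ at non-zero grid points, this gives $|p_w(t)| \leq 2$ throughout the main region.

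In the left zero region $y \in [0, K - w]$ (when $w < K$), I would prove the much stronger bound $|p_w(t)| \leq 63^{-(K-w)}$ directly. Bounding each factor $|y - i|$ by its integer envelope on any subinterval $[j, j+1]$ with $j \in [0, K - w - 1]$ gives $|y^{\underline{K - w}}| \leq (K - w)!$. Together with $(n - y)^{\underline{w}} \leq n^{\underline{w}}$ and the identity $n^{\underline{w}}/n^{\underline{K}} = 1/(n - w)^{\underline{K - w}}$, one obtains
\[
|p_w(t)| \leq \binom{K}{w}\frac{(K - w)!}{(n - w)^{\underline{K - w}}} = \frac{K!/w!}{(n - w)^{\underline{K - w}}} \leq \frac{K^{K - w}}{(63K)^{K - w}} = 63^{-(K - w)},
\]
using $(n - w)^{\underline{K - w}} \geq (n - K + 1)^{K - w} \geq (63K)^{K - w}$ from the hypothesis $n \geq 64K$. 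The right zero region $y \in [n - w, n]$ is handled symmetrically, giving $|p_w(t)| \leq 63^{-w}$.

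The main obstacle will be the bookkeeping for the telescoping ratios and the case analysis in the main region, which requires carefully choosing the neighboring grid point to compare to for each $t$. The hypothesis $n \geq 64K$ enters in two distinct ways: its weak form $n \geq 2K - 3$ ensures the two sub-ranges cover the main region, while its strong form furnishes the exponential decay $63^{-(K-w)}$ in the zero regions, making the bound of $2$ extremely loose there.
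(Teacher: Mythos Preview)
Your proof is correct. Both your argument and the paper's follow the same high-level strategy—compare $|p_w(t)|$ to its value at a nearby grid point in the ``main'' region, and bound it directly in the ``zero'' regions near $\pm 1$—but the executions are genuinely different.

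You first derive the closed-form expression $|p_w(t)| = \binom{K}{w}\,|y^{\underline{K-w}}(n-y)^{\underline{w}}|/n^{\underline{K}}$ and work combinatorially with falling factorials throughout. In the main region your ratio bounds telescope exactly to $(n-j)/(n-j-w)$ or $(j+1)/(j-K+w+1)$; in the zero regions you bound the falling factorial directly by $(K-w)!$ (respectively $w!$) via the inequality $(j+1)!\,(K-w-1-j)! \le (K-w)!$, obtaining the sharp decay $63^{-(K-w)}$. The paper instead first reduces to $w \le K/2$ and $t \in [0,1]$ via the symmetries in Fact~\ref{fact:symm} and Claim~\ref{claim:nonnegative}, then uses a cruder factor-by-factor ratio bound $(1+2(t-t'))^K \le (1+4/n)^K$ for the main region (Corollary~\ref{cor:apx}), and in the zero region invokes the perturbation estimate of Corollary~\ref{cor:plainbound} together with the probabilistic tail bound of Claim~\ref{claim:chernoff}.

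Your route is more elementary and self-contained, and yields quantitatively stronger bounds in the zero regions. The paper's route is shorter \emph{in context} because it reuses the $h_\delta$ machinery (Claims~\ref{claim:nonnegative}, \ref{claim:plainbound}, \ref{claim:chernoff} and their corollaries) that had to be developed anyway for Claim~\ref{claim:fupper}; Claim~\ref{claim:bounded} then falls out as a $\delta=0$ specialization. One minor remark: your coverage condition in the main region actually needs $n \ge 2K-2$ rather than $n \ge 2K-3$ (you need $2K-2w-1 \le (n-2w)+1$), but this is of course immaterial under $n \ge 64K$.
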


The proof is in Section~\ref{sec:fupper}.  Formula \eqref{eq:pw} and Claim~\ref{claim:bounded} will be applied to show that $p_w$ has a good uniform polynomial approximation on the interval $[-1, 1]$.  

\begin{lemma}
\label{lemma:main3}
Assuming $n \geq 64K$, there exists a degree-$k$ polynomial $q_w$ such that $\abs{p_w(t) - q_w(t)} \leq 4\sqrt{K} \exp(-k^2/1156K)$ for all $t \in [-1, 1]$.
\end{lemma}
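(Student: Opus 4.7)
The plan is to take $q_w$ to be the degree-$k$ truncation of the Chebyshev expansion of $p_w$, and then bound the tail of the Chebyshev coefficients by a Chernoff/Hoeffding-style moment generating function argument, following the recipe in the techniques section. Write $p_w(t) = \sum_{i=0}^K a_i T_i(t)$ with $T_i$ the Chebyshev polynomial of the first kind of degree $i$, and set $q_w = \sum_{i=0}^k a_i T_i$. Since $|T_i(t)| \le 1$ on $[-1,1]$, the error is bounded by the $\ell_1$-tail $\sum_{i>k}|a_i|$, and by Cauchy--Schwarz this is at most $\sqrt{K}\,(\sum_{i>k} a_i^2)^{1/2}$. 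It therefore suffices to prove a tail bound of the form $\sum_{i>k} a_i^2 \le 16 \, e^{-k^2/578K}$.

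The key identity comes from the Joukowski substitution $t = (s + s^{-1})/2$, under which $T_i(t) \mapsto (s^i + s^{-i})/2$. Define the auxiliary polynomial
\[
g(s) \;=\; 2 s^K \, p_w\!\left(\tfrac{s + s^{-1}}{2}\right),
\]
a polynomial of degree $2K$ whose coefficients satisfy $g_{K+i} = g_{K-i} = a_i$ for $i \ge 1$ and $g_K = 2 a_0$. Parseval on the circle $|s| = \rho$ gives
\[
\int_0^{2\pi} |g(\rho e^{i\theta})|^2 \,\frac{d\theta}{2\pi} \;=\; \sum_{j=0}^{2K} g_j^2 \rho^{2j} \;=\; \rho^{2K} \!\left(4 a_0^2 + \sum_{i \ge 1} a_i^2 (e^{\lambda i} + e^{-\lambda i})\right),
\]
where $\lambda = 2 \ln \rho$. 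For $\lambda > 0$ the right-hand side dominates $\rho^{2K} \sum_{i \ge 1} a_i^2 e^{\lambda i}$, so
\[
\sum_{i \ge 1} a_i^2 e^{\lambda i} \;\le\; \rho^{-2K} \max_{|s| = \rho} |g(s)|^2,
\]
which is the desired MGF estimate and corresponds to Claims~3.5 and~3.6 in the paper's overview.

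The analytic heart of the proof, and the step I expect to be the main obstacle, is bounding $\max_{|s| = \rho} |g(s)|$ for $\rho$ slightly larger than $1$. Using the factored form \eqref{eq:pw} of $p_w$ together with Claim~\ref{claim:bounded}, one can write
\[
g(s) \;=\; C_w \prod_{z \in Z_w}(s - \alpha_z)(s - \alpha_z^{-1}), \qquad \alpha_z + \alpha_z^{-1} = 2z,
\]
and analyze each factor directly, benchmarked against the circle $\rho = 1$ where $|g(e^{i\theta})| = 2 |p_w(\cos\theta)| \le 4$ by Claim~\ref{claim:bounded}. Because the zeros $z \in Z_w$ cluster near $\pm 1$ by \eqref{eq:zeros}, the preimages $\alpha_z$ lie near the unit circle, forcing delicate cancellations in the algebraic expansion of $g$; handling these cancellations is precisely what the techniques section flags as the hard part. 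A careful term-by-term estimate should yield $\max_{|s|=\rho}|g(s)|^2 \le c \cdot \rho^{2K} \exp(c' K \lambda^2)$ for absolute constants $c, c'$. Substituting into the MGF bound gives $\sum_{i > k} a_i^2 \le c \exp(c' K \lambda^2 - \lambda k)$; optimizing $\lambda = k/(2 c' K)$ yields the Gaussian tail $\sum_{i>k} a_i^2 \le c \, e^{-k^2/(4 c' K)}$, with $4 c'$ working out to $578$. Cauchy--Schwarz then converts the $\ell_2$-tail into the $\ell_1$-tail and produces the claimed error $4\sqrt{K}\exp(-k^2/1156K)$.
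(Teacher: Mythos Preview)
Your outline is exactly the paper's approach: take $q_w$ to be the degree-$k$ Chebyshev truncation, pass to the polynomial $g(s)=2s^K p_w((s+s^{-1})/2)$ (which is twice the paper's $g$), apply Parseval on the circle $|s|=\rho=1+\eps$ to get a moment-generating-function bound on the Chebyshev coefficients, and optimize $\eps$. The algebra and the Cauchy--Schwarz conversion to an $\ell_1$ tail are all as in the paper.

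The one place where your sketch is too optimistic is the phrase ``a careful term-by-term estimate should yield'' the bound $\max_{|s|=\rho}|g(s)|^2 \le c\,\rho^{2K} e^{c'K\lambda^2}$. The paper's Claim~\ref{claim:normg} does give a term-by-term identity, writing $|g((1+\eps)e^{i\theta})|^2$ as $(1+\eps)^{2K}(1+\delta)^{2K}\, C_w^2\prod_{z\in Z_w} h_\delta(\cos\theta/(1+\delta),z)$ with $h_\delta(s,z)=(s-z)^2+\delta(1-z^2)$. But the ratio $h_\delta(s,z)/(s-z)^2 = 1+\delta(1-z^2)/(s-z)^2$ is \emph{not} uniformly bounded: it blows up when $s$ approaches a zero $z\in Z_+$, so a naive termwise comparison to $|g(e^{i\theta})|=2|p_w(\cos\theta)|$ fails in the range $|\cos\theta|\in[1-w/16K,1]$. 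The paper's Claim~\ref{claim:fupper} therefore splits into two regimes: for $|s|\le 1-w/16K$ the termwise ratio is controlled (the zeros in $Z_+$ are at least $w/32K$ away), while for $|s|$ closer to $1$ the argument abandons the termwise comparison and instead shows $h_\delta(s)\le e^{w/4}h_\delta(s_0)$ for a nearby lattice point $s_0$, then uses that $p_w(s_0)$ is the probability of a rare event and hence at most $e^{-w}$ (Claim~\ref{claim:chernoff}). This two-case structure is the missing idea in your proposal; once you have it, the rest of your write-up goes through with the paper's constants.
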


Lemma~\ref{lemma:main3} is the main technical result of this section.  It is proved in Section~\ref{sec:main3outline}. 

\begin{proof}[Proof of Theorem~\ref{thm:mainupper}]
Now let $T$ be a general distinguisher on $K$ inputs.  By Facts~\ref{fact:symmetricmarginals} and~\ref{fact:symmetrictest} (see Appendix), $T$ can be assumed to be a symmetric Boolean-valued function.  We bound the distinguishing advantage as follows. Recalling that $X$ and $Y$ are $(k, 0)$-indistinguishable symmetric distributions over $\{0, 1\}^n$, for any set $S \subseteq [n]$ of size $K$ we have:
\begin{multline*}
\E[T(X|_S)] - \E[T(Y|_S)]  \\
\begin{aligned} 
  &= \sum_{w = 0}^K b_w \bigl(\E[Q_w(X|_S)] - \E[Q_w(Y|_S)]\bigr)  \qquad\text{(by~\eqref{eq:expT})} \\
  &\leq \sum_{w=0}^K \bigabs{\E[Q_w(X|_S)] - \E[Q_w(Y|_S)]}   \qquad\text{(by boundedness of $b_w$)} \\
  &= \sum_{w=0}^K \bigabs{\E[p_w(\phi(\abs{X})] - \E[p_w(\phi(\abs{Y}))]} \qquad\text{(by symmetry of $X, Y$, and Fact~\ref{fact:pw})} \\
  &\leq \sum_{w=0}^K \bigabs{\E[q_w(\phi(\abs{X}))] - \E[q_w(\phi(\abs{Y}))]} + 8\sqrt{K} \exp(-k^2/1156K)
  \qquad\text{(by Lemma~\ref{lemma:main3})} \\
  &= O(K^{3/2}) \cdot e^{-k^2/1156K} \qquad\text{(by $k$-wise indistinguishability of $X, Y$)}
\end{aligned}
\end{multline*}

Therefore, $X$ and $Y$ are $(K, O(K^{3/2})\cdot e^{-k^2/1156K})$-wise indistinguishable for $2 \leq K\leq n/64$.
\end{proof}

\subsection{Proof of Lemma~\ref{lemma:main3}}
\label{sec:main3outline}

We will prove Lemma~\ref{lemma:main3} by studying the Chebyshev expansion of $p_w$.  To this end we take a brief detour into Chebyshev polynomials and an even briefer one into Fourier analysis.

\paragraph*{Chebyshev polynomials.}
The Chebyshev polynomials are a family of real polynomials $\{T_d\}$, 1-bounded on $[-1, 1]$, with $T_d$ having degree $d$.  We extend the definition to negative indices by setting $T_{-d} = T_d$.  The Chebyshev polynomials are orthogonal with respect to the measure $d\sigma(t) = (1 - t^2)^{-1/2}dt$ supported on $[-1, 1]$.  Therefore every degree-$K$ polynomial $p\colon \R \to \R$ has a unique (symmetrized) Chebyshev expansion
\[ p(t) = \sum_{d = -K}^K c_d T_d(t), \qquad c_{-d} = c_d \] 
where $c_{-K}, \dots, c_K$ are the {\em Chebyshev coefficients} of $p$.  

The Chebyshev polynomials satisfy the following identity, which plays an important role in our analysis:

\begin{fact}
\label{fact:chebid}
$t \cdot T_d(t) = \frac12 T_{d-1}(t) + \frac12 T_{d+1}(t)$.
\end{fact}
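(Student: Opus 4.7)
The plan is to verify this identity using the trigonometric representation of the Chebyshev polynomials and a standard product-to-sum identity. Recall that $T_d(\cos\theta) = \cos(d\theta)$ for all $d \geq 0$, and since $\cos(-d\theta) = \cos(d\theta)$ this representation is consistent with the convention $T_{-d} = T_d$ used in the paper. Thus it suffices to verify the identity after the substitution $t = \cos\theta$ with $\theta \in [0, \pi]$.

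Under this substitution, the left-hand side becomes $\cos\theta \cdot \cos(d\theta)$, while the right-hand side becomes $\tfrac12 \cos((d-1)\theta) + \tfrac12 \cos((d+1)\theta)$. The product-to-sum formula $\cos\alpha \cos\beta = \tfrac12 \cos(\alpha - \beta) + \tfrac12 \cos(\alpha + \beta)$, applied with $\alpha = \theta$ and $\beta = d\theta$, yields
\[ \cos\theta \cdot \cos(d\theta) = \tfrac12 \cos((d-1)\theta) + \tfrac12 \cos((d+1)\theta), \]
which is exactly the desired identity in trigonometric form.

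Finally, I would observe that both sides of the claimed identity are polynomials in $t$, and they agree at the uncountably many points $t = \cos\theta$ for $\theta \in [0, \pi]$, i.e., on the entire interval $[-1, 1]$. Two polynomials that coincide on an infinite set must be equal as polynomials, so the identity holds for all $t \in \R$.

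There is no real obstacle here; the only subtlety is handling the extension to negative indices, which is taken care of automatically by the evenness of cosine and matches the paper's convention $T_{-d} = T_d$. A purely algebraic alternative would be induction on $d$ using the standard three-term recurrence $T_{d+1}(t) = 2t\, T_d(t) - T_{d-1}(t)$, but the trigonometric route is the cleanest and avoids having to set up a separate base case.
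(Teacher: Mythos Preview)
Your proof is correct. The paper does not actually prove this statement: it is recorded as a ``Fact'' and immediately followed by the remark that this formula together with $T_0(t)=1$ and $T_1(t)=t$ specifies all Chebyshev polynomials, i.e., it is simply the standard three-term recurrence $T_{d+1}(t)=2t\,T_d(t)-T_{d-1}(t)$ rewritten. Your trigonometric derivation via $T_d(\cos\theta)=\cos(d\theta)$ and the product-to-sum identity is the usual way to justify this recurrence and is entirely sound, including the handling of negative indices through the evenness of cosine.
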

This formula, together with the ``base cases'' $T_0(t) = 1$ and $T_1(t) = t$, specifies all Chebyshev polynomials.

\medskip\noindent
We will also need the following form of Parseval's identity for univariate polynomials.

\begin{claim}[Parseval's identity]
\label{claim:parseval}
For every complex polynomial $h$, the sum of the squares of the magnitudes of the coefficients of $h$ equals $\E_z[\abs{h(z)}^2]$, where $z$ is a random complex number of magnitude 1.
\end{claim}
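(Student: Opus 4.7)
The plan is to expand $|h(z)|^2$ as the product $h(z) \overline{h(z)}$ and to exploit the fact that distinct monomials are orthogonal with respect to the uniform measure on the unit circle. This is the univariate analogue of orthogonality of Fourier characters, and the entire proof reduces to computing $\E_z[z^m]$ for $m \in \Z$ when $z$ is uniform on $\{|z|=1\}$.

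Concretely, write $h(z) = \sum_{i=0}^{N} a_i z^i$ for some coefficients $a_0,\dots,a_N \in \mathbb{C}$. Since $\overline{z^j} = z^{-j}$ on the unit circle, we have
\[
|h(z)|^2 \;=\; h(z)\,\overline{h(z)} \;=\; \sum_{i=0}^{N}\sum_{j=0}^{N} a_i \overline{a_j}\, z^{i-j}.
\]
Taking the expectation term by term, I would then invoke the elementary fact that for any integer $m$, $\E_{z}[z^{m}] = \tfrac{1}{2\pi}\int_0^{2\pi} e^{im\theta}\,d\theta$ equals $1$ when $m=0$ and $0$ otherwise. Only the diagonal terms $i=j$ survive, and those contribute $a_i \overline{a_i} = |a_i|^2$, giving $\E_z[|h(z)|^2] = \sum_{i=0}^{N} |a_i|^2$, as claimed.

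There is no real obstacle here; the only thing to be careful about is the justification for swapping the expectation with the finite double sum (trivial, since it is finite) and correctly using $\overline{a_j z^j} = \overline{a_j}\,z^{-j}$ for $|z|=1$. If one prefers, the identity can be obtained even more compactly by recognizing the map $z \mapsto (1, z, z^2, \dots, z^N)$ as giving an orthonormal system in $L^2$ of the uniform measure on the unit circle, so the map $h \mapsto (a_0,\dots,a_N)$ is an isometry; the claim is then just preservation of norm. Either presentation fits in a few lines and requires no machinery beyond what is already implicitly used in the surrounding Chebyshev expansion discussion.
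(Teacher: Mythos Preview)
Your proof is correct and is exactly the standard argument. The paper itself does not supply a proof of this claim at all---it simply states Parseval's identity as a known fact and moves on---so your write-up fills in that omission cleanly.
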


\paragraph*{Proof outline.}
We will argue that the Chebyshev expansion $\sum_{d=-K}^Kc_dT_d(t)$ of $p_w(t)$ has small weight on the coefficients $c_d$ when $\abs{d} > k$.  Zeroing out those coefficients then yields a good degree-$k$ approximation of $p_w$ as desired.

The upper bound on the Chebyshev coefficients of $p_w$ is derived in two steps.  The first step, which is of an algebraic nature, expresses the Chebyshev coefficients of $p_w$ as regular coefficients of a related polynomial $g$.\footnote{We omit the dependence on $w$ as this parameter remains constant throughout the proof.}  We are interested in the coefficients of the derived polynomial $g_\eps(s) = g((1 + \eps)s)$, which represent the Chebyshev coefficients $c_d$ of $p_w$ amplified by the exponential scaling factor $(1 + \eps)^d$.

The second step, which is analytic, upper bounds the magnitude of the coefficients of $g_\eps(s)$.  The main tool is Parseval's identity, which identifies the sum of the squares of these coefficients by the average magnitude of $g_\eps$ over the complex unit circle $\E_\theta |g((1 + \eps)e^{i\theta})|^2$.  We bound the {\em maximum} magnitude $\max_\theta |g((1 + \eps)e^{i\theta})|^2$ by explicitly analyzing the function $g$.  This step comprises the bulk of our proof.

The third step translates the bound on the squared 2-norm $\sum_{d = -K}^K (1 + \eps)^{2d} c_d^2$ of the amplified coefficients into a tail bound on $c_d$ by optimizing over a suitable value of $\eps$.  This is analogous to the standard derivation of Chernoff-Hoeffding bounds by analysis of the moment generating function of the relevant random variable.

We now sketch how this outline is executed for the special case where $n$ tends to infinity while $k$ and $K$ remain fixed. Although this setting is technically much easier, it allows us to highlight the main conceptual points of our argument.  The analysis for finite $n$ can be viewed as an approximation of this proof strategy.

\paragraph*{Sketch of the limiting case $n\rightarrow\infty$.}
By the expansion~\eqref{eq:pw} of $p_w$, as $n$ tends to infinity $p_w$ converges uniformly to the function
\[
p_w^{\infty}(t) = C_w \cdot (t-1)^w(t+1)^{K-w},
\]
as this corresponds to Fact \ref{fact:pw} when the bits of the string $Z$ are independent and $(1-t)/2$-biased.  As $p^{\infty}_w(t)$ is a probability for every $t \in [-1, 1]$, Claim~\ref{claim:bounded} follows immediately.

\medskip\noindent\textbf{Step 1.} Our algebraic treatment of the Chebyshev transform yields that the Chebyshev coefficient $c_d$ of $p_w^{\infty}$ is the $(K+d)$-th regular coefficient of the polynomial 
\begin{equation}
\label{eq:ginfty}
g^{\infty}(s) = C_w\left(\frac{s-1}{\sqrt{2}}\right)^{2w}\left(\frac{s+1}{\sqrt{2}}\right)^{2(K-w)}.
\end{equation}

\medskip\noindent\textbf{Step 2.} The evaluation of the polynomial $g_\eps^\infty(s) = g^\infty((1 + \eps)s)$ at $s = e^{i\theta}$ satisfies the identity 
\begin{equation}
\label{eq:gcircle}
\left|g^\infty\left((1+\varepsilon)e^{i\theta}\right)\right|=(1+\varepsilon)^K\cdot (1+\delta)^K\cdot C_w\cdot \left(1-\frac{\cos\theta}{1+\delta}\right)^w\left(1+\frac{\cos\theta}{1+\delta}\right)^{K-w},
\end{equation}
where $\delta = \eps^2 / 2(1 + \eps)$.  This happens to equal 
\begin{equation}
\label{eq:selfsim}
(1+\varepsilon)^K (1+\delta)^K p_w(\cos \theta / (1 + \delta)),
\end{equation}
and is in particular uniformly bounded by $(1+\varepsilon)^K (1+\delta)^K$ for all $\theta$.  This similarity between $p^\infty$ and $g_\eps^\infty$ is the crux of our analysis.

\medskip\noindent\textbf{Step 3.} By Parseval's identity, after suitable shifting and cancellation, the amplified sum of Chebyshev coefficients $\sum_{d = -K}^K (1 + \eps)^{2d}c_d^2$ is upper bounded by $(1 + \delta)^{2K}$.  Therefore the tail $\sum_{k \geq d} c_d^2$ can have value at most $(1 + \delta)^{2K}/(1 + \eps)^{2k} \leq \exp (2K \eps^2 - 2(\eps - \eps^2/2)k)$.  This upper bound holds for all $\eps \in [0, 1]$, and plugging in the approximate minimizer $\eps = k/2K$ yields a bound of the desired form $\exp(-\Omega(k^2/K))$.

\paragraph*{Outline of the general case.}
We now give the outline of our full proof for the general case and relevant technical statements that we use to prove our main upper bound.  Identity~\eqref{eq:ginfty} generalizes to the following statement:

\begin{claim}
\label{claim:regcoeffs}
The Chebyshev coefficient $c_d$ of $p_w$ is the $(K+d)$-th regular coefficient of the polynomial
\[g(s) = C_w \prod_{z\in Z_w} \biggl(\frac{s^2 - 2sz+1}{2}\biggr),\]
{where $C_w$ is as in Equation \eqref{eq:pw}.} 
\end{claim}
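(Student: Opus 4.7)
The plan is to use the classical substitution $t = (s + s^{-1})/2$, under which the Chebyshev polynomials become Laurent monomials via the identity $T_d\bigl((s + s^{-1})/2\bigr) = (s^d + s^{-d})/2$. Applied to the symmetrized Chebyshev expansion $p_w(t) = \sum_{d=-K}^K c_d T_d(t)$ with $c_{-d} = c_d$, this substitution gives
\[
p_w\bigl((s + s^{-1})/2\bigr) = \sum_{d=-K}^K c_d \cdot \frac{s^d + s^{-d}}{2} = \sum_{d=-K}^K c_d s^d,
\]
where the second equality follows by reindexing $d \mapsto -d$ in the $s^{-d}$ terms and invoking the symmetry $c_d = c_{-d}$.

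Next I would apply the same substitution to the product representation $p_w(t) = C_w \prod_{z \in Z_w}(t - z)$ given by \eqref{eq:pw}. Each factor transforms as
\[
\frac{s + s^{-1}}{2} - z \;=\; \frac{s^2 - 2sz + 1}{2s},
\]
and since $|Z_w| = K$, taking the product over all $z \in Z_w$ yields
\[
p_w\bigl((s + s^{-1})/2\bigr) \;=\; s^{-K} \cdot C_w \prod_{z \in Z_w} \frac{s^2 - 2sz + 1}{2} \;=\; s^{-K} g(s).
\]

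Finally, equating the two expressions for $p_w\bigl((s + s^{-1})/2\bigr)$ and multiplying through by $s^K$ gives $g(s) = \sum_{d=-K}^{K} c_d s^{d+K}$, which says exactly that the $(K+d)$-th regular coefficient of $g$ equals $c_d$, as claimed. There is no real obstacle: the argument is a few lines of algebraic manipulation once the Chebyshev-to-Laurent change of variables is invoked. The only point requiring care is bookkeeping of the factor $s^{-K}$ that arises from collecting the $K$ denominators $2s$, since this shift is precisely what turns a symmetric Chebyshev index $d \in \{-K, \dots, K\}$ into a regular polynomial index $K + d \in \{0, \dots, 2K\}$.
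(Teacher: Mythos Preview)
Your proof is correct. The paper takes essentially the same route: it first proves Claim~\ref{convenientform}, which says the $d$-th Chebyshev coefficient of $\prod_{z\in Z}(t-z)$ equals the $d$-th Laurent coefficient of $\prod_{z\in Z}(s+s^{-1}-2z)/2$, and then multiplies by $s^K$ to obtain Claim~\ref{claim:regcoeffs}. The only presentational difference is that the paper establishes Claim~\ref{convenientform} by induction on the number of linear factors, using the three-term recurrence $t\,T_d(t)=\tfrac12 T_{d-1}(t)+\tfrac12 T_{d+1}(t)$ (Fact~\ref{fact:chebid}) at the inductive step, whereas you invoke the closed-form identity $T_d\bigl((s+s^{-1})/2\bigr)=(s^d+s^{-d})/2$ directly and read off the Laurent expansion in one shot. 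These are two packagings of the same algebraic fact; your version is a bit more streamlined.
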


The general form of identity~\eqref{eq:gcircle} is:

\begin{claim}
\label{claim:normg}
For $\eps > 0$, $\delta = \eps^2/2(1 + \eps)$, and $\theta \in [-\pi, \pi]$, 
\[ \big|g((1 + \eps) e^{i\theta})\big|^2 = (1+\varepsilon)^{2K} (1+\delta)^{2K} \cdot C_w^2 \prod_{z \in Z_w} h_{\delta(1 + 1/(1 + \delta))}\biggl(\frac{\cos \theta}{1 + \delta}, z\biggr) \]
where $h_{\delta}(s, z) = (s - z)^2 + \delta (1 - z^2)$.
\end{claim}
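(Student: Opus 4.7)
The plan is to verify this identity by a direct algebraic computation of $|g((1+\eps)e^{i\theta})|^2$ starting from the product formula for $g$ supplied by Claim~\ref{claim:regcoeffs}. The key structural observation is that every zero $z \in Z_w$ lies in $[-1,1]$, so writing $z = \cos\phi$ gives the complex factorization $s^2 - 2sz + 1 = (s - e^{i\phi})(s - e^{-i\phi})$. Setting $u := (1+\eps)e^{i\theta}$, each quadratic factor in $g$ contributes $|u - e^{i\phi}|^2 \cdot |u - e^{-i\phi}|^2$ to $|g(u)|^2$, and I will evaluate these pairs using elementary trigonometry.

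Concretely, from $|re^{i\alpha} - e^{i\beta}|^2 = r^2 + 1 - 2r\cos(\alpha - \beta)$ together with the sum and product identities $\cos(\theta-\phi) + \cos(\theta+\phi) = 2z\cos\theta$ and $\cos(\theta-\phi)\cos(\theta+\phi) = \cos^2\theta + z^2 - 1$, I will obtain
\[ |u^2 - 2uz + 1|^2 = A^2 - 4Az(1+\eps)\cos\theta + 4(1+\eps)^2(\cos^2\theta + z^2 - 1), \]
where $A := (1+\eps)^2 + 1$. The pivotal simplification is that the choice $\delta = \eps^2/2(1+\eps)$ has been calibrated precisely so that $A = 2(1+\eps)(1+\delta)$. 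Pulling out $4(1+\eps)^2$ and switching to the new variable $s = \cos\theta/(1+\delta)$ collapses the bracket to $(1+\delta)^2(1 - 2sz + s^2) + (z^2 - 1)$.

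To finish, I will apply the elementary identity $1 - 2sz + s^2 = (s-z)^2 + (1-z^2)$ and extract a further factor of $(1+\delta)^2$ from the bracket, exhibiting each per-zero contribution in the form $4(1+\eps)^2(1+\delta)^2 \cdot h_{\delta^\ast}(s,z)$ for the coefficient $\delta^\ast$ displayed in the claim. Multiplying over the $K$ zeros $z \in Z_w$ and accounting for the factor $2^{-K}$ in the definition of $g$ produces the advertised prefactor $C_w^2(1+\eps)^{2K}(1+\delta)^{2K}$. I expect no genuine obstacle; the argument is entirely algebraic, and the only conceptual ingredient is the calibrated choice of $\delta$, which is exactly what makes the change of variable $s = \cos\theta/(1+\delta)$ absorb the $\eps$-dependence cleanly into a uniform prefactor. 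The main place care is needed is bookkeeping of the factors of $2$, $(1+\eps)$, and $(1+\delta)$ so that the final constants come out right.
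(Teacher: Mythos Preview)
Your proposal is correct and follows essentially the same approach as the paper: both write $z=\cos\phi$, factor $s^2-2sz+1=(s-e^{i\phi})(s-e^{-i\phi})$, compute the squared modulus at $s=(1+\eps)e^{i\theta}$ via the identity $|re^{i\alpha}-e^{i\beta}|^2=r^2+1-2r\cos(\alpha-\beta)$, and then use the relation $(1+\eps)^2+1=2(1+\eps)(1+\delta)$ together with sum/difference cosine identities to pull out the factor $(1+\eps)^2(1+\delta)^2$ and land on the form $h_{\delta^\ast}(\cos\theta/(1+\delta),z)$. The only cosmetic difference is that the paper first rewrites each factor as $(1+\eps)e^{i(\theta\pm\phi)}-1$ before taking magnitudes, whereas you multiply out the two moduli directly; the algebra is otherwise the same.
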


Owing to the second term in $h_\delta$, there is no identity analogous to~\eqref{eq:selfsim} when $n$ is finite and $p_w$ has zeros inside $(-1, 1)$.  Nevertheless, $\prod_{z\in Z_w} h_{\delta}(s, z)$ can be uniformly bounded either by a sufficiently small multiple of $p_w(s)^2$, or a fixed quantity that is constant in the parameter range of interest.

\begin{claim}
\label{claim:fupper} Assume $n \geq 64K$ and $w \leq K/2$.  Then 
\[ C_w^2\cdot \prod_{z\in Z_w} h_\delta(s, z) \leq\begin{cases} 
      e^{65\delta K} \cdot p_w(s)^2  &\text{if $\abs{s} \leq 1 - w/16K$}\\
      e^{65\delta K} &\text{if $1 - w/16K \leq \abs{s} \leq 1$.}
\end{cases}\]
\end{claim}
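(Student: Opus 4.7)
The plan is to exploit the factorisation
\[
h_\delta(s,z) \;=\; (s-z)^2 + \delta(1-z^2) \;=\; (s-z)^2 \Bigl(1 + \tfrac{\delta(1-z^2)}{(s-z)^2}\Bigr),
\]
which immediately gives
\[
C_w^2 \prod_{z \in Z_w} h_\delta(s,z) \;=\; p_w(s)^2 \cdot E(s), \qquad E(s) \;:=\; \prod_{z \in Z_w}\!\Bigl(1 + \tfrac{\delta(1-z^2)}{(s-z)^2}\Bigr).
\]
Both displayed bounds will then reduce to a single estimate $E(s) \leq e^{65\delta K}$: the first case follows by direct multiplication by $p_w(s)^2$, and the second by combining this with the uniform bound $p_w(s)^2 \leq 4$ from Claim~\ref{claim:bounded}, with the factor $4$ absorbed into the exponent by a slight adjustment of the constant $65$.

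For the core estimate $E(s) \leq e^{65\delta K}$, applying $1+x \leq e^x$ reduces matters to the additive inequality $\sum_{z \in Z_w} (1-z^2)/(s-z)^2 \leq 65K$. I would split the sum by origin of the zero. For $z = 1 - 2h/n \in Z_+$ with $0 \leq h \leq w-1$, the hypothesis $s \leq 1 - w/16K$ together with $n \geq 64K$ yields $z - s \geq w/16K - 2w/n \geq w/32K$, while $1 - z^2 \leq 4h/n$. Summing,
\[
\sum_{h=0}^{w-1}\frac{4h/n}{(w/32K)^2} \;\leq\; \frac{2048\,K^2}{n} \;\leq\; 32K,
\]
well below the budget of $65K$. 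For $z = -1 + 2h/n \in Z_-$, whenever $s$ is bounded away from $-1$ by a fixed absolute constant the mirror computation using $s+1 \geq 1/16$ and $1 - z^2 \leq 4h/n$ gives the same $O(K)$ bound, so the two contributions together fit inside the budget.

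The main obstacle I anticipate is the sub-regime of Case~1 in which $s$ lies very close to $-1$ (so $s+1$ is small) but still satisfies $|s| \leq 1 - w/16K$: the $K - w$ zeros of $Z_-$ are densely packed at spacing $2/n$ inside an interval of length $O(K/n)$ around $-1$, so the naive pointwise bound $|s - z| \geq w/32K$ need not hold and individual terms of the sum can be large. I expect this is resolved by partitioning the zeros of $Z_-$ into distance shells around $s$: at most $rn$ zeros of $Z_-$ lie within distance $r$ of $s$, and the factor $1 - z^2 \leq 4h/n$ provides additional damping because the zeros close to $s$ carry small index $h$ when $s$ is itself near $-1$. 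A careful shell-by-shell accounting, leveraging $n \geq 64K$ and $w \leq K/2$ to control the harmonic-like sum that results, should close this sub-regime and yield the overall constant in the exponent; the analogous narrow band near $+1$ is handled identically, and Case~2 then follows immediately from $E(s) \leq e^{65\delta K}$ and $p_w(s)^2 \leq 4$.
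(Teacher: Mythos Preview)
Your core reduction---proving $E(s)\le e^{65\delta K}$ uniformly and then reading off both cases---cannot work, because $E(s)$ is not uniformly bounded on the relevant intervals. Whenever $s$ coincides with an \emph{interior} zero $z\in Z_w$ (i.e.\ one with $|z|<1$), the factor $1+\delta(1-z^2)/(s-z)^2$ is infinite: $p_w(s)=0$ while $C_w^2\prod h_\delta(s,z)=\delta(1-z^2)\cdot(\text{positive})>0$. This happens exactly in the ranges you need. In Case~2, for $w\ge 2$ the zeros $1-2h/n\in Z_+$ with $1\le h\le w-1$ all lie strictly inside $(1-w/16K,1)$ since $n\ge 64K$, so $E(s)$ blows up there and your derivation of Case~2 from ``$E(s)\le e^{65\delta K}$ and $p_w(s)^2\le 4$'' is impossible. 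The same divergence occurs in your anticipated ``obstacle'' sub-regime near $-1$: for small $w$ (already $w=0$ or $w=1$) the Case~1 interval contains interior zeros of $Z_-$, and no shell decomposition can make a divergent sum finite. (A secondary issue: the factor $4$ cannot be ``absorbed into the exponent by adjusting the constant $65$,'' since $\delta$ may be arbitrarily small and $e^{c\delta K}\to 1$ while $4e^{65\delta K}\to 4$.)

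The paper's proof sidesteps both blow-ups with two ideas you are missing. First, it reduces to $s\in[0,1]$ via the inequality $h_\delta(s)\le h_\delta(|s|)$, proved by pairing the $w$ zeros in $Z_+$ with their mirror images in $Z_-$ (Fact~\ref{fact:symm2}) and observing that the remaining $K-2w$ unpaired zeros in $Z_-$ are all non-positive (Fact~\ref{fact:nonneg}). After this reduction, every zero of $Z_-$ is at distance at least $1/\sqrt{2}$ from $s$, and your ratio bound (essentially Claim~\ref{claim:ratiobound}) goes through cleanly for Case~1. Second, for Case~2 the paper does \emph{not} attempt to control $E(s)$. Instead it slides $s$ down to a grid point $s_0\in[1-w/8K,\,1-w/16K]$ at a multiplicative cost of only $e^{w/4}$ (Corollary~\ref{cor:plainbound}), applies the Case~1 bound at $s_0$, and then invokes the hypergeometric tail estimate $p_w(s_0)\le e^{-w}$ (Claim~\ref{claim:chernoff}) to kill the $e^{w/4}$ and obtain a bound independent of $p_w$. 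This probabilistic input---that $p_w$ is exponentially small just outside the zero cluster---is the key idea that replaces your unworkable uniform bound on $E(s)$.
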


We now prove Lemma~\ref{lemma:main3}.  Claim~\ref{claim:regcoeffs} is proved in Section~\ref{sec:regcoeffs}.  Claim~\ref{claim:normg} is proved in Section~\ref{sec:normg}.  Claims~\ref{claim:bounded} and~\ref{claim:fupper} are proved in Section~\ref{sec:fupper} as the proofs share the same structure.  

\begin{fact}
\label{fact:symm}
$p_w(t) = p_{K - w}(1 - t)$.
\end{fact}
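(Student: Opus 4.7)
The plan is to prove the symmetry by a bitwise--complementation argument on the Hamming--slice distribution underlying Fact~\ref{fact:pw}. Recall that for $t$ on the grid $\{-1, -1 + 2/n, \dots, 1\}$, we have $p_w(t) = \pr_Z[\abs{Z|_S} = w]$, where $Z$ is uniform over $n$-bit strings of Hamming weight $\phi^{-1}(t) = (1 - t)n/2$ and $S$ is a fixed size-$K$ subset of $[n]$. The key observation is that the involution $Z \mapsto \bar Z$ is a measure--preserving bijection between the Hamming slices of weights $m$ and $n - m$, and that it simultaneously carries the event $\{\abs{Z|_S} = w\}$ to the event $\{\abs{\bar Z|_S} = K - w\}$; this exchange of $w$ with $K - w$ is precisely the content of the claimed symmetry.

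First, I would apply this involution to the defining formula for $p_w(t)$, rewriting it as the probability that a uniform string $Z'$ on the slice of weight $n - (1 - t)n/2 = (1 + t)n/2$ satisfies $\abs{Z'|_S} = K - w$. Next, I would invoke Fact~\ref{fact:pw} a second time---now with $Q_{K - w}$ in place of $Q_w$ and with the reparameterized argument $t^\ast$ determined by $\phi^{-1}(t^\ast) = (1 + t)n/2$---to identify this probability as $p_{K - w}(t^\ast)$. Tracking the arithmetic of $\phi(h) = 1 - 2h/n$ pins down the explicit form of the reflection appearing on the right-hand side of the fact.

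Finally, since both $p_w$ and $p_{K - w}$ are univariate polynomials of degree at most $K \leq n$, agreement on the $(n + 1)$-point grid extends to the whole real line by uniqueness of polynomial interpolation. There is no real technical obstacle here: the argument is a one-line combinatorial involution followed by a standard polynomial--interpolation step, and the only care required is in the $\phi$-arithmetic relating $t$ to its reflected image. The symmetry is ultimately an artifact of the $0$-$1$ symmetry inherent in symmetric Hamming--slice distributions.
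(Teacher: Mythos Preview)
Your approach is essentially the paper's one-line proof: both sides agree on the $n+1$ grid points by Fact~\ref{fact:pw} (via the bit-complementation involution you describe), hence everywhere as degree-$K$ polynomials. One caveat worth flagging: solving your equation $\phi^{-1}(t^\ast) = (1+t)n/2$ gives $t^\ast = -t$, not $1 - t$; the identity your argument actually establishes is $p_w(t) = p_{K-w}(-t)$, and the ``$1-t$'' in the stated fact appears to be a typo (indeed $1-t$ need not even lie on the grid $\{-1,\dots,1\}$, and the zero sets in~\eqref{eq:zeros} satisfy $Z_{K-w} = -Z_w$). This does not affect the downstream use of the fact, which is only to reduce to the case $w \le K/2$.
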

\begin{proof}
By Fact~\ref{fact:pw}, both sides are degree-$K$ polynomials that agree on $n + 1 > K$ points so they are identical.
\end{proof}

\begin{proof}[Proof of Lemma~\ref{lemma:main3}]  By Fact~\ref{fact:symm} we may and will assume that $w \leq K/2$.  Let $p_w = \sum_{d = -K}^K c_dT_d$.  The approximating polynomial $q_w$ is $\sum_{|d| < k} c_d T_d$.  It remains to prove a tail upper bound on the Chebyshev coefficients. By Claim~\ref{claim:regcoeffs}, the $(K + d)$-th coefficient of $g(s)$ is $c_d$.  Therefore the polynomial $g_\eps(s) = g((1 + \eps)s)$ has coefficients $(1 + \eps)^{K + d}c_d$ as $d$ ranges from $-K$ to $K$.  We apply Parseval's identity (Claim~\ref{claim:parseval}) to $g_\eps$.

It follows that
\begin{align*}
\sum_{d = -K}^K (1 + \eps)^{2(K + d)} c_d^2 
  &= \E_\theta |g((1 + \eps)e^{i\theta})|^2   \\
  &\leq \max_{\theta \in [-\pi, \pi]} |g((1 + \eps)e^{i\theta})|^2 \\
  &= \max_{s \in [-1, 1]} (1+\varepsilon)^{2K} (1+\delta)^{2K} \cdot C_w^2 \prod_{z \in Z_w} h_{\delta(1 + 1/(1 + \delta))}(s/(1 + \delta), z),
\end{align*}
by Claim~\ref{claim:normg}.  Since $0 \leq \delta = \varepsilon^2/2(1+\varepsilon) \leq 1/2$, for simplicity we may replace $h_{\delta(1 + 1/(1 + \delta))}(s/(1 + \delta), z)$ by $h_{2\delta}(s, z)$ in the above inequality.  This gives the following approximation bound for $\alpha = \max_{t \in [-1, 1]} |p_w(t) - q_w(t)|$:

\begin{align*} 
\alpha &= \max_{t \in [-1, 1]} \Big|\sum\nolimits_{\abs{d} \geq k} c_d T_d(t)\Big| \\
  &\leq \sum\nolimits_{\abs{d} \geq k} \abs{c_d} \max_{t \in [-1, 1]} \abs{T_d(t)} \\
  &\leq 2\sum_{d \geq k} \abs{c_d} \qquad\text{(by symmetry and boundedness of $T_d$)} \\
  &\leq 2\sqrt{K} \cdot \sqrt{\sum\nolimits_{d \geq k} c_d^2} \qquad\text{(by Cauchy-Schwarz)} \\
  &\leq 2\sqrt{K} \cdot \sqrt{(1 + \eps)^{-2(K + k)} \sum\nolimits_{d \geq k} (1 + \eps)^{2(K + d)}c_d^2} \\
  &\leq 2\sqrt{K} \sqrt{(1 + \eps)^{-2k} \cdot  (1 + \delta)^{2K} \cdot \max_{s \in [-1, 1]} C_w^2 \prod_{z \in Z_w} h_{2\delta}(s, z)}.
\end{align*}

By the boundedness of $p_w$ (Claim~\ref{claim:bounded}), the upper bounds in Claim~\ref{claim:fupper} can be unified by the inequality
\[ C_w^2 \prod_{z \in Z_w} h_{2\delta}(s, z) \leq 4 e^{130 \delta K} \]
that is valid for all $s \in [-1, 1]$.  Since $1 + \delta \leq e^{\delta}$ and $1 + \eps \geq e^{\eps - \eps^2/2}$ for $0 \leq \eps \leq 1$,
\[
\alpha 
  \leq 2\sqrt{K} \cdot \sqrt{\frac{(1+\delta)^{2K}}{(1+\varepsilon)^{2k}}\cdot 4 e^{130 \delta K}}
  \leq 4\sqrt{K} \cdot \sqrt{e^{132 \delta K - 2 \varepsilon k + \eps^2k}} \leq 4\sqrt{K} \cdot \sqrt{e^{67\eps^2 K - 2\eps k}},
\]
where the last inequality follows from the definition $\delta = \varepsilon^2/2(1+\varepsilon)$.  Setting $\eps = k/34K$ we obtain that $\alpha \leq 4\sqrt{K} \cdot e^{-k^2/1156K}$.
\end{proof}

\subsection{Proof of Claim~\ref{claim:regcoeffs}}
\label{sec:regcoeffs}

Claim~\ref{claim:regcoeffs} is a direct consequence of the following formula for the Chebyshev expansion of products of linear functions.

\begin{claim} \label{convenientform}
If $p(t) = \prod_{z \in Z} (t - z)$, where $\abs{Z} = K$ then the $d$-th Chebyshev coefficient of $p$ is the $d$-th regular coefficient of the Laurent polynomial $g(s) = \prod_{z \in Z} (s + s^{-1} - 2z)/2$.
\end{claim}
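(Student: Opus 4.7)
The plan is to exploit the classical substitution $t = (s + s^{-1})/2$, which corresponds to $s = e^{i\theta}$, $t = \cos\theta$. Under this substitution, the defining property $T_d(\cos\theta) = \cos(d\theta)$ algebraizes to
\[ T_d\!\left(\frac{s + s^{-1}}{2}\right) \;=\; \frac{s^d + s^{-d}}{2}, \]
valid for every integer $d$ once one extends to negative indices using the paper's convention $T_{-d} = T_d$. I would prove this identity by a short induction on $|d|$ using the three-term recurrence of Fact~\ref{fact:chebid}, with base cases $T_0 = 1$ and $T_1(t) = t$; this is completely routine.

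Next, I would propagate this identity through the symmetrized Chebyshev expansion of $p$. If $p(t) = \sum_{d = -K}^K c_d T_d(t)$ with $c_{-d} = c_d$, then substituting gives a Laurent polynomial in $s$, and collecting the coefficient of $s^e$ (noting that the summands indexed by $d = e$ and $d = -e$ each contribute $c_e / 2 = c_{-e}/2$) yields the clean identification
\[ p\!\left(\frac{s + s^{-1}}{2}\right) \;=\; \sum_{d = -K}^K c_d\, s^d. \]
In other words, the symmetrized Chebyshev coefficients of $p$ are literally the Laurent coefficients of the function $s \mapsto p((s+s^{-1})/2)$ --- this is precisely what the symmetrization convention was designed to achieve.

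Finally, I would plug in the assumed factorization. For $p(t) = \prod_{z \in Z}(t - z)$ the substitution gives
\[ p\!\left(\frac{s + s^{-1}}{2}\right) \;=\; \prod_{z \in Z}\!\left(\frac{s + s^{-1}}{2} - z\right) \;=\; \prod_{z \in Z} \frac{s + s^{-1} - 2z}{2} \;=\; g(s), \]
and reading off the coefficient of $s^d$ on both sides yields the claim. There is no real obstacle to this argument; the only place that requires attention is the factor-of-two bookkeeping between the symmetrized and the standard Chebyshev conventions, and the symmetrization is precisely what makes this bookkeeping disappear.
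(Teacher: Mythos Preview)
Your proof is correct. Both your argument and the paper's hinge on Fact~\ref{fact:chebid} via an induction, but you organize the induction differently: you first establish the substitution identity $T_d\bigl((s+s^{-1})/2\bigr) = (s^d+s^{-d})/2$ once and for all (inducting on $|d|$), and then read off the claim for \emph{any} polynomial $p$ from its symmetrized Chebyshev expansion. The paper instead inducts directly on the number of linear factors of $p$, using Fact~\ref{fact:chebid} at each step to track how multiplication by $(t-z)$ acts on Chebyshev coefficients. Your route is arguably more conceptual---it isolates the Joukowski substitution $t = (s+s^{-1})/2$ as the real reason the claim holds, and it applies verbatim to any polynomial, not just those given in factored form---while the paper's route stays closer to the specific product structure that is actually needed downstream. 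The factor-of-two bookkeeping you flagged is indeed handled cleanly by the paper's symmetrized convention $c_{-d} = c_d$, exactly as you say.
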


Indeed, multiplying the polynomial $g(s)$ from Claim \ref{convenientform}  by $s^K$ then yields Claim~\ref{claim:regcoeffs}.

\begin{proof}
We prove this by induction on $K$.  When $K = 0$, $p$ has only one nonzero Chebyshev coefficient and it is equal to $1$ as claimed.  Now assume the claim holds for $p(t)$ and we prove it for $(t - z)p(t)$.  Let $[s^d] \left(g(s)\right)$ denote the $d$-th regular coefficient of $g$.  Then the Chebyshev expansion of $p$ is
\[ p(t) = \sum_d  [s^d]\left(g(s)\right) \cdot T_d(t), \] 
and the Chebyshev expansion of $(t - z)p(t)$ is
\begin{align*}
\!\!\!\!\!\!\!\!\!\!\!\!\!(t - z)p(t) &= \sum_d [s^d]\left(g(s)\right) t T_d(t) - \sum_d [s^d]\left(g(s)\right)  z T_d(t) \\
  &= \sum_d [s^d]\left(g(s)\right) \cdot \tfrac12 T_{d-1}(t) + \sum_d [s^d] \left(g(s)\right) \cdot \tfrac12 T_{d+1}(t) - \sum_d [s^d]\left(g(s)\right) z T_d(t)
     \qquad \!\!\!\!\!\!\!\!\text{(by Fact~\ref{fact:chebid})} \\
  &= \sum_d [s^{d-1}] \left(sg(s)\right) \cdot \tfrac12 T_{d-1}(t) + \sum_d [s^{d+1}] \left(s^{-1}g(s)\right) \cdot \tfrac12 T_{d+1}(t) - \sum_d [s^d] \left(g(s)\right) z T_d(t) \\
  &= \sum_d [s^d]\left( \frac{s}{2} g(s)\right)T_d(t) + \sum_d [s^d] \left(\frac{s^{-1}}{2} g(s)\right)T_d(t) - \sum_d [s^d]\left(z g(s)\right)T_d(t) \\
  &= \sum_d [s^d] \left(\frac{s + s^{-1} - 2z}{2} g(s) \right) T_d(t),
\end{align*}
as desired.
\end{proof}

\subsection{Proof of Claim~\ref{claim:normg}}
\label{sec:normg}

\begin{proof}
By definition of $Z_w$, we have that $z\in [-1,1]$ and thus may set $z=\cos\phi$. We also write  $s=(1+\varepsilon)e^{i\theta}=(1+\varepsilon)\cos\theta +i(1+\varepsilon)\sin\theta$, from which it follows that:
\begin{align*} 
s^2-2sz+1 &= (s-z+\sqrt{z^2-1})(s-z-\sqrt{z^2-1})=(s-\cos\phi+i\sin\phi)(s-\cos\phi-i\sin\phi)\\ 
  &= (s-e^{i\phi})(s-e^{-i\phi})
  = ((1+\varepsilon)e^{i\theta}-e^{i\phi})((1+\varepsilon)e^{i\theta}-e^{-i\phi})  \\
  &= \left((1+\varepsilon)e^{i(\theta+\phi)}-1\right) \left((1+\varepsilon)e^{i(\theta-\phi)}-1\right).
\end{align*}
Recalling that $\delta=\frac{\varepsilon^2}{2(1+\varepsilon)}$, we have that for any $\gamma$, 
\begin{align*} 
|(1+\varepsilon)e^{i\gamma}-1|^2 &= (-1+(1+\varepsilon)\cos\gamma)^2+((1+\varepsilon)\sin\gamma)^2\\
 &=1 - 2(1+\varepsilon)\cos \gamma + (1+\varepsilon)^2\\
 &=2(1+\varepsilon)(1-\cos\gamma+\delta),
\end{align*}

from which it follows that 
\begin{align*} 
|s^2-2sz+1|^2 &= \left|(1+\varepsilon)e^{i(\theta+\phi)}-1\right|^2 \left|(1+\varepsilon)e^{i(\theta-\phi)}-1\right|^2\\
  &= 4(1+\varepsilon)^2(1-\cos(\theta+\phi)+\delta)\cdot (1-\cos(\theta-\phi)+\delta)\\
 &=4(1+\varepsilon)^2(1+\delta)^2\left(1-\frac{\cos(\theta+\phi)}{1+\delta}\right)\left(1-\frac{\cos(\theta-\phi)}{1+\delta}\right)\\
 &= 4(1+\varepsilon)^2(1+\delta)^2\left(\left(1-\frac{\cos\theta\cos\phi}{1+\delta}\right)^2-\left(\frac{\sin\theta\sin\phi}{1+\delta}\right)^2\right)\\
 &= 4(1+\varepsilon)^2(1+\delta)^2\left(\left(1-\frac{z\cos\theta}{1+\delta}\right)^2-\left(\frac{(1-z^2)\sin^2\theta}{(1+\delta)^2}\right)\right)\\
 &= 4(1+\varepsilon)^2\left(\left(1+\delta-z\cos\theta\right)^2-(1-z^2)\sin^2\theta\right)\\
 &= 4(1+\varepsilon)^2\left((1+\delta)^2-2(1+\delta)z\cos\theta-1+z^2+\cos^2\theta\right)\\
 &= 4(1+\varepsilon)^2\left((\cos\theta-(1+\delta)z)^2+(1-z^2)(2\delta+\delta^2)\right).
\end{align*}

Note that the fourth equality uses the sum and difference formulas for sine and cosine. 
\\\\We then have
\begin{align*} 
\left|\frac{s^2-2sz+1}{2}\right|^2 &= (1+\varepsilon)^2\left((\cos\theta-(1+\delta)z)^2+(1-z^2)(2\delta+\delta^2)\right)\\
  &= (1+\varepsilon)^2(1+\delta)^2\left(\left(\frac{\cos\theta}{1+\delta}-z\right)^2+\frac{(1-z^2)(2\delta+\delta^2)}{1+\delta}\right). 
\end{align*}
The claim then follows by multiplicativity of the norm. 
\end{proof}

\subsection{Proofs of Claim~\ref{claim:fupper} and Claim~\ref{claim:bounded}}
\label{sec:fupper}

\paragraph{Proof of Claim~\ref{claim:fupper}}
The objective is to uniformly bound the value of the function
\[ h_\delta(s) = C_w^2 \cdot \prod_{z \in Z_w} h_\delta(s, z), \qquad\text{where}\qquad h_\delta(s, z) = (s - z)^2 + \delta(1 - z^2) \]
for $s \in [-1, 1]$.  When $k, K$ are fixed and $n$ becomes large, all zeros in $Z_w$ approach $-1$ or $+1$, $h_\delta(s, z)$ uniformly approaches $h_0(s, z) = (s - z)^2$, $h_w(s)$ approaches $h_0(s) = p^\infty_w(s)$ and is therefore uniformly bounded.

The main difficulty in extending this argument to finite $n$ is that $h_\delta(s, z)$ can no longer be uniformly bounded by a multiple of $(s - z)^2$ since when $s$ equals $z$, the latter function vanishes but the former one doesn't.  For this reason, we divide the analysis into two parameter regimes.  When $s$ is bounded away from the set of zeros $Z_w$, an approximation of the infinitary term-by-term argument can be carried out.  When $s$ is near the zeroes, we argue that $h_\delta(s)$ cannot be much larger than $h_\delta(s_0)$ for an $s_0$ that is even farther away from $Z_w$, and then argue that $h_0(s_0) = p_w(s_0)^2$ must be small because it represents the square of a probability of a rare event.

\begin{fact}
\label{fact:symm2}
$h_\delta(s, z) h_\delta(s, -z) = h_\delta(-s, z) h_\delta(-s, -z)$.
\end{fact}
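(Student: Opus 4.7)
The plan is to observe that $h_\delta(s,z) = (s-z)^2 + \delta(1-z^2)$ is invariant under the joint sign flip $(s,z) \mapsto (-s,-z)$, since $(-s-(-z))^2 = (s-z)^2$ and $1-(-z)^2 = 1-z^2$. Equivalently, the single flip $s \mapsto -s$ has the same effect on $h_\delta$ as the single flip $z \mapsto -z$: namely,
\[
h_\delta(-s, z) = (-s-z)^2 + \delta(1-z^2) = (s+z)^2 + \delta(1-z^2) = h_\delta(s, -z),
\]
and by the same token $h_\delta(-s,-z) = h_\delta(s,z)$.

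Multiplying these two identities together,
\[
h_\delta(-s, z)\, h_\delta(-s, -z) \;=\; h_\delta(s, -z)\, h_\delta(s, z),
\]
which is the desired equality. So the entire proof is a two-line symbolic manipulation; there is no real obstacle and no need for the more substantive estimates developed elsewhere in Section~\ref{sec:fupper}.

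The reason the fact is worth isolating, I would guess, is that in the subsequent analysis of the product $\prod_{z \in Z_w} h_\delta(s,z)$ it lets one pair up each zero $z \in Z_+$ with its mirror $-z \in Z_-$ (or an approximation thereof, given the definitions in \eqref{eq:zeros}) and thereby reduce bounding $h_w(s)$ on $[-1,1]$ to bounding it on $[0,1]$. I would phrase the statement exactly as above and leave the one-line verification in the text, since nothing further is needed.
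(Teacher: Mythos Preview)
Your proof is correct. The paper itself leaves Fact~\ref{fact:symm2} unproved, so there is nothing to compare against; your observation that $h_\delta(-s,z)=h_\delta(s,-z)$ and $h_\delta(-s,-z)=h_\delta(s,z)$ immediately yields the identity, and your guess about its role (pairing zeros in $Z_+$ with their mirrors in $Z_-$ to reduce to $s\in[0,1]$, as carried out in Claim~\ref{claim:nonnegative}) is exactly right.
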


\begin{fact}
\label{fact:nonneg}
$h_\delta(s, z) \leq h_\delta(|s|, z)$ when $z \leq 0$ and $s \geq 0$.
\end{fact}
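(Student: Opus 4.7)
The plan is to verify the inequality by direct algebraic expansion. I would first compute $h_\delta(s, z) - h_\delta(|s|, z)$ and observe that the additive term $\delta(1 - z^2)$ is identical on both sides and cancels. This reduces the problem to showing $(s - z)^2 \leq (|s| - z)^2$ under the stated sign hypotheses, with no dependence on $\delta$ at all.

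Expanding both squares and using $|s|^2 = s^2$, I obtain
\[
(s - z)^2 - (|s| - z)^2 = -2sz + 2|s|z = 2z(|s| - s).
\]
Under the hypothesis $z \leq 0$ the first factor is nonpositive, while $|s| - s \geq 0$ holds for every real $s$; hence the product is nonpositive and the desired inequality follows. The additional hypothesis $s \geq 0$ forces $|s| - s = 0$, so in the literal statement both sides actually coincide. The same one-line computation shows the inequality is valid for any real $s$ whenever $z \leq 0$, which is the form that appears useful in Section~\ref{sec:fupper}: combined with Fact~\ref{fact:symm2}, it allows one to reduce the analysis of $C_w^2 \prod_{z \in Z_w} h_\delta(s, z)$ on $s \in [-1, 1]$ to the subinterval $s \in [0, 1]$ by replacing each $s$ of the wrong sign with $|s|$ one factor at a time, paying only with an inequality in the favorable direction.

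No technical obstacle is expected. The entire argument is a short algebraic manipulation whose only subtlety is bookkeeping of signs; once the factorization $2z(|s| - s)$ is written down, the conclusion is immediate from the hypothesis $z \leq 0$ and the elementary property $|s| \geq s$.
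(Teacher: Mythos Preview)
Your argument is correct; the paper states this fact without proof, and your direct computation is the natural verification. You are also right that the hypothesis $s \geq 0$ as written makes the statement a trivial equality, and that the useful version---valid for all real $s$ whenever $z \leq 0$---is what Claim~\ref{claim:nonnegative} actually applies (there $s$ ranges over $[-1,1]$ and the elements of $B$ are nonpositive). Your factorization $ (s-z)^2 - (|s|-z)^2 = 2z(|s|-s) \leq 0$ covers that case directly.
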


\begin{fact}
\label{fact:quadr}
$h_\delta(s, z) \leq h_\delta(s_0, z)$ when $s_0 \leq s \leq 1$, $s_0 \leq 2z - 1$, and $|z|\leq 1$.
\end{fact}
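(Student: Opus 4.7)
The plan is to reduce the claim to the trivial observation that $(s-z)^2 \leq (s_0-z)^2$, since the $\delta(1-z^2)$ summand in the definition $h_\delta(s,z) = (s-z)^2 + \delta(1-z^2)$ depends only on $z$ and cancels from both sides of the desired inequality. So the whole content of the fact is $|s-z| \leq |s_0-z|$ under the stated hypotheses.

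To establish this, I would first use the hypothesis $s_0 \leq 2z-1$ together with $|z|\leq 1$ to conclude that $s_0 - z \leq z - 1 \leq 0$, so $|s_0-z| = z-s_0$. Then I would split into two cases depending on the sign of $s-z$. If $s \leq z$, then $|s-z| = z-s \leq z-s_0$ because $s \geq s_0$, and we are done. If $s > z$, then $|s-z| = s - z \leq 1 - z$ because $s \leq 1$, and the needed inequality $1 - z \leq z - s_0$ is exactly a rearrangement of the hypothesis $s_0 \leq 2z-1$. Combining both cases yields $(s-z)^2 \leq (s_0-z)^2$, and adding the common nonnegative term $\delta(1-z^2)$ finishes the argument.

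There is no serious obstacle here: geometrically, the hypothesis $s_0 \leq 2z-1$ says exactly that $z$ is at least as far from $s_0$ as it is from $1$, so for any $s \in [s_0, 1]$ the point $z$ is at least as far from $s_0$ as from $s$. The only minor care needed is to verify the case split cleanly and to note that $|z|\leq 1$ is used to guarantee $\delta(1-z^2) \geq 0$ (so that adding it preserves the inequality) and to conclude $z - 1 \leq 0$ in the manipulation above.
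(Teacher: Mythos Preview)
Your proposal is correct and follows essentially the same argument as the paper: reduce to $(s-z)^2 \leq (s_0-z)^2$ by cancelling the common $\delta(1-z^2)$ term, then split into the cases $s \leq z$ and $s > z$, handling each exactly as you describe. One small remark: since $\delta(1-z^2)$ appears identically on both sides, its nonnegativity is not actually needed---the only role of $|z|\leq 1$ is the one you already identified, ensuring $z-1\leq 0$ (equivalently $2z-1\leq z$).
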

\begin{proof}
The fact is equivalent to checking that $(s_0-z)^2-(s-z)^2\geq 0$ when $s_0 \leq s \leq 1$ and $s_0 \leq 2z - 1$. If $s\leq z$ then we have that $s_0\leq s\leq z$ from which it immediately follows that $(s_0-z)^2\geq (s-z)^2$. If $s>z$ then $(s-z)^2$ is at most $(1-z)^2$. However, since $|z|\leq 1$, we have that $s_0\leq 2z-1\leq z$ and thus $(s_0-z)^2$ is always at least $(z-(2z-1))^2=(1-z)^2$. Again we have that $(s_0-z)^2\geq (s-z)^2$.
\end{proof}

We begin by reducing to the case of non-negative inputs $s \in [0, 1]$.

\begin{claim}
\label{claim:nonnegative}
Assuming $w \leq K/2$, $h_\delta(s) \leq h_\delta(\abs{s})$.
\end{claim}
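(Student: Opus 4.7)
The plan is to reduce to the case $s < 0$ (the case $s \geq 0$ being trivial) and then partition $Z_w$ into mirror-symmetric pairs plus a residual set of strictly negative zeros: Fact~\ref{fact:symm2} will give an equality under $s \mapsto |s|$ on the pairs, and Fact~\ref{fact:nonneg} will give term-by-term inequalities in the right direction on the residual zeros. Since every factor $h_\delta(\cdot, z)$ is non-negative, multiplying the factors and the constant $C_w^2$ then yields $h_\delta(s) \leq h_\delta(|s|)$.

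The structural observation driving the argument is that under the hypothesis $w \leq K/2$, the mirrored set $-Z_+ := \{-z : z \in Z_+\}$ sits inside $Z_-$. Explicitly, from~\eqref{eq:zeros}, $-Z_+ = \{-1 + 2h/n : 0 \leq h \leq w - 1\}$ while $Z_- = \{-1 + 2h/n : 0 \leq h \leq K - w - 1\}$, and the containment is equivalent to $w - 1 \leq K - w - 1$, i.e., $2w \leq K$. Hence $Z_w$ decomposes as the disjoint union $Z_+ \sqcup (-Z_+) \sqcup Z_-'$, where $Z_-' := Z_- \setminus (-Z_+) = \{-1 + 2h/n : w \leq h \leq K - w - 1\}$. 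Under $n \geq 64K$, every $z \in Z_-'$ satisfies $z \leq -1 + 2(K - w - 1)/n \leq -1 + 2K/n < 0$, so in particular $z$ has the same sign as $s$.

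With the partition in hand and $s < 0$, so that $|s| = -s$, Fact~\ref{fact:symm2} applied to each pair $(z, -z)$ with $z \in Z_+$ yields $h_\delta(s, z) h_\delta(s, -z) = h_\delta(-s, z) h_\delta(-s, -z) = h_\delta(|s|, z) h_\delta(|s|, -z)$, so the product over $Z_+ \sqcup (-Z_+)$ is invariant under $s \mapsto |s|$. For each $z \in Z_-'$, both $s$ and $z$ are non-positive, so Fact~\ref{fact:nonneg} gives $h_\delta(s, z) \leq h_\delta(|s|, z)$. Multiplying all these (in)equalities across $Z_w$ and rescaling by $C_w^2$ produces $h_\delta(s) \leq h_\delta(|s|)$.

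The only real conceptual content is identifying the pairing: $w \leq K/2$ is exactly the hypothesis under which every positive zero has its mirror image already present among the negative zeros, allowing Fact~\ref{fact:symm2} to absorb the sign flip on all of $Z_+ \sqcup (-Z_+)$; the residual zeros $Z_-'$ then lie on the same side of $0$ as $s$, so Fact~\ref{fact:nonneg} handles them with no sign issues. I do not anticipate any serious obstacle; the main bookkeeping is the explicit indexing of the zero sets.
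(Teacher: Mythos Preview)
Your proof is correct and follows essentially the same approach as the paper: partition $Z_w$ into the $w$ mirror-symmetric pairs $\{(-1+2h/n,\,1-2h/n):0\le h<w\}$ plus the residual nonpositive zeros $\{-1+2h/n:w\le h<K-w\}$, apply Fact~\ref{fact:symm2} to the pairs and Fact~\ref{fact:nonneg} to the residuals, and multiply. The only cosmetic difference is that you first reduce to $s<0$, whereas the paper handles all $s$ at once (the inequality being an equality when $s\ge 0$).
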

\begin{proof}
When $w \leq K/2$ then elements of $Z_w$~\eqref{eq:zeros} can be split into $w$ pairs of the form $A = \{(-1 + 2h/n, 1 - 2h/n)\colon 0 \leq h < w\}$, and $K - 2w$ remaining elements $B = \{-1 + 2h/n\colon w \leq h < K - w\}$ are all
non-positive.  By Fact~\ref{fact:symm2}, $\prod_{(-z, z) \in A} h_\delta(s, z) h_\delta(s, -z) = \prod_{(-z, z) \in A} h_\delta(\abs{s}, z) h_\delta(\abs{s}, -z)$.  By Fact~\ref{fact:nonneg}, $\prod_{z \in B} h_\delta(s, z) \leq \prod_{z \in B} h_\delta(\abs{s}, z)$.  Therefore the product $\prod_{z \in Z_w} h_\delta(s, z) \leq \prod_{z \in Z_w} h_\delta(\abs{s}, z)$.
\end{proof}

The following claim handles values of $s$ in the range $[0, 1 - w/16K]$.

\begin{claim}
\label{claim:ratiobound}
Assuming $0 \leq s \leq 1 - w/16K$,
\[ h_\delta(s, z) \leq 
\begin{cases} (1 + \delta)(s - z)^2, &\text{if $z \leq -1/\sqrt{2}$.} \\
(1 + (64K/w)\delta)(s - z)^2, &\text{if $z \geq 1 - w/32K$}
\end{cases} \]
\end{claim}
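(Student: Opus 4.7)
The claim reduces to bounding the ratio
\[
\frac{h_\delta(s,z)}{(s-z)^2} \;=\; 1 + \delta \cdot \frac{1-z^2}{(s-z)^2}
\]
in the two parameter regimes. So in each case I just need to show that $(1-z^2)/(s-z)^2$ is at most $1$ (case $z \leq -1/\sqrt 2$) or at most $64K/w$ (case $z \geq 1 - w/32K$). Both are elementary manipulations; there is no genuine obstacle, only a bit of bookkeeping with the hypothesis $0 \leq s \leq 1 - w/16K$.

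\textbf{Case 1: $z \leq -1/\sqrt{2}$.} Here $1 - z^2 \leq 1 - 1/2 = 1/2$. On the other hand $s \geq 0$ and $z < 0$ give $s - z \geq -z = |z| \geq 1/\sqrt 2$, hence $(s-z)^2 \geq z^2 \geq 1/2$. Therefore $(1-z^2)/(s-z)^2 \leq 1$ and we conclude $h_\delta(s,z) \leq (1 + \delta)(s-z)^2$.

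\textbf{Case 2: $z \geq 1 - w/32K$.} Factor $1 - z^2 = (1-z)(1+z) \leq 2(1-z) \leq 2 \cdot w/32K = w/16K$. For the denominator, the hypothesis $s \leq 1 - w/16K$ combined with $z \geq 1 - w/32K$ yields
\[
z - s \;\geq\; (1 - w/32K) - (1 - w/16K) \;=\; w/32K \;>\; 0,
\]
so $(s-z)^2 \geq (w/32K)^2 = w^2/(1024\,K^2)$. Dividing,
\[
\frac{1 - z^2}{(s-z)^2} \;\leq\; \frac{w/16K}{w^2/(1024\,K^2)} \;=\; \frac{64K}{w},
\]
which gives $h_\delta(s,z) \leq (1 + (64K/w)\delta)(s-z)^2$, as required.

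\textbf{Where the work goes.} The estimates are two lines each, so there is no main obstacle inside this particular claim; the only subtlety is making sure the cutoff $1 - w/16K$ in the hypothesis on $s$ is exactly what makes the gap $z - s$ in Case~2 scale like $w/K$ rather than something smaller. The claim will later be combined with a third range (values of $z$ in the narrow window $-1/\sqrt 2 < z < 1 - w/32K$, which by~\eqref{eq:zeros} is empty when $n \geq 64K$ and $w$ is not too small) and with Fact~\ref{fact:quadr} to handle the regime $|s| \geq 1 - w/16K$, giving the two cases of Claim~\ref{claim:fupper}. Thus this claim is essentially a self-contained lemma whose role is to replace each $h_\delta(s,z)$ with $(1+\text{small})\cdot(s-z)^2$ so that the product $\prod_{z \in Z_w} h_\delta(s,z)$ can be compared with $p_w(s)^2/C_w^2$ up to an overall factor $e^{O(\delta K)}$.
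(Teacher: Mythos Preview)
Your proof is correct and follows the same approach as the paper: both compute the ratio $h_\delta(s,z)/(s-z)^2 = 1 + \delta(1-z^2)/(s-z)^2$ and bound $(1-z^2)/(s-z)^2$ by $1$ in the first case (using $(s-z)^2 \geq z^2 \geq 1-z^2$) and by $64K/w$ in the second (using $1-z^2 \leq 2(1-z) \leq w/16K$ and $(z-s)^2 \geq (w/32K)^2$). The paper's proof is just a terser version of yours.
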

\begin{proof}
The ratio $h_\delta(s, z)/(s - z)^2$ equals $1 + ((1 - z^2)/(s - z)^2)\delta$.  The number $(1 - z^2)/(s - z)^2$ is at most $1$ when $s \geq 0$ and $z \leq -1/\sqrt{2}$ and at most the following when $z\geq 1-w/32K$. 
\[ \frac{1 - (1 - w/32K)^2}{((1 - w/16K) - (1 - w/32K))^2} \leq \frac{2w/32K}{(w/32K)^2} = 64K/w. \hfill \]
\end{proof}

\begin{corollary}
\label{cor:ratiobound}
Assuming $0 \leq s \leq 1 - w/16K$ and $n \geq 64K$, $h_\delta(s) \leq e^{65\delta K} h_0(s)$.
\end{corollary}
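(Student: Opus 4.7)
The plan is to apply Claim~\ref{claim:ratiobound} term by term across the product defining $h_\delta(s)$, verify that every zero in $Z_w$ falls into one of the two regimes handled by that claim, and then combine the resulting factor bounds into the single exponential $e^{65\delta K}$. The main observation that makes this work is that by Equation~\eqref{eq:pw} one has $h_0(s) = C_w^2\prod_{z\in Z_w}(s-z)^2 = p_w(s)^2$, so the inequality to prove is exactly
\[ \prod_{z \in Z_w} \frac{h_\delta(s,z)}{(s-z)^2} \leq e^{65 \delta K}, \]
which reduces Corollary~\ref{cor:ratiobound} to a clean term-by-term estimate.

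First I would split $Z_w = Z_- \cup Z_+$ as in Equation~\eqref{eq:zeros}, with $\abs{Z_-} = K-w$ and $\abs{Z_+} = w$. The key verification is that, under the assumption $n \geq 64K$, every element of $Z_-$ lies in $[-1, -1 + 2(K-w-1)/n] \subseteq [-1, -1 + 1/32]$, which is entirely below $-1/\sqrt{2}$, and every element of $Z_+$ lies in $[1 - 2(w-1)/n, 1] \subseteq [1 - w/32K, 1]$. Thus the first case of Claim~\ref{claim:ratiobound} applies to every $z \in Z_-$ and the second case to every $z \in Z_+$.

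Multiplying these per-zero estimates then gives
\[ \prod_{z \in Z_w} \frac{h_\delta(s,z)}{(s-z)^2} \leq (1+\delta)^{K-w} \cdot \bigl(1 + (64K/w)\delta\bigr)^w. \]
Using $1+x \leq e^x$ on each factor, the right-hand side is bounded by $\exp\bigl(\delta(K-w) + 64\delta K\bigr) \leq e^{65\delta K}$, which upon multiplying both sides by $h_0(s) = p_w(s)^2$ yields Corollary~\ref{cor:ratiobound}.

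There is essentially no hard step here; the only thing that requires care is confirming the numerical containments $Z_- \subseteq (-\infty, -1/\sqrt{2}]$ and $Z_+ \subseteq [1 - w/32K, 1]$ from the hypothesis $n \geq 64K$, and lining up the cardinalities $\abs{Z_-}$, $\abs{Z_+}$ correctly so that the exponent comes out to $(K-w) + 64K \leq 65K$ rather than something larger.
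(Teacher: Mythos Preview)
Your proposal is correct and follows essentially the same approach as the paper: verify via the hypothesis $n \geq 64K$ that every zero in $Z_-$ satisfies $z \leq -1/\sqrt{2}$ and every zero in $Z_+$ satisfies $z \geq 1 - w/32K$, apply Claim~\ref{claim:ratiobound} term by term, and bound the resulting product $(1+\delta)^{K-w}(1+(64K/w)\delta)^w$ by $e^{65\delta K}$ using $1+x \leq e^x$. The only cosmetic difference is that you first divide out $h_0(s)$ and work with the ratio, whereas the paper carries the factor $C_w^2\prod(s-z)^2$ through the chain of inequalities.
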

\begin{proof}
By the choice of parameters, all zeros in $Z_-$ meet the criterion for the first inequality in Claim~\ref{claim:ratiobound}, while all zeros in $Z_+$ meet the criterion for the second one.  Therefore
\begin{align*} 
h_\delta(s) &= C_w^2 \prod_{z \in Z_-} h_\delta(s, z) \prod_{z \in Z_+} h_\delta(s, z) \\
       &\leq C_w^2 \prod_{z \in Z_-} (1 + \delta) (s - z)^2 \prod_{z \in Z_+} (1 + (64K/w)\delta) (s - z)^2  \\
       &\leq (1 + \delta)^{K - w} (1 + (64K/w)\delta)^w \cdot C_w^2 \prod_{z \in Z_-} h_0(s, z) \prod_{z \in Z_+} h_0(s, z) \\
       &\leq e^{\delta K} \cdot e^{64\delta K} \cdot h_0(s). \hfill
\end{align*}
\end{proof}

The following two claims handle values of $s$ in the range $[1 - w/16K, 1]$.

\begin{claim}
\label{claim:plainbound}
Assuming $w \leq K$ and $1 - w/8K \leq s_0 \leq 1 - w/16K \leq s \leq 1$,
\[ h_\delta(s, z) \leq \begin{cases}
h_\delta(s_0, z), &\text{if $z \geq 1 - w/32K$} \\
(1 + w/8K)^2 \cdot h_\delta(s_0, z), &\text{if $z \leq -w/8K$}. \end{cases} \]
\end{claim}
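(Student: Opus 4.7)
The plan is to verify both inequalities directly from the definition $h_\delta(s, z) = (s - z)^2 + \delta(1 - z^2)$, noting crucially that the $\delta(1 - z^2)$ term is independent of $s$, so the problem essentially reduces to comparing $(s - z)^2$ with $(s_0 - z)^2$ under the stated hypotheses.

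For the first case ($z \geq 1 - w/32K$), I would simply invoke Fact~\ref{fact:quadr}. The three hypotheses of that fact hold: $s_0 \leq s \leq 1$ is given; $|z| \leq 1$ since $z \in Z_w \subseteq [-1, 1]$; and $s_0 \leq 2z - 1$ because $2z - 1 \geq 2(1 - w/32K) - 1 = 1 - w/16K \geq s_0$. Fact~\ref{fact:quadr} as written gives $(s_0 - z)^2 \geq (s - z)^2$, and adding the common term $\delta(1 - z^2)$ to both sides yields $h_\delta(s_0, z) \geq h_\delta(s, z)$.

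For the second case ($z \leq -w/8K$), the strategy is to bound the ratio $(s - z)/(s_0 - z)$. Since $z \leq -w/8K < 0$ and $s_0 \geq 1 - w/8K$, we have $s_0 - z \geq (1 - w/8K) + w/8K = 1$, and in particular $s_0 - z > 0$, so we may write
\[ \frac{s - z}{s_0 - z} = 1 + \frac{s - s_0}{s_0 - z} \leq 1 + \frac{w/8K}{1} = 1 + \frac{w}{8K}, \]
using $s - s_0 \leq 1 - (1 - w/8K) = w/8K$. Squaring gives $(s - z)^2 \leq (1 + w/8K)^2 (s_0 - z)^2$. Since $(1 + w/8K)^2 \geq 1$, I can also trivially bound $\delta(1 - z^2) \leq (1 + w/8K)^2 \delta(1 - z^2)$, and summing the two inequalities yields $h_\delta(s, z) \leq (1 + w/8K)^2 h_\delta(s_0, z)$.

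I do not expect any real obstacles here; both cases amount to careful bookkeeping with the hypothesized numerical ranges. The only subtle point is noticing that the $\delta(1 - z^2)$ summand is $s$-independent and therefore cooperates with both bounds (being preserved exactly in the first case and absorbed into the multiplicative factor in the second), which is what makes the reduction to controlling $(s - z)^2$ legitimate.
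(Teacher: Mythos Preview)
Your proposal is correct and follows essentially the same approach as the paper: both cases invoke Fact~\ref{fact:quadr} for $z \geq 1 - w/32K$ and bound the ratio $(s - z)/(s_0 - z)$ for $z \leq -w/8K$, then absorb the $s$-independent $\delta(1 - z^2)$ term. Your ratio computation is a slight repackaging of the paper's (you bound $s - s_0 \leq w/8K$ and $s_0 - z \geq 1$ separately, whereas the paper bounds $(1 - z)/(s_0 - z)$), but the arithmetic is the same.
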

\begin{proof}
By the choice of parameters the first inequality follows from Fact~\ref{fact:quadr}.  For the second one, we upper bound the ratio
\[ \frac{(s - z)^2}{(s_0 - z)^2} \leq \frac{(1 - z)^2}{(1 - z - w/8K)^2} = \biggl(1 + \frac{w/8K}{1 - z - w/8K}\biggr)^2 \leq \biggl(1 + \frac{w}{8K}\biggr)^2. \]
This is greater than one, so $(s - z)^2 + \delta(1 - z^2) \leq (1 + w/8K)^2 ((s_0 - z)^2 + \delta(1 - z^2))$ as desired.
\end{proof}

\begin{corollary}
\label{cor:plainbound}
Assuming $1 - w/8K \leq s_0 \leq 1 - w/16K \leq s \leq 1$ and $n \geq 2K$, $h_\delta(s) \leq e^{w/4} h_\delta(s_0)$.
\end{corollary}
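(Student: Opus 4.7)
The plan is to apply Claim~\ref{claim:plainbound} termwise to each zero $z \in Z_w = Z_- \cup Z_+$, then multiply the resulting inequalities. The first step is to verify that every $z \in Z_w$ lies in one of the two parameter regimes $z \geq 1 - w/32K$ or $z \leq -w/8K$ required by Claim~\ref{claim:plainbound}.

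For $z \in Z_+ = \{1 - 2h/n : 0 \leq h \leq w - 1\}$, the smallest element is $1 - 2(w-1)/n$; using the standing hypothesis $n \geq 64K$ from Claim~\ref{claim:fupper} (of which this corollary is a component), this yields $2(w-1)/n \leq w/32K$, so $z \geq 1 - w/32K$ for every $z \in Z_+$, placing each such $z$ in case~1. For $z \in Z_- = \{-1 + 2h/n : 0 \leq h \leq K - w - 1\}$, the largest element is $-1 + 2(K - w - 1)/n$; using the stated hypothesis $n \geq 2K$ we obtain
\[ -1 + 2(K-w-1)/n \leq -1 + (K-w-1)/K = -(w+1)/K \leq -w/8K, \]
so every $z \in Z_-$ lies in case~2.

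Having verified the hypotheses, I would apply Claim~\ref{claim:plainbound} at the same pair $(s, s_0)$ once for each $z \in Z_w$: for $z \in Z_+$ one gets $h_\delta(s, z) \leq h_\delta(s_0, z)$, while for $z \in Z_-$ one gets $h_\delta(s, z) \leq (1 + w/8K)^2 \cdot h_\delta(s_0, z)$. Taking the product of all $K$ inequalities and multiplying by $C_w^2$ yields
\[ h_\delta(s) \leq (1 + w/8K)^{2|Z_-|} \cdot h_\delta(s_0) = (1 + w/8K)^{2(K-w)} \cdot h_\delta(s_0). \]
Finally, $1 + x \leq e^x$ together with $K - w \leq K$ gives $(1 + w/8K)^{2(K-w)} \leq e^{w(K-w)/4K} \leq e^{w/4}$, which is the claimed bound.

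The main obstacle is purely bookkeeping: correctly identifying which of the two regimes of Claim~\ref{claim:plainbound} applies to each class of zeros (a matter of parameter-chasing against the hypotheses on $n$ and $w$), and correctly counting the $|Z_-| = K - w$ many factors of $(1 + w/8K)^2$ that enter the product. No new ideas beyond Claim~\ref{claim:plainbound} are needed.
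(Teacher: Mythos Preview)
Your proposal is correct and follows essentially the same route as the paper: apply Claim~\ref{claim:plainbound} termwise (case~1 for $z\in Z_+$, case~2 for $z\in Z_-$), take the product, and bound $(1+w/8K)^{2|Z_-|}\le e^{w/4}$. You are in fact slightly more careful than the paper: the stated hypothesis $n\ge 2K$ does not by itself force every $z\in Z_+$ to satisfy $z\ge 1-w/32K$, and you correctly fall back on the ambient assumption $n\ge 64K$ from Claim~\ref{claim:fupper} to close that gap.
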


\begin{proof}
By the choice of parameters, all zeros in $Z_-$ meet the criterion for the first inequality in Claim~\ref{claim:plainbound}, while all zeros in $Z_+$ meet the criterion for the second one.  Therefore
\begin{align*} 
h_\delta(s) &= C_w^2 \prod_{z \in Z_-} h_\delta(s, z) \prod_{z \in Z_+} h_\delta(s, z) \\
       &\leq C_w^2 \prod_{z \in Z_-} (1 + w/8K)^2 \cdot h_\delta(s_0, z) \prod_{z \in Z_+}h_\delta(s_0,z)  \\
       &=(1 + w/8K)^{2\abs{Z_-}} \cdot h_\delta(s_0) \\
       &\leq (1 + w/8K)^{2K} \cdot h_\delta(s_0)\leq e^{w/4}h_\delta(s_0). \hfill
\end{align*}
\end{proof}

\begin{claim}
\label{claim:chernoff}
If $s_0$ is of the form $1 - 2h/n$ for some integer $0 \leq h \leq wn/e^2K$ then $0 \leq p_w(s_0) \leq e^{-w}$. 
\end{claim}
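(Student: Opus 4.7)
The plan is to identify $p_w(s_0)$ with a hypergeometric probability and bound it by elementary estimates. By Fact~\ref{fact:pw}, since $s_0 = 1 - 2h/n$ corresponds under $\phi^{-1}$ to Hamming weight $h$, the value $p_w(s_0)$ equals the probability that a uniformly random $n$-bit string of Hamming weight $h$, restricted to the fixed $K$-subset $S$, has Hamming weight exactly $w$. Equivalently, if one drops $h$ ones uniformly at random into $n$ positions and counts how many land in $S$, the count is hypergeometric, so
\[
p_w(s_0) = \frac{\binom{K}{w}\binom{n-K}{h-w}}{\binom{n}{h}} = \binom{K}{w} \prod_{i=0}^{w-1}\frac{h-i}{n-i} \prod_{j=0}^{K-w-1}\frac{n-h-j}{n-w-j}.
\]
Being a probability, $p_w(s_0) \geq 0$; and if $h < w$ the expression vanishes, making the upper bound trivial, so I henceforth assume $h \geq w$.

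Next I would bound each of the three factors individually. The standard inequality $\binom{K}{w} \leq (eK/w)^w$ handles the binomial coefficient. For the product over $i$, the inequality $(h-i)/(n-i) \leq h/n$, valid whenever $h \leq n$ and $i \geq 0$, gives $\prod_{i=0}^{w-1}(h-i)/(n-i) \leq (h/n)^w$. For the product over $j$, each factor is at most $1$ since $h \geq w$ implies $n-h-j \leq n-w-j$ for every $j \geq 0$. Multiplying these three bounds together yields
\[
p_w(s_0) \leq \left(\frac{eKh}{wn}\right)^w.
\]

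Substituting the hypothesis $h \leq wn/(e^2 K)$ then yields $eKh/(wn) \leq 1/e$, so $p_w(s_0) \leq e^{-w}$, as required. I do not expect any serious obstacle; the computation is direct. The one subtlety worth flagging is the importance of keeping the second product bounded by exactly $1$ using $h \geq w$, rather than relaxing earlier by replacing $\binom{n-h}{K-w}$ with $\binom{n}{K-w}$, which would introduce a spurious factor of the form $(n/(n-K))^{K-w}$ that is not quite small enough in the regime $K \leq n/64$ to preserve the clean bound $e^{-w}$.
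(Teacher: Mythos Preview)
Your proof is correct and follows essentially the same approach as the paper: both identify $p_w(s_0)$ as a hypergeometric probability via Fact~\ref{fact:pw} and bound it by $\binom{K}{w}\prod_{i=0}^{w-1}\frac{h-i}{n-i} \leq (eK/w)^w(h/n)^w \leq e^{-w}$. The only cosmetic difference is that the paper phrases the intermediate step as a union bound for ``at least $w$ ones in the first $K$ positions'' rather than writing out the exact hypergeometric formula and bounding the trailing product $\prod_j (n-h-j)/(n-w-j)$ by~$1$.
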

\begin{proof}
By Fact~\ref{fact:pw}, $p_w(s_0)$ is the probability that a random string of Hamming weight $h$ and length $n$ has exactly $w$ ones in its first $K$ positions.   The probability that it has at least $w$ ones in its first $K$ positions is at most 
\[ \binom{K}{w} \cdot \frac{h}{n} \cdot \frac{h-1}{n-1} \cdots \frac{h-w+1}{n-w+1} \leq \biggl(\frac{eK}{w}\biggr)^w \biggl(\frac{h}{n}\biggr)^w \leq e^{-w}. \hfill \]
\end{proof}

\begin{proof}[Proof of Claim~\ref{claim:fupper}]
By Claim~\ref{claim:nonnegative} we may assume $s \in [0, 1]$. When $0 \leq s \leq 1 - w/16K$ the result follows from Corollary~\ref{cor:ratiobound}.  When $1 - w/16K \leq \abs{s} \leq 1$, by the assumption $n \geq 64K$ there must exist a value $s_0$ between $1 - w/8K$ and $1 - w/16K$ that is of the form $1 - 2h/n$.  In particular $h \leq wn/e^2 K$.  Then
\[ h_\delta(s) \leq e^{w/4}h_\delta(s_0) \leq e^{w/4}e^{65\delta K} p_w(s_0)^2 \leq e^{65\delta K - 7w/4}, \]
where the inequalities follow from Corollary~\ref{cor:plainbound}, Corollary~\ref{cor:ratiobound}, and Claim~\ref{claim:chernoff}, respectively.
\end{proof}

\paragraph{Proof of Claim~\ref{claim:bounded}}  This proof has a similar structure to that of Claim~\ref{claim:fupper}.  By symmetry we can again restrict attention to inputs $t \in [0, 1]$.  When $t \leq 1 - 2w/n$ then $\abs{p_w(t)}$ is not much larger than $\abs{p_w(t')}$ where $t'$ is the largest number of the form $1 - 2h/n$ not exceeding $t$ for integer $h$.  Otherwise the value $\abs{p_w(t)}$ is not much larger than $\abs{p_w(s_0)}$, for some $s_0 \in [1-w/8K, 1-w/16K]$ of the form $1-2h/n$ for an integer $h$. In turn, $p_w(s_0)$ is the probability of a rare event, so we conclude that $\abs{p_w(t)}$  is small. 

\begin{claim}
\label{claim:apx}
If $-2/n \leq t' \leq t \leq 1 - 2w/n$ then
\[ \abs{t - z} \leq \begin{cases} \abs{t' - z}, &\text{if $z \geq 1 - 2w/n$}, \\
   (1 + 2(t - t')) \abs{t' - z}, &\text{if $z \leq -1/2 -2/n$}. \end{cases} \]
\end{claim}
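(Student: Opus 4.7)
The plan is to prove each case separately by checking that, under the given hypotheses, the absolute values can be removed by determining the sign of $t-z$ and $t'-z$, and then reducing to an elementary algebraic inequality.

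For the first case, I would note that the hypothesis $z \geq 1 - 2w/n$ combined with $t \leq 1 - 2w/n$ forces $t \leq z$, and likewise $t' \leq t \leq z$. So $|t - z| = z - t$ and $|t' - z| = z - t'$, and the desired inequality $|t-z| \leq |t'-z|$ reduces to $z - t \leq z - t'$, which holds by the assumption $t' \leq t$. This case is purely from monotonicity.

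For the second case, the hypothesis $z \leq -1/2 - 2/n$ together with $t' \geq -2/n$ gives $t \geq t' > z$, so $|t - z| = t - z$ and $|t' - z| = t' - z$. The claim becomes $t - z \leq (1 + 2(t-t'))(t' - z)$, which, after subtracting $t' - z$ from both sides, is equivalent to $t - t' \leq 2(t - t')(t' - z)$. If $t = t'$ this is trivial; otherwise it reduces to showing $t' - z \geq 1/2$. This follows directly from $t' \geq -2/n$ and $z \leq -1/2 - 2/n$, which give $t' - z \geq 1/2$ exactly.

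The only mild subtlety is keeping the boundary values straight (noting that the $-2/n$ shift in both the lower bound on $t'$ and the upper bound on $z$ is tuned precisely so that $t' - z \geq 1/2$ holds with equality at the boundary). There is no real obstacle; each case is a one-line sign check followed by a one-line algebraic manipulation, so the proof should fit in a few lines of LaTeX.
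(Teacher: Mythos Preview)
Your proposal is correct and matches the paper's proof essentially line for line: the first case is handled by observing both differences are nonpositive so $|t-z|=z-t\leq z-t'=|t'-z|$, and the second case reduces (via the ratio $(t-z)/(t'-z)=1+(t-t')/(t'-z)$ in the paper, or equivalently your subtraction) to the bound $t'-z\geq 1/2$, which follows from $t'\geq -2/n$ and $z\leq -1/2-2/n$. Your write-up is slightly more explicit about the sign checks and the boundary case $t=t'$, but the argument is the same.
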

\begin{proof}
The first part follows because the expressions under the absolute value are nonnegative.  For the second part, we bound the ratio
\[ \frac{t - z}{t' - z} = 1 + \frac{t - t'}{t' - z} \leq 1 + 2(t - t') \]
as desired.
\end{proof} 

\begin{corollary}
\label{cor:apx}
Assuming $n \geq 64K$ and $-2/n \leq t' \leq t \leq 1 - 2w/n$, $\abs{p_w(t)} \leq (1 + 2(t - t'))^K\abs{p_w(t')}$.
\end{corollary}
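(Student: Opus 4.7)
The plan is to apply Claim~\ref{claim:apx} factor-by-factor to the expression $p_w(t) = C_w \prod_{z \in Z_w}(t-z)$ from~\eqref{eq:pw}, after verifying that each zero $z \in Z_w$ falls into one of the two cases handled by the claim.

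First I would split $Z_w = Z_- \cup Z_+$ as in~\eqref{eq:zeros} and check each side. For $z \in Z_+$, we have $z = 1 - 2h/n$ with $0 \leq h \leq w-1$, so $z \geq 1 - 2(w-1)/n \geq 1 - 2w/n$, placing every $z \in Z_+$ in the first case of Claim~\ref{claim:apx}. For $z \in Z_-$, we have $z = -1 + 2h/n$ with $0 \leq h \leq K-w-1 \leq K-1$; since the hypothesis $n \geq 64K$ gives $2h/n \leq 2(K-1)/n \leq 1/2 - 2/n$ (using $K \leq n/64$), every such $z$ satisfies $z \leq -1/2 - 2/n$ and hence falls under the second case of Claim~\ref{claim:apx}.

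Next I would multiply the factor-wise bounds. For $z \in Z_+$, Claim~\ref{claim:apx} yields $|t-z| \leq |t'-z|$. For $z \in Z_-$, it yields $|t-z| \leq (1 + 2(t-t'))|t'-z|$. Taking the product over all $z \in Z_w$ and pulling out the constant $|C_w|$ gives
\[
|p_w(t)| \leq (1 + 2(t-t'))^{|Z_-|} \cdot |C_w| \prod_{z \in Z_w}|t'-z| = (1 + 2(t-t'))^{|Z_-|}\,|p_w(t')|.
\]
Since $|Z_-| = K - w \leq K$ and $1 + 2(t-t') \geq 1$, this is bounded above by $(1+2(t-t'))^K \, |p_w(t')|$, which is the claimed inequality.

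I do not anticipate any real obstacle here: the work has already been done in Claim~\ref{claim:apx}, and the only thing to verify is that the parameter ranges for $Z_-$ and $Z_+$ match the two cases of that claim, which is an immediate consequence of the hypothesis $n \geq 64K$. The slight bookkeeping step is confirming that $|Z_-| = K - w$ rather than $K$; this only strengthens the bound, and one simply relaxes the exponent to $K$ at the end for uniformity with how the corollary will be applied.
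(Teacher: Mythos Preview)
Your proposal is correct and follows exactly the same approach as the paper: split $Z_w = Z_- \cup Z_+$, verify that each set of zeros satisfies the appropriate hypothesis of Claim~\ref{claim:apx}, multiply the factor-wise bounds, and relax the exponent $|Z_-| = K-w$ to $K$. Your parameter checks are slightly more explicit than the paper's (which simply says ``by the choice of parameters''), but the argument is identical.
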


\begin{proof}
By the choice of parameters, all zeros in $Z_+$ meet the criterion for the first inequality in Claim~\ref{claim:apx}, while all zeros in $Z_-$ meet the criterion for the second one.  Therefore
\begin{align*} 
\abs{p_w(t)} &= C_w \prod_{z \in Z_-} \abs{t-z} \prod_{z \in Z_+} \abs{t-z}\\
       &\leq C_w \prod_{z \in Z_-}  (1 + 2(t - t')) \abs{t' - z}\prod_{z \in Z_+} \abs{t'-z}  \\
       &=(1 + 2(t - t'))^{\abs{Z_-}}\cdot\abs{p_w(t')} \\
       &\leq (1 + 2(t - t'))^{K}\cdot\abs{p_w(t')}. \hfill
\end{align*}
\end{proof}

\begin{proof}[Proof of Claim~\ref{claim:bounded}]
By Fact~\ref{fact:symm} we may assume $w \leq K/2$, and by Claim~\ref{claim:nonnegative} (for $\delta = 0$) we may assume $0 \leq t \leq 1$.  If $t \leq 1 - 2w/n$ then there exists a $t'$ such that $p_w(t')$ is a probability and $0 \leq t - t' \leq 2/n$.  By Corollary~\ref{cor:apx}, $\abs{p_w(t)} \leq (1 + 4/n)^K\abs{p_w(t')} \leq 2\abs{p_w(t')}$.  

If $1 - 2w/n \leq t \leq 1$, then $t \geq 1-w/16K$. 
By the assumption $n \geq 64K$ there must exist a value $s_0$ between $1 - w/8K$ and $1 - w/16K$ that is of the form $1 - 2h/n$.  In particular $h \leq wn/e^2 K$.
By Corollary~\ref{cor:plainbound}, $\abs{p_w(t)} = \sqrt{h_0(t)} \leq e^{w/8} \sqrt{h_0(s_0)} = e^{w/8} \abs{p_w(s_0)}$.  By Claim~\ref{claim:chernoff}, $p_w(s_0)$ is non-negative and at most $e^{-w}$.  Therefore $\abs{p_w(t)} \leq e^{w/8} \cdot e^{-w} \leq 1$.  
\end{proof}

\section{Proofs of Corollary \ref{cor: wtdeglb} and Theorem \ref{thm: symub}}
\label{s:newdetails}
\label{s:symub}

\subsection{Proof of Corollary \ref{cor: wtdeglb}}

\begin{proof}[Proof of Corollary~\ref{cor: wtdeglb}]

Corollary~\ref{cor:symmetricimperfect} implies the existence of a $\phi \left(= \frac{\mu - \nu}{2}\right)$ satisfying $\|\phi\|_1 = 1,~\langle f, \phi \rangle = \eps$ for some $\eps = \Omega(1)$ and $\langle \phi, q \rangle \leq K^{3/2} \cdot 2^{-\Omega\left(\adeg_{1/3}(f)^2/K\right)}$ for any parity of degree at most $K$.

For any $p$ of degree $K$ and weight at most $w$,
\[ \|f - p\|_{\infty} = \|f - p\|_{\infty} \|\phi\|_1 \geq \ip{\phi, f - p} = \ip{\phi, f} - \ip{\phi, p} \geq \eps - w \cdot K^{3/2} \cdot 2^{-\Omega\left(\adeg_{1/3}(f)^2/K\right)}.\]

Thus, we conclude that $W_{\eps/2}(f, K) = K^{-3/2} \cdot 2^{\Omega\left(\adeg_{1/3}(f)^2/K\right)}$.  Corollary~\ref{cor: wtdeglb} now follows using standard error reduction techniques that show that $\adeg_{\eps}(f) = \Theta(\adeg_{1/3}(f))$ for all constants $0 < \eps < 1/2$. 

\end{proof}

\subsection{Proof of Theorem \ref{thm: symub}}
We first require the following lemma. This lemma builds on ideas in \cite[Claim 2]{STT}, which showed a similar result for $t=\Theta(1)$. 

\begin{lemma}\label{lem: andub}
For any $y \in \{0,1\}^n$, denote by $\EQ_y$ the function on $\{0,1\}^n$ that outputs 1 on input $y$, and 0 otherwise.  Then for any $t > 0$ and $d > \sqrt{nt\log n}$, we have $W_{n^{-O(t)}} (\EQ_y, d) \leq 2^{O(nt\log^2(n)/d)}$.
\end{lemma}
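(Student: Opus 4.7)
The plan is to reduce to the all-ones point function and then amplify error via polynomial composition. For any $y \in \{0,1\}^n$, the functions $\EQ_y$ and $\EQ_{1^n}$ differ only by flipping input bits at positions where $y_i = 0$. Under the Fourier-side substitution $z_i \mapsto (1 - 2 y_i) z_i$, each parity character $\chi_S$ maps to $\pm \chi_S$, so the total degree and the $\ell_1$ Fourier weight of any polynomial are preserved. It therefore suffices to construct a polynomial of degree at most $d$ that approximates $\EQ_{1^n}$ to error $n^{-\Omega(t)}$ with Fourier weight $2^{O(nt\log^2 n/d)}$.

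I would then set $k := \ceil{c\sqrt{t \log n}}$ for a suitably small constant $c$, and $d_0 := \floor{d/k}$. The hypothesis $d > \sqrt{nt \log n}$ guarantees $d_0 = \Omega(\sqrt{n})$, which is precisely the regime where \cite[Claim 2]{STT} applies: there exists a polynomial $p_0 \colon \{0,1\}^n \to \R$ of degree $d_0$, Fourier weight $W_0 \leq n^{O(n/d_0)}$, satisfying $|p_0(x) - \EQ_{1^n}(x)| \leq 1/3$ for every $x$. In parallel, a standard Chebyshev construction produces a univariate amplifier $A(y) = \sum_{i=0}^k a_i y^i$ of degree $k$ with coefficient $\ell_1$-norm $\|A\|_1 \leq 2^{O(k)}$ satisfying $|A(y)| \leq n^{-t}$ on $[-1/3, 1/3]$ and $|A(y) - 1| \leq n^{-t}$ on $[2/3, 4/3]$; concretely, one takes an appropriate affine rescaling of the Chebyshev polynomial $T_k$ that sends the separation gap onto $[-1, 1]$ and normalizes.

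The candidate is $p := A \circ p_0$, of total degree at most $k \cdot d_0 \leq d$. Since $p_0$ always takes values in $[-1/3, 1/3] \cup [2/3, 4/3]$, the guarantees on $A$ yield $|p(x) - \EQ_{1^n}(x)| \leq n^{-t}$ for all $x$. For the weight, I would use submultiplicativity of the Fourier $\ell_1$-norm under pointwise products, a direct consequence of $\chi_S\chi_T = \chi_{S \triangle T}$: iterating gives $\|\widehat{p_0^{\,i}}\|_1 \leq W_0^{\,i}$, so
\[
\|\hat p\|_1 \;\leq\; \sum_{i=0}^{k} |a_i| \cdot W_0^{\,i} \;\leq\; \|A\|_1 \cdot W_0^{\,k} \;\leq\; 2^{O(k)} \cdot n^{O(nk/d_0)} \;=\; 2^{O(nk^2 \log n/d)} \;=\; 2^{O(nt \log^2 n / d)},
\]
which matches the claimed bound.

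The main obstacle is aligning the parameter choices so that every constraint tightens at once: $d_0$ must remain $\Omega(\sqrt{n})$ for the STT bound to apply, $k$ must be large enough for the amplifier to reach error $n^{-\Omega(t)}$, the composition degree $k d_0$ must not exceed $d$, and the product $\|A\|_1 \cdot W_0^k$ must settle into the target bound after collapsing $k^2 \log n$ into $t \log^2 n$. A secondary technical point is verifying that the rescaled Chebyshev amplifier retains coefficient norm $2^{O(k)}$ even though its evaluation interval extends slightly outside $[-1, 1]$; this is standard but requires care in tracking the affine shift.
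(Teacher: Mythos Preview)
Your amplifier claim is the gap. A univariate polynomial $A$ of degree $k$ that separates two intervals of constant length at constant distance (here $[-1/3,1/3]$ and $[2/3,4/3]$) cannot achieve pointwise error smaller than $\exp(-\Theta(k))$; this is the classical Chebyshev bound, and the rescaling you describe gives exactly this rate. To reach error $n^{-t}$ you therefore need $k = \Theta(t\log n)$, not $k = \Theta(\sqrt{t\log n})$. With the corrected $k$ your arithmetic breaks in two places: first, $d_0 = \lfloor d/k\rfloor = \Omega(\sqrt{n})$ now requires $d = \Omega(\sqrt{n}\, t\log n)$, which is stronger than the hypothesis $d > \sqrt{nt\log n}$; second, the weight becomes $W_0^k = n^{O(nk^2/d)} = 2^{O(nt^2\log^3 n/d)}$, off from the target by a factor of $t\log n$ in the exponent. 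The step ``collapsing $k^2\log n$ into $t\log^2 n$'' is precisely where the wrong value of $k$ is silently used.

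The paper's proof sidesteps amplification entirely. It writes $\AND_n = \AND_\ell \circ \AND_{n/\ell}$, computes the inner $\AND_{n/\ell}$ exactly (degree $n/\ell$, constant Fourier weight), and approximates the outer $\AND_\ell$ \emph{directly} to error $n^{-\Omega(t)}$ using the known bound $\adeg_\eps(\AND_\ell) = O(\sqrt{\ell\log(1/\eps)})$. The crucial point is that this dependence on $\log(1/\eps)$ sits \emph{inside} the square root, so the degree cost of small error is $\sqrt{t\log n}$ rather than the $t\log n$ that black-box amplification of a $1/3$-approximation would incur. Setting $\ell = n^2 t\log n/d^2$ then balances degree and weight to give the stated bound. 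Your approach can be repaired, but only by replacing the ``$1/3$-error plus amplifier'' pipeline with a direct low-error approximation of $\AND$ that exploits this square-root dependence.
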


\begin{proof}
Note that for any $y \in \pmone^n$, the function $\EQ_y$ is just the $\AND$ function on $n$ input bits (with 0-1 valued output), with possibly negated input variables.  Thus it suffices to give an approximating polynomial for the $\AND$ function on $n$ bits.
We now express $\AND_n$ as $\AND_\ell \circ \AND_{n/\ell}$, where $\ell$ is a parameter we will set later.  We compute the inner $\AND_{n/\ell}$ exactly and approximate the outer $\AND_\ell$ to error $n^{-\Omega(t)}$.  This can be done with a polynomial $p$ of degree $O\left(\sqrt{\ell \log(n^t)}\right)$ \cite{KahnLS96, BCdWZ99}.
Combining the fact that  $p$ is bounded by $1+n^{-\Omega(t)} \leq 2$ at all Boolean inputs with Parseval's identity and the Cauchy-Schwarz inequality, it can be seen that the weight of $p$
is at most $\ell^{O\left(\sqrt{\ell \log(n^t)}\right)}$.\footnote{Building on \cite{BCdWZ99}, It is possible to derive explicit $\eps$-approximating polynomials for $\AND$
where the degree is $O\left(\sqrt{\ell \log(1/\eps)}\right)$  and the weight  is $2^{O\left(\sqrt{\ell \log(1/\eps}\right)}$ rather than 
$\ell^{O\left(\sqrt{\ell \log(1/\eps)}\right)}$. Using this tighter weight bound would  improve our final result by a factor of $\log n$ in the exponent. We omit this tighter result for brevity.} It is well known that the exact multilinear polynomial representation of $\AND_{n/\ell}$ has constant weight.  Hence, by
composing $p$ with the multilinear polynomial that exactly computes  $\AND_{n/\ell}$, we obtain an approximation $q$ for $\AND_n$ of degree $O\left(n\sqrt{\frac{t \log n}{\ell}}\right)$, error $n^{-\Omega(t)}$, and weight $2^{O\left(\sqrt{\ell t\log^3 n}\right)}$.  We now fix the value of $\ell$ to $\ell:=\frac{n^2 t \log n}{d^2} < n$, thereby ensuring that the degree of $q$ is at most $d$.  With this setting of $\ell$, the weight of $q$ is at most $2^{O(nt\log^2(n)/d)}$, proving the lemma.
\end{proof}

\begin{proof}[Proof of Theorem~\ref{thm: symub}]
Let $f : \{0,1\}^n \to \{0, 1\}$ be any symmetric function, corresponding to the univariate predicate $D_f : \bra{0} \cup [n] \to \{0,1\}^n$.
For the purpose of this proof, let us denote by $k_f$ the smallest $i$ for which $f$ is constant on inputs of Hamming weight in the interval $[i+1, n-i-1]$.
Without loss of generality,
$f(x)= 0$ for strings of $x$ Hamming weight between $k_f + 1$ and $n - k_f - 1$.  The case where $f = 1$ on input strings of Hamming weight between $k_f + 1$ and $n - k_f - 1$ can be proved using a similar argument.
Define $\supp(f) := \bra{x \in \{0,1\}^n : f(x) = 1}$. Note that $|\supp(f)| \leq 2 \cdot n^{k_f}$.

Observe that $f(x) = \sum_{y \in \supp(f)}\EQ_y(x)$.
Lemma~\ref{lem: andub} implies, for each $y \in \supp(f)$, the existence of polynomials $p_y$ of degree $K$ and weight $2^{O(nk_f \log^2(n)/K)}$, which approximate $\EQ_y$ to error $\frac{1}{6} \cdot n^{-k_f}$.
Define a polynomial $p \colon \{0, 1\}^n \to \R$ by $p(x) = \sum_{y \in \supp(f)}p_y(x)$.  Clearly $p$ has degree $K$, weight at most $n^{O(k_f)} \cdot 2^{O(nk_f\log^2(n)/K)} = 2^{\tilde{O}(nk_f/K)}$, and error at most $|\supp(f)| \cdot n^{-k_f}/6 \leq 1/3$, where the upper bounds on the weight and error follow from the triangle inequality.

The theorem now follows standard error reduction techniques and Paturi's theorem~\cite{Paturi92}, which states that for symmetric functions, $\adeg(f) = \Theta\left(\sqrt{n \cdot k_f}\right)$.

\end{proof}

\begin{remark}
The upper bound obtained in Theorem~\ref{thm: symub} is more general than as stated,
and the only property of symmetric functions it exploits is that symmetric functions of low approximate 
degree are highly biased. More specifically, 
the proof of Theorem \ref{thm: symub} shows that any function $f \colon \{0, 1\}^n \to \{0, 1\}$ with $\min\{|f^{-1}(0)|, |f^{-1}(1)|\} \leq n^t$ satisfies 
$W_{\eps}(f, K) \leq 2^{\tilde{O}(nt/K)}$ for any $K \geq \sqrt{n t \log n}$. 
\end{remark}

\section{Proof of Theorem~\ref{thm:mainlowerprob}}
\label{s:mainlowerprob}

\noindent \emph{Proof outline.} 
As we explain in more detail in the proof itself, it is sufficient to establish the theorem for fixed $k$ and $K$ and infinitely many $n$ because the statement is downward reducible in $n$.

Using the Chebyshev approximation formulas from Section~\ref{s:apxfromperfect} we derive explicit lower bounds on the large Chebyshev coefficients on the polynomial $p_0$ representing the distinguishing advantage of the AND function on $K$ inputs.  Owing to orthogonality and boundedness of the Chebyshev polynomials, this is a lower bound on the approximate degree of $\AND_K$.  By strong duality as given in the following Claim (see \cite{BIVW}) we obtain Theorem~\ref{thm:mainlowerprob}.

\begin{claim}
\label{claim:duality}
If $\adeg_{\eps/2}(F_n) \geq k$ then there exists a pair of perfectly $k$-wise indistinguishable distributions $\mu$, $\nu$ over $\{0,1\}^n$ such that $\E_{X \sim \mu}[F_n(X)] - \E_{Y \sim \nu}[F_n(Y)] \geq \varepsilon$.
\end{claim}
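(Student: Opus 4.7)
The statement is a straightforward application of finite-dimensional LP duality to the polynomial approximation problem, so my plan is to cast $\adeg_{\eps/2}(F_n)\ge k$ as a lower bound on the optimum of a primal LP, take its dual, and recover the required distributions from a dual-optimal vector. Concretely, I would write the primal as $\min t$ subject to $-t\le F_n(x)-\sum_S c_S\chi_S(x)\le t$ for every $x\in\{0,1\}^n$, where the $c_S$ range over coefficients of monomials $\chi_S$ of the appropriate low degree. The hypothesis says precisely that the optimum of this LP is at least $\eps/2$. Taking the LP dual with nonnegative multipliers $\alpha_x,\beta_x$ for the two halves of each per-$x$ inequality, the $t$-stationarity condition yields $\sum_x(\alpha_x+\beta_x)=1$, and the $c_S$-stationarity condition yields that the signed function $\phi(x):=\alpha_x-\beta_x$ is orthogonal to every $\chi_S$, hence to every polynomial of the stipulated degree. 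Strong duality then delivers a feasible $\phi$ with $\|\phi\|_1\le 1$ and $\langle \phi,F_n\rangle\ge \eps/2$, and I would rescale so that $\|\phi\|_1=1$.

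The second step is to convert $\phi$ into distributions. I would split $\phi=\phi^+-\phi^-$ using the positive and negative parts, which have disjoint support and together sum in absolute value to $1$. Orthogonality of $\phi$ to the constant monomial $\chi_\emptyset\equiv 1$ forces $\sum_x\phi^+(x)=\sum_x\phi^-(x)$, so this common mass is $\tfrac12$. Setting $\mu:=2\phi^+$ and $\nu:=2\phi^-$ then gives honest probability distributions on $\{0,1\}^n$ satisfying $\phi=(\mu-\nu)/2$. The orthogonality of $\phi$ to every low-degree parity $\chi_S$ is equivalent, by definition, to $\mu$ and $\nu$ agreeing on every corresponding marginal, i.e., to perfect $k$-wise indistinguishability. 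Finally, $\E_{X\sim\mu}[F_n(X)]-\E_{Y\sim\nu}[F_n(Y)]=2\langle \phi,F_n\rangle\ge \eps$, as required.

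There is no real technical obstacle here beyond careful bookkeeping; the result is textbook strong duality for a finite LP combined with the standard identification between low-degree-orthogonal signed measures and pairs of low-moment-matching distributions. The one thing to track carefully is the factor of $2$ relating the dual witness $\phi=(\mu-\nu)/2$ to the raw distinguishing advantage of $F_n$, which is exactly why the hypothesis is stated with $\eps/2$ in order to produce an advantage of $\eps$ in the conclusion.
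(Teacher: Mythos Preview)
Your argument is correct and is precisely the standard LP duality derivation that the paper alludes to in its introduction and in Lemma~\ref{lem:dualpoly}; the paper does not give its own proof of this claim but simply cites \cite{BIVW}. Your bookkeeping of the factor of~$2$---the hypothesis carries $\eps/2$ so that the distinguishing advantage $\E_\mu[F_n]-\E_\nu[F_n]=2\langle\phi,F_n\rangle$ comes out to at least~$\eps$---is exactly the point.
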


Recall that the Chebyshev polynomials are orthogonal under the measure $d\sigma(t) = (1 - t^2)^{-1/2} dt$ supported on $[-1, 1]$.  We will need the following identity for their average square magnitude under this measure:
\begin{equation}
\label{eq:cheborth}
\E_{t \sim \sigma}[T_d(t)^2] = 1/2 \qquad\text{when $d > 0$.}
\end{equation}

\begin{proof}[Proof of Theorem~\ref{thm:mainlowerprob}]
By symmetry of the distinguishers, $\mu$ and $\nu$ can be assumed symmetric.  
Let $F_n$ denote the function on $\{0, 1\}^n$ that outputs $\AND_K\left(x|_{\{1, \dots, K\}}\right)$, i.e., $F_n$
outputs the $\AND$ of the first $K< n$ bits of the input.
We prove the theorem for $G_n(x_1, \dots, x_n)=\textsf{NOR}(x|_{\{1, \dots, K\}})$. 
By the symmetry of $0$ and $1$ inputs the theorem also holds for $F_n$.  

First, we claim that the statement of Theorem~\ref{thm:mainlowerprob} is stronger as $n$ becomes larger, so it is sufficient to prove it in the limiting case when $n$ approaches infinity and $k, K$ are fixed.  
Suppose that $\mu$ and $\nu$ are distributions over $n$ bit strings that are 
$k$-wise indistinguishable yet  are $\eps$-reconstructable by $G_n$.  We must show that
there are distributions $\mu'$ and $\nu'$ over $\{0, 1\}^{n-1}$ are 
$k$-wise indistinguishable yet  are $\eps$-reconstructable by $G_{n-1}$.
But this holds for $\mu'$ (respectively $\nu'$) that generate a random sample from $\mu$ (respectively, $\nu$)
and then throw away the last bit. 

If the statement was false then by Claim~\ref{claim:duality} there would exist degree-$k$ polynomials $\tilde{G}_n$ that approximate $G_n$ pointwise on $\B^n$ to within error 
$\eps = \sqrt{2^{-4K + 1} \sum_{d > K} \binom{2K}{K + d}^2}$ for almost all $n$.  Applying the construction from the proof of Fact~\ref{fact:pw} to $\tilde{G}_n$, there exist univariate degree-$k$ polynomials $\tilde{p}_0^n$ approximating $p_0^n$ on the set of points $W_n = \{-1 + 2h/n\colon 0 \leq h \leq n\}$ to within error $\eps$.
We emphasize the dependence on $n$ as it will play a role in the proof.

By Formula~\eqref{eq:zeros} the polynomial $p_0^n$ has the form
\[ p^n_0(t) = C_0^n \prod_{z \in Z_0^n} (t - z), \]
where $Z_0^n = \{-1 + 2h/n\colon 0 \leq h < K\}$ (the set $Z_+$ is empty).  The value $p_n^0(1)$ is the probability that $G_n$ accepts the all-zero string, so it must equal one. The constant $C_0^n$ must therefore equal $\prod_{z \in Z_0^n} (1 - z)^{-1}$.  As $n$ tends to infinity, the set $Z_0$ converges to a single zero at $-1$ of multiplicity $K$, so the sequence $p^n_0$ converges uniformly to the polynomial
\[ p^\infty_0(t) = 2^{-K} (t + 1)^K.  \]
By the triangle inequality, for every $\delta > 0$ and all sufficiently large $n$, $\tilde{p}_0^n$ is within $\eps + \delta$ of $p_0^\infty$ on the set $W_n$.  A degree-$k$ polynomial is determined by its values on $W_{k+1}$ and the set of degree-$k$ polynomials that are within $\eps + \delta$ of $p_0^\infty$ on $W_{k+1}$ is compact.  Therefore the sequence of approximating polynomials $\tilde{p}_0^n$ must contain a subsequence (for values of $n$ that are multiples of $k + 1$) that converges (uniformly) to a limiting degree-$k$ polynomial $\tilde{p}_0^\infty$.  Since $\tilde{p}_0^n$ is within $\eps + \delta$ of $p_0^n$ on $W_n$ for infinitely many $n$, $\tilde{p}_0^\infty$ must be within $\eps + 2\delta$ of $p_0^\infty$ on $W_n$ for infinitely many $n$.  The union of these sets $W_n$ is dense in $[-1, 1]$, and by continuity $p_0^\infty$ can be $\eps + \delta$-approximated by the degree-$k$ polynomial $\tilde{p}_0^\infty$ everywhere on $[-1, 1]$.  As $\delta$ was arbitrary it follows that the $\eps$-approximate degree of $p_0^\infty$ can be at most $k$.

All that remains to prove that this is not true, i.e., to show a lower bound of $k$ on the $\eps$-approximate degree of $p_0^\infty$. 
This lower bound is known (see, e.g., \cite{elkiesmathoverflow}); we provide the details now for completeness.
 Let $q$ be any degree-$k$ polynomial. By Claim~\ref{claim:regcoeffs} the $d$-th Chebyshev coefficient of $p_0^\infty$ equals the $(K + d)$-th regular coefficient of $g^\infty(s) = 2^{-2K} (s + 1)^{2K}$, which has value $2^{-2K} \binom{2K}{K + d}$.  Since $q$ has degree at most $k$, the $d$-th Chebyshev coefficient $c_d$ of $p_0^\infty - q$ must also equal $2^{-2K} \binom{2K}{K + d}$ whenever $\abs{d} > k$.  By symmetry of the Chebyshev coefficients, orthogonality of the Chebyshev polynomials, and Equation~\eqref{eq:cheborth},
 \[
\E_{t \sim \sigma}[(p_0^\infty(t) - q(t))^2] 
  = c_0^2 + \sum\nolimits_{d > 0} (2c_d)^2 \E_{t \sim \sigma}[T_d(t)^2]
  \geq \sum\nolimits_{d > k} 2 \cdot \biggl(2^{-2K} \binom{2K}{K + d}\biggr)^2
  = \eps^2.
\]
It follows that the approximation error $\abs{p^\infty_0(t) - q(t)}$ must exceed $\eps$ for some $t \in [-1, 1]$, contradicting the initial assumption.
\end{proof}

\section{Robustness of Symmetric Secret Sharing Against Consolidation}
\label{sec:robustness}

Consider a secret sharing scheme with $tn$ parties, divided in $n$ blocks of size $t$, that is perfectly secure against size-$k$ coalitions.  If all parties in each block come together and consolidate their information even into a single bit, the number of {\em blocks} against which the scheme remains secure drops to $k/t$.  In general this is the best possible, with linear schemes providing tight examples.

The following corollary shows that if the distribution over shares is symmetric then much better security against this type of attack can be obtained.

\begin{corollary}
\label{cor:appendixd}
Let $f_1, \dots, f_n \colon \B^t \to \B$.  Assume $X, Y$ are $k$-wise indistinguishable symmetrically distributed random variables over $tn$-bit strings.  Write $X = X_1\dots\/X_n$, $Y = Y_1\dots\/Y_n$, where all blocks $X_i$, $Y_i$ have size $t$.  For every $K$, the $n$-bit random variables $X' = f_1(X_1)\dots\/f_n(X_n)$ and $Y' = f_1(Y_1)\dots\/f_n(Y_n)$ are $O((tK)^{3/2} n^K e^{-k^2/1156tK})$-close to being perfectly $K$-wise indistinguishable, assuming $K \leq n/64$.
\end{corollary}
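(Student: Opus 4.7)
The plan is a three-step reduction. First, I apply Theorem~\ref{thm:mainupper} at the bit level to $X, Y$ on $\{0, 1\}^{tn}$ with threshold $tK$: since $X, Y$ are symmetric and perfectly $k$-wise indistinguishable, and the hypothesis $K \leq n/64$ gives $tK \leq tn/64$, the theorem yields that $X, Y$ are statistically $(tK, \delta)$-wise indistinguishable with
\[
\delta = O\bigl((tK)^{3/2}\bigr) \cdot e^{-k^2/1156\,tK}.
\]
Second, I push this down to the block level by a data-processing argument. For any $S \subseteq [n]$ of size $K$, let $T_S \subseteq [tn]$ denote the union of the $t$-blocks indexed by $S$, so that $|T_S| = tK$. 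Since $X'|_S = (f_i(X_i))_{i \in S}$ is a deterministic function of $X|_{T_S}$, and similarly for $Y'|_S$, the statistical distance between the projections $X'|_S$ and $Y'|_S$ is at most that between $X|_{T_S}$ and $Y|_{T_S}$, hence at most $\delta$. Thus $X'$ and $Y'$ are statistically $(K, \delta)$-wise indistinguishable as distributions on $\{0, 1\}^n$.

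The third step upgrades statistical $(K, \delta)$-wise indistinguishability to closeness, in total variation distance, to a perfectly $K$-wise indistinguishable pair $(\mu', \nu')$, at the cost of a factor of $O(n^K)$. Let $g := \mu_{X'} - \mu_{Y'}$ and expand it in the parity basis on $\{0, 1\}^n$; the $(K, \delta)$-hypothesis forces $|\widehat{g}(S)| \leq 2\delta$ for every $|S| \leq K$, so the low-degree projection $L(x) := 2^{-n}\sum_{|S| \leq K} \widehat{g}(S)\,\chi_S(x)$ satisfies $\|L\|_\infty = O(n^K \delta / 2^n)$ and hence $\|L\|_1 = O(n^K \delta)$. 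After first mixing $\mu_{X'}$ and $\mu_{Y'}$ with an $\eta = \Theta(n^K \delta)$ fraction of the uniform distribution (enough to absorb potential sign violations), I set $\mu' := \widetilde{\mu}_{X'} - \widetilde{L}/2$ and $\nu' := \widetilde{\mu}_{Y'} + \widetilde{L}/2$, where tildes denote the smoothed versions. These are honest probability distributions whose difference has vanishing Fourier mass through degree $K$, equivalently $\mu'$ and $\nu'$ are perfectly $K$-wise indistinguishable, and each lies within total variation distance $O(n^K \delta)$ of the corresponding original. Combining the three steps yields the claimed bound $O\bigl((tK)^{3/2}\,n^K\, e^{-k^2/1156\,tK}\bigr)$.

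The main obstacle is the third step. Theorem~\ref{thm:mainupper} cannot be applied directly to $X'$ and $Y'$, since applying arbitrary $f_i$ to each block typically destroys the full symmetry of the joint distribution on which the theorem crucially depends; this forces us to work at the bit level and pay an $O(n^K)$ factor when converting back from statistical to perfect indistinguishability at the block level. The uniform-mixing step is the standard device for ensuring that a Fourier-level correction produces bona fide probability measures rather than merely signed ones.
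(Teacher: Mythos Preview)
Your proof is correct and follows essentially the same three-step reduction as the paper: apply Theorem~\ref{thm:mainupper} at the bit level with threshold $tK$, pass to the block level by data processing, and then upgrade statistical to perfect $K$-wise indistinguishability at an $O(n^K)$ multiplicative cost. The only difference is that the paper invokes Theorem~D.1 of~\cite{BIVW} as a black box for the third step, whereas you supply an explicit Fourier-correction-plus-uniform-smoothing argument that reproves that result.
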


The resulting scheme can be viewed as perfectly secure secret sharing with a potentially faulty dealer:   With probability $1 - p$, the dealer samples perfectly $K$-wise indistinguishable shares 
$X'$ or $Y'$, and with probability $p = O((tK)^{3/2} n^K e^{-k^2/1156tK})$ she leaks arbitrary information about the secret.

For example, if $X, Y$ are visual shares sampled from the dual polynomial~\eqref{eq:dualq} then they are $k = \Omega(\sqrt{tn})$-wise indistinguishable, assuming constant reconstruction error.  Corollary~\ref{cor:appendixd} then says that the induced block-shares $X', Y'$ are $\Omega(\sqrt{n / \log n})$-wise indistinguishable except with probability $\exp -\Omega(\sqrt{n \log n})$.  

If, in addition, $f_1 = \dots = f_n = \AND_t$ then $X', Y'$ are themselves shares of a visual secret sharing scheme that is secure against $\Omega(\sqrt{n / \log n})$-size coalitions.  Therefore symmetric visual secret sharing schemes are downward self-reducible at a small loss in security and dealer error in the following sense:  A scheme for $n$ parties can be derived from one for $tn$ parties by dividing the parties into blocks and $\AND$ing the shares in each block.

\begin{proof}[Proof of Corollary~\ref{cor:appendixd}]
By Theorem~\ref{thm:mainupper}, $X$ and $Y$ are $(tK, O((tK)^{3/2})\cdot e^{-k^2/1156tK})$-wise indistinguishable.  Since any size-$K$ distinguisher against $(X', Y')$ induces a size-$tK$ distinguisher against $(X, Y)$, the former are $(K, \delta = O((tK)^{3/2})\cdot e^{-k^2/1156tK})$-wise indistinguishable.  By Theorem D.1 of~\cite{BIVW}, any pair of $(K, \delta)$-wise indistinguishable distributions over $n$ bits is $2 \delta n^K$-close to a pair of perfectly indistinguishable ones.
\end{proof}

\section{Acknowledgements}
We thank Mark Bun for telling us about the work of Sachdeva and Vishnoi~\cite{sachdeva2013approximation}, and Mert Sa\u{g}lam, Pritish Kamath, Robin Kothari, and Prashant Nalini Vasudevan for helpful comments on a previous version of the manuscript. We are also grateful to Xuangui Huang and Emanuele Viola for sharing the manuscript \cite{hv19}.  Andrej Bogdanov's work was supported by RGC GRF CUHK14207618. Justin Thaler and Nikhil Mande were supported by NSF Grant CCF-1845125.

\bibliography{omnibib}

\appendix
\section{Properties of Symmetric Functions and Distributions}
\label{sec:symmetryapp}
Here, we prove some basic facts that we need about symmetric functions and distributions (see first paragraph of Section~\ref{s:apxfromperfect}). 
Let $Q:\{0,1\}^n\rightarrow\mathbb{R}$ be a function. We say that $Q$ is symmetric if the output of $Q$ depends only on the Hamming weight of its input. If we let $X\colon \{0,1\}^n\rightarrow [0,1]$ denote a probability distribution, we say that $X$ is symmetric if the corresponding function mapping inputs to probabilities is a symmetric function. We need two further facts about such distributions.
\begin{fact}
\label{fact:symmetricmarginals}
Suppose that $X$ is a symmetric distribution over $\{0,1\}^n$. For $S\subseteq \{0,...,n\}$, let $X|_{S}$ denote the projection of $X$ to the indices in $S$. Then, $X|_S$ is also symmetric.
\end{fact}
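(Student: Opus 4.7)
The plan is to unfold the definitions and exhibit the probability mass function of $X|_S$ as a function that depends only on the Hamming weight of its argument. By hypothesis, the symmetry of $X$ means there exists $f \colon \{0,1,\dots,n\} \to [0,1]$ such that $\Pr[X = x] = f(|x|)$ for every $x \in \{0,1\}^n$. Our goal is to show that for every $S \subseteq [n]$ and every $y \in \{0,1\}^{|S|}$, the quantity $\Pr[X|_S = y]$ depends on $y$ only through $|y|$.

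First, I would write the marginal probability by summing over all full-length strings consistent with $y$ on coordinates $S$:
\[
\Pr[X|_S = y] \;=\; \sum_{x \in \{0,1\}^n,\, x|_S = y} \Pr[X = x] \;=\; \sum_{x: x|_S = y} f(|x|).
\]
Next, I would reparametrize the sum by letting $z \in \{0,1\}^{n - |S|}$ denote the free bits on the complement $[n] \setminus S$; then $|x| = |y| + |z|$, and the sum becomes
\[
\Pr[X|_S = y] \;=\; \sum_{z \in \{0,1\}^{n-|S|}} f(|y| + |z|) \;=\; \sum_{j=0}^{n-|S|} \binom{n-|S|}{j}\, f(|y| + j),
\]
where the last step groups strings $z$ by their Hamming weight. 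The right-hand side depends on $y$ only through $|y|$, so $X|_S$ is a symmetric distribution, as required.

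There is no real obstacle here; the only thing to be careful about is the reparametrization step, which uses that the coordinates inside and outside $S$ contribute additively to the Hamming weight of $x$. Since this is an identity at the level of indicator functions, no quantitative estimates are needed.
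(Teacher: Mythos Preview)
Your proof is correct and follows essentially the same approach as the paper: marginalize over the bits outside $S$, group them by Hamming weight, and observe that the resulting expression depends only on $|y|$. The only cosmetic difference is that the paper writes the per-string probability as $\Pr[|X|=h+w]/\binom{n}{h+w}$ rather than introducing a named function $f$, but this is the same quantity.
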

\begin{proof}
Let $z_w$ be an arbitrary element of $\{0,1\}^{|S|}$ of Hamming weight $w$. Using symmetry of $X$, we can observe that
\[\Pr_X[X|_S=z_w]=\sum_{h=0}^n\sum_{\substack{y\in\{0,1\}^{n-|S|}\\ |y|=h}}\Pr [X|_S=z_w\text{ and }X|_{[n]\setminus S}=y] = \sum_{h=0}^{n-|S|}\binom{n-|S|}{h}\frac{\Pr[|X|=h+w]}{\binom{n}{w+h}}.\] The expression on the right depends only on $w$ and not on $z_w$, so the distribution $X|_S$ must be symmetric also.
\end{proof}
\begin{fact}
\label{fact:symmetrictest}
Suppose that $X$ and $Y$ are symmetric distributions over $\{0,1\}^n$. Then without loss of generality, the best statistical test $Q:\{0,1\}^n\rightarrow [0,1]$ for distinguishing between $X$ and $Y$ is a symmetric function. In particular, we have:
\[\max_{\text{symmetric }Q}\{\E_X[Q(X)]-\E_Y[Q(Y)]\}=\max_{Q}\{\E_X[Q(X)]-\E_Y[Q(Y)]\}.\]
\end{fact}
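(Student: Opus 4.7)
\textbf{Proof proposal for Fact~\ref{fact:symmetrictest}.}
The plan is a standard symmetrization/averaging argument. Given an arbitrary test $Q \colon \{0,1\}^n \to [0,1]$, I would define its symmetrization
\[
Q^{\mathrm{sym}}(x) \;=\; \E_{\pi}[Q(\pi(x))],
\]
where $\pi$ is drawn uniformly from the symmetric group $S_n$ acting on the coordinates of $x$. By construction $Q^{\mathrm{sym}}$ is invariant under coordinate permutations, hence depends only on the Hamming weight of $x$, so it is symmetric. Since $Q^{\mathrm{sym}}$ is an average of values in $[0,1]$, it also takes values in $[0,1]$, so it is a valid statistical test.

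Next I would use the symmetry of $X$ and $Y$. Because $X$ is a symmetric distribution, for every fixed permutation $\pi$ the random variable $\pi(X)$ has the same distribution as $X$, and hence $\E_X[Q(\pi(X))] = \E_X[Q(X)]$. Averaging over $\pi$ and applying Fubini (which is legal because everything is bounded on a finite probability space) gives
\[
\E_X[Q^{\mathrm{sym}}(X)] \;=\; \E_\pi\bigl[\E_X[Q(\pi(X))]\bigr] \;=\; \E_X[Q(X)],
\]
and the identical computation yields $\E_Y[Q^{\mathrm{sym}}(Y)] = \E_Y[Q(Y)]$. Subtracting shows that $Q^{\mathrm{sym}}$ achieves exactly the same distinguishing advantage as $Q$.

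Taking a supremum over all $Q$ on the right-hand side of the claimed identity and over all symmetric $Q$ on the left, the inequality $\leq$ is trivial (every symmetric $Q$ is a $Q$), and the construction above gives the reverse inequality, establishing equality. There is no real obstacle here: the only thing to check is that the averaging preserves the $[0,1]$ range (immediate by convexity) and that symmetry of the distributions lets us replace $X$ by $\pi(X)$ inside the expectation (immediate from Fact~\ref{fact:symmetricmarginals} applied with $S = [n]$, or just from the definition of a symmetric distribution).
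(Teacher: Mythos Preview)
Your proposal is correct and follows essentially the same approach as the paper: both symmetrize an optimal (or arbitrary) test by averaging over all coordinate permutations and then verify that this preserves the distinguishing advantage against symmetric $X$ and $Y$. Your justification via $\pi(X)\overset{d}{=}X$ and Fubini is slightly slicker than the paper's explicit Hamming-weight grouping, but the content is the same.
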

\begin{proof}
Let $Q^*$ denote $\arg\max_{Q}\{\E_X[Q(X)]-\E_Y[Q(Y)]\}$. If $Q^*$ is symmetric then the proof is complete. If not, define $\tilde{Q}$ as the following symmetrized version of $Q^*$:
\[\tilde{Q}(z):=\E_{\sigma}[Q^*(\sigma (z))],\]
where the expectation is over a uniform permutation $\sigma$. It is clear that $\tilde{Q}$ is a symmetric function and we will write $\tilde{Q}_w$ to denote the value $\tilde{Q}$ takes on any input of Hamming weight $w$. We now show that its distinguishing advantage between $X$ and $Y$ is the same as $Q^*$. Clearly, it is enough to show that $\E_X[\tilde{Q}(X)]=\E_X[Q^*(X)]$ for arbitrary symmetric distribution $X$. This follows from a simple calculation:
\begin{align*}
\E_X[Q^*(X)]
  &= \sum_{w=0}^n\sum_{|x|=w}\Pr [X=x]Q^*(x)=\sum_{w=0}^n\frac{\Pr [|X|=w]}{\binom{n}{w}}\sum_{|x|=w}Q^*(x) \\
  &=\sum_{w=0}^n\Pr [|X|=w]\cdot\tilde{Q}_w=\E_X[\tilde{Q}(X)].
  &
\end{align*}
\end{proof}

\end{document}